\documentclass[10pt]{article}

\usepackage{amsmath}
\usepackage{algorithmic}
\usepackage{algorithm}

\usepackage[dvips]{epsfig}
\usepackage{latexsym}
\usepackage{amssymb}
\usepackage{amsmath}
\usepackage{amsthm}
\usepackage{graphics}
\usepackage{subfigure}
\usepackage{hyperref}
\usepackage[toc,page]{appendix}
\usepackage{listings}



\oddsidemargin -0.0in \evensidemargin -0.00in

\setlength{\textwidth}{6.5in}   
\setlength{\topmargin}{-0.55in}  
\setlength{\textheight}{9.0in}  


\newtheorem{thm}{Theorem}

\newtheorem{lemma}{Lemma}

\begin{document}

\title{\bf Optimal Monitoring and Mitigation of Systemic Risk in Financial
Networks\footnote{We would like to thank Ben Craig, David Marshall,
Celso Brunetti, and seminar participants at the Federal Reserve (Washington,
DC), Federal Reserve Bank of Chicago, Federal Reserve Bank of Cleveland,
and Northwestern University for helpful comments and suggestions.}}
\author{Zhang Li, Xiaojun Lin, Borja Peleato-Inarrea, and
Ilya Pollak\footnote{The authors are with the School of Electrical and Computer Engineering,
Purdue University. Corresponding author's email: ipollak@ecn.purdue.edu.}}
\date{\today} 
\maketitle

\begin{abstract}
This paper studies the problem of optimally allocating a cash injection into
a financial system in distress. Given a one-period borrower-lender network
in which all debts are due at the same time and have the same seniority,
we address the problem of allocating a fixed amount of cash among the nodes
to minimize the weighted sum of unpaid liabilities. Assuming all the loan amounts
and asset values are fixed and that there are no bankruptcy costs,
we show that this problem is equivalent to a linear program.
We develop a duality-based distributed algorithm to solve it which is useful
for applications where it is desirable to avoid centralized data gathering and
computation. Since some applications require forecasting and planning for
a wide variety of different contingencies, we also consider the problem of minimizing
the expectation of the weighted sum of unpaid liabilities under the assumption that
the net external asset holdings of all institutions are stochastic. We show that this
problem is a two-stage stochastic linear program. To solve it, we develop
two algorithms based on Monte Carlo sampling: Benders decomposition algorithm
and projected stochastic gradient descent. We show that if the defaulting nodes
never pay anything, the deterministic optimal cash injection allocation problem is an NP-hard mixed-integer
linear program. However, modern optimization software enables the computation
of very accurate solutions to this problem on a personal computer in a few seconds
for network sizes comparable with the size of the US
banking system. In addition, we address the problem
of allocating the cash injection amount so as to minimize the number of nodes
in default. For this problem, we develop two heuristic algorithms: a reweighted
$\ell_1$ minimization algorithm and a greedy algorithm.
We illustrate these two algorithms using three
synthetic network structures for which the optimal solution can be calculated
exactly. We also compare these two algorithms on three types random networks which
are more complex.
\end{abstract}

\pagestyle{plain} \setcounter{page}{1} \pagenumbering{arabic}
\thispagestyle{empty}

\section{Introduction}
The events of the last several years revealed an acute need for tools to systematically
model, analyze, monitor, and control large financial networks.  Motivated by this need, we propose
to address the problem of optimizing the amount and structure of liquidity assistance in a distressed
financial network, under a variety of modeling assumptions and implementation scenarios.

Two broad applications motivate our work: day-to-day monitoring of financial systems
and decision making during an imminent crisis.  Examples of the latter include the decision in
September 1998 by a group of financial institutions to rescue Long-Term Capital Management,
and the decisions by the Treasury and the Fed in September 2008 to rescue AIG and to let
Lehman Brothers fail. The deliberations leading to these and other similar actions have been
extensively covered in the press.  These reports suggest that the decision making processes
could have benefited from quantitative methods for analyzing potential policies and their likely
outcomes.  In addition, such methods could help avoid systemic crises in the first place,
by informing day-to-day actions of financial institutions, regulators, supervisory authorities,
and legislative bodies.

Given a financial network model, we are interested in addressing the following problem.
\begin{enumerate}
\item[]{\bf Problem I:} Allocate a fixed amount of cash assistance among the nodes in
a financial network in order to minimize the (possibly weighted) sum of unpaid liabilities in the system.
\end{enumerate}
An alternative, Lagrangian, formulation of the same problem, is to both select the amount of
injected cash and determine how to distribute it among the nodes in order to minimize the overall cost
equal to a linear combination of the weighted sum of unpaid liabilities and the amount of injected cash.

We consider a static model with a single maturity date, and with a known network structure. 
At first, we assume that we know both the amounts owed by every node in the network to every other node,
and the net asset amounts available to every node from sources external to the network. Even for
this relatively simple model, Problem~I is far from straightforward, because of a nonlinear relationship
between the cash injection amounts and the loan repayment amounts.  Building upon the results
from~\cite{EiNo01}, we construct algorithms for computing exact solutions for Problem~I and
its Lagrangian variant, by showing in Section~\ref{sec:LP} that both formulations are equivalent
to linear programs under the payment scheme assumed in~\cite{EiNo01}.

Because of this equivalence to linear programs, these problems can be solved for any network topology
by any standard LP solver. In some scenarios, however, this approach may be impractical or undesirable,
as it requires the solver to know the entire network structure, namely, the net external assets of every
institution, as well as the amount owed by each institution to each other institution.
In Section~\ref{sec:DA}, we adapt our framework to avoid centralized data gathering and computation.
We propose a distributed algorithm to solve our linear program. While the algorithm is slower than
standard centralized LP solvers, simulations suggest its practicality for the US banking system.

Some applications, such as stress testing, require forecasting and planning for a wide variety of
different contingencies. Such applications call for the use of stochastic models for the nodes'
external asset values. In this case, we aim to solve a stochastic version of Problem~I.
\begin{enumerate}
\item[]{\bf Problem I-stochastic:} Allocate a fixed amount of cash assistance among the nodes in
a financial network in order to minimize the expectation of the (possibly weighted) sum of unpaid
liabilities in the system.
\end{enumerate}

We prove in Section~\ref{sec:random-e} that, under the payment scheme assumed in~\cite{EiNo01}---whereby
each defaulting bank uses all its available funds to pay all its creditors in proportion to the amounts
it owes them---this problem is equivalent to a stochastic linear program. To solve this problem we develop
two algorithms based on Monte Carlo sampling: a Benders decomposition algorithm described in
Section~\ref{subsec:benders} and a projected stochastic gradient descent algorithm described in
Section~\ref{subsec:stochastic-gd}. Both algorithms are centralized (non-distributed) and assume that
we are able to efficiently obtain independent samples of the external asset vector.

We show in Section~\ref{sec:linear_bankruptcy_costs} that under the all-or-nothing payment scheme
where the defaulting nodes do not pay at all,
Problem I is an
NP-hard mixed-integer linear program. However, we show through simulations that use optimization package
CVX~\cite{cvx,GrBo08} that this problem can be accurately solved in a few seconds on a personal computer
for a network size comparable to the size of the US banking network. 

We also consider another problem where the objective is to minimize the number of defaulting nodes rather
than the weighted sum of unpaid liabilities:
\begin{enumerate}
\item[]{\bf Problem II:} Allocate a fixed amount of cash assistance among the nodes in
a financial network in order to minimize the number of nodes in default.
\end{enumerate}

For Problem~II, we develop two heuristic algorithms: a reweighted $\ell_1$ minimization approach
inspired by~\cite{CaWaBo08} and a greedy algorithm.  We illustrate our algorithms using examples
with synthetic data for which the optimal solution can be calculated exactly. We show through
numerical simulations that the solutions calculated by the reweighted $\ell_1$ algorithm are close
to optimal, and the performance of the greedy algorithm highly depends on the network topology.  
We also compare these two algorithms using three types of random networks for which the optimal
solution is not available.  In one of these three examples the performance of these two algorithms
is statistically indistinguishable; in the second example the greedy algorithm outperforms
the reweighted $\ell_1$ algorithm; and in the third example the reweighted algorithm outperforms
the greedy algorithm.

While Problem~II is unlikely to be of direct practical importance (indeed, it is difficult to imagine
a situation where a regulator would consider the failures of a small local bank and Citi to be equally
bad), it serves as a stepping stone to a more practical and more difficult scenario where the optimization
objective is a linear combination of the weighted unpaid liabilities (as in Problem~I) and the sum
of weights over the defaulted nodes (an extension of Problem~II).
\begin{enumerate}
\item[]{\bf Problem III:} Given a fixed amount of cash to be injected into the system, we consider
an objective function which is a linear combination of the sum of weights over the defaulted nodes
and the weighted sum of unpaid liabilities. 
\end{enumerate}
We show that this problem is equivalent to a mixed-integer linear program.

\subsection{Related Literature}
Contagion in financial networks has been frequently studied in the past, especially after
the financial crisis in 2007-2008. Notable examples of network topology analysis based on real
data are~\cite{BoElSu03,SoBeArGlBe07,CrPe14,VeLe14}. Real data informs the new approaches for assessing
systemic financial stability of banking systems developed
in~\cite{Fu03,ElLeSu05,ElLeSu06,LeLi06,Ha07,DeNg07,PrLeHe08,CaJa09,MaPeEmDe10,GaHaKa11,AnGaKaBrWi12,HaKo13,FoHeSaTa13,AnCrPe14}.

Often, systemic failures are caused by an epidemic of defaults whereby a group of nodes unable to
meet their obligations trigger the insolvency of their lenders, leading to the defaults of lenders'
lenders, etc, until this spread of defaults infects a large part of the system.  For this reason,
many studies have been devoted to discovering network structures conducive to default
contagion~\cite{AlGa00,CiFeSh05,LoBaSc09,GaKa10,AmCoMi10,BlEaKlKlTa11,BaGaGa12}. The relationships between
the probability
of a systemic failure and the average connectivity in the network are investigated
in~\cite{GaKa10,BaGaGa12,AlGa00}. Other features, such as 
as the distribution of degrees and the structure of the subgraphs of contagious links,
are examined in~\cite{AmCoMi10}.

While potentially useful in policymaking, most of these references do not provide specific policy recipes.
One strand of literature on quantitative models for optimizing policy decisions has focused on analyzing
the efficacy of bailouts and understanding the behavior of firms in response to bailouts.  To this end,
game-theoretic models are proposed in~\cite{Mu02} and~\cite{BeTaWe11} that have two agents: the government
and a single private sector entity. The focus of another set of research efforts has been on the setting
of capital and liquidity requirements~\cite{CiFeSh05,HaMa11,GaLeSo12} in order to reduce systemic risk.

Our work contributes to the literature by taking a network-level view of optimal policies and proposing
optimal cash injection strategies for networks in distress.  Our present paper extends our earlier work
reported in~\cite{LiPo12,LiPo13,LiLiPo13}.  In addition to ours, several other papers have
recently considered cash injection policies for lending networks~\cite{De11a,De11b,RoVe11,RoVe13,CaCh13},
all based on the framework proposed in~\cite{EiNo01}.

A cash injection targeting policy is developed in~\cite{De11a,De11b} for an infinitesimally small amount
of injected cash. The basic idea of the policy is to inject the cash into the node with the largest
threat index, which is defined as the derivative of the unpaid liability with respect to the current
asset value. However, extending this idea to construct an optimal cash injection policy for
non-infinitesimal cash amounts produces an inefficient algorithm, as we show in Section~\ref{subsec:demange}.
We show that our own method proposed in Section~\ref{subsec:W} is both more efficient and more flexible,
as it can be extended to a more complex stochastic model described in Section~\ref{sec:random-e}.

In~\cite{RoVe11,RoVe13}, bankruptcy costs are incorporated into the model of~\cite{EiNo01}. The main
contribution of that work is showing that because of the bankruptcy costs, it is sometimes beneficial
for some solvent banks to form bailout consortia and rescue failing banks. However, it may happen that
the solvent banks do not have enough means to effect a bailout, and in this case external intervention
may still be needed.

A multi-period stochastic clearing framework based on~\cite{EiNo01} is proposed in~\cite{CaCh13},
where a lender of last resort monitors the network and may provide liquidity assistance loans to
failing nodes.  The paper proposes several strategies that the lender of last resort might follow
in making its decisions.  One of these strategies, the so-called max-liquidity policy, aims to
solve our Problem I during each period.  However,~\cite{CaCh13} does not describe an algorithm
for solving this problem.

Another related work is~\cite{GlYo14}. Based on the clearing payment framework in~\cite{EiNo01},
the authors of~\cite{GlYo14} study the probability of contagion and amplification of losses due
to network effects when the system suffers a random shock. 

\subsection{Outline of the Paper}
The paper is organized as follows. Section~\ref{sec:model} describes the model of financial networks,
the clearing payment mechanism, and the notation. Section~\ref{sec:LP} shows that 
if each defaulting node pays its creditors in proportion to the owed amounts, then
Problem~I and its
Lagrangian formulation are equivalent to linear programs. A reweighted $\ell_1$
minimization  algorithm to solve Problem~II is also developed in Section~\ref{sec:LP}.
Section~\ref{sec:linear_bankruptcy_costs} considers Problem~I
under the assumption that the defaulting nodes do not pay anything.
We prove that it is then an NP-hard problem and can be formulated
as a mixed-integer linear program that can be efficiently solved using modern optimization software
for network sizes comparable to the size of the US banking system. 
Section~\ref{sec:random-e} extends Problem~I to
the situation where the capital of each institution is random. Section~\ref{sec:DA} proposes
a duality-based distributed algorithm to solve Problem~I and its Lagrangian formulation.

\section{Model and Notation}\label{sec:model}

\begin{table}
\caption{Notation for several vector \vspace*{-0.05in} quantities.}
\begin{center}
\begin{tabular}{|l|p{2in}|}
\hline
\textsc{vector} & \textsc{$i$-th component} \\
\hline
{\bf 0} & 0\\
\hline
{\bf 1} & 1\\
\hline
${\bf e} \geq {\bf 0}$ & net external assets at node $i$ before cash injection \\
\hline
${\bf c} \geq {\bf 0}$ & external cash injection to node $i$ \\
\hline
$\bar{\bf p}$ & the amount node $i$ owes to all its creditors \\
\hline ${\bf p} \leq \bar{\bf p}$ & the total amount node $i$
      actually repays all its creditors on the due date of the loans \\
\hline
$\bar{\bf p} - {\bf p}$ & node $i$'s total unpaid liabilities \\
\hline
${\bf r}$ & remaining cash of node $i$ after clearing payment \\
\hline
${\bf w}$ & the weight of \$1 of unpaid liability at node $i$ \\
\hline
${\bf s}$ & the weight of node $i$'s default \\
\hline
${\bf d}$ & indicator variable of whether node $i$ defaults, i.e., $d_i=1$ if node $i$ defaults; $d_i=0$ otherwise \\
\hline
\end{tabular}
\vspace*{-0.2in}\\
\end{center}
\label{tb:notation}
\end{table}

Our network model is a directed graph with $N$ nodes where a directed edge from node $i$ to node $j$ with weight $L_{ij}>0$ signifies that $i$ owes $\$L_{ij}$ to $j$. This is a one-period model with no dynamics---i.e., we assume that all the loans are due on the same date and all the payments occur on that date. We use the following notation:
\begin{itemize}
\item any inequality whose both sides are vectors is component-wise;
\item $\mathbf{0}$, $\mathbf{1}$, ${\bf e}$, ${\bf c}$, $\bar{\bf p}$, ${\bf p}$, ${\bf r}$, ${\bf w}$, $\bf s$,
and $\bf d$ are all vectors in $\mathbb{R}^N$ defined in Table~\ref{tb:notation};
\item $W = \mathbf{w}^T(\bar{\bf p} - {\bf p})$ is the weighted sum of unpaid liabilities in
the system;
\item $N_d$ is the number of nodes in default, i.e., the number of nodes $i$ whose payments are below their liabilities, $p_i < \bar{p}_i$;
\item $\Pi_{ij}$ is what node $i$ owes to node $j$, as a fraction of the total amount owed by node $i$,
\[
\Pi_{ij} = \left\{\begin{array}{ll} \frac{L_{ij}}{\bar{p}_i} & \mbox{if } \bar{p}_i \neq 0, \\
0 & \mbox{otherwise;} \end{array}\right.
\]
\item $\Pi$ and $L$ are the matrices whose entries are $\Pi_{ij}$ and $L_{ij}$, respectively.
\end{itemize}

Given the above financial system, we consider the proportional payment mechanism and
the all-or-nothing payment mechanism.  The latter can be alternatively interpreted as
the proportional payment mechanism with 100\% bankruptcy costs.
As proposed in~\cite{EiNo01}, the proportional payment mechanism without bankruptcy costs
is defined as follows.\\
\emph{Proportional payment mechanism with no bankruptcy costs:}
\begin{itemize}
\item If $i$'s total funds are at least as large as its liabilities, then all $i$'s creditors get paid in full.
\item If $i$'s total funds are smaller than its liabilities, then $i$ pays all its funds to its creditors.
\item All $i$'s debts have the same seniority.  This means that, if $i$'s liabilities exceed its total funds
then each creditor gets paid in proportion to what it is owed. This guarantees that the amount actually
received by node $j$ from node $i$ is always $\Pi_{ij} p_i$.  Therefore, the total amount received by any
node $i$ from all its borrowers is $\displaystyle \sum_{j=1}^N \Pi_{ji} p_j$.
\end{itemize}
Under these assumptions, a node will pay all the available funds proportionally to its creditors,
up to the amount of its liabilities. The payment vector can lie anywhere in the rectangle
$[\mathbf{0},\bar{\bf p}]$. Under the all-or-nothing payment scenario,
the defaulting nodes do not pay at all.\\
\emph{All-or-nothing payment mechanism:}
\begin{itemize}
\item If $i$'s total funds are at least as large as its liabilities
(i.e., $\displaystyle \sum_{j=1}^N \Pi_{ji} p_j + e_i \geq \bar{p}_i$) then all $i$'s creditors get paid in full.
\item If $i$'s total funds are smaller than its liabilities, then $i$ pays nothing.
\end{itemize}

As defined in~\cite{EiNo01}, a {\em clearing payment vector} ${\bf p}$ is a vector of borrower-to-lender payments that is consistent with the conditions of the payment mechanism. Several algorithms for computing the clearing payment vector are discussed and compared in Appendix~\ref{app:3algo}.

In this paper, we are mostly concerned with Problems~I and~II under the proportional payment scenario
with no bankruptcy costs.
We also prove that the all-or-nothing payment scenario makes Problem~I NP-hard. In this
case, Problem~I can be formulated as a mixed-integer linear program that can be efficiently solved
on a personal computer 
using modern optimization software for network sizes comparable to the size of the US banking system.

\section{Centralized Algorithms for Problems I, II, III under the Proportional Payment Mechanism} \label{sec:LP}

\subsection{Minimizing the Weighted Sum of Unpaid Liabilities is an LP} \label{subsec:W}
Consider a network with a known structure of liabilities $L$ and a known vector ${\bf e}$ of net assets
before cash injection. Using the notation established in the preceding section,
Problem~I seeks a cash injection allocation vector ${\bf c}\geq {\bf 0}$ to minimize the following weighted
sum of unpaid liabilities,
\[
W = {\bf w}^T(\bar{\bf p} - {\bf p}),
\]
subject to the constraint that the total amount of cash injection
does not exceed some given number $C$:
\[
\mathbf{1}^T{\bf c} \leq C.
\]
In this section, we assume proportional payments with no bankruptcy costs.
We first prove that, for any cash injection vector $\bf c$, there exists a unique clearing
payment vector that maximizes the cost $W$.

\begin{lemma} \label{lem:uniqueness}
Given a financial system $(\Pi,\bar{\bf p},{\bf e})$, a cash injection vector $\bf c$ and a weight vector ${\bf w}>\mathbf{0}$, there exists a unique clearing payment vector $\bf p$ minimizing the weighted sum $W={\bf w}^T(\bar{\bf p}-{\bf p})$.
\end{lemma}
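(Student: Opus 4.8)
The plan is to reduce the minimization of $W$ to a statement about the greatest clearing payment vector. Since $\mathbf{w}^T\bar{\mathbf{p}}$ does not depend on $\mathbf{p}$, minimizing $W = \mathbf{w}^T(\bar{\mathbf{p}} - \mathbf{p})$ over clearing vectors is equivalent to maximizing $\mathbf{w}^T\mathbf{p}$. So I would first recast the problem as: among all clearing payment vectors, find the one that maximizes $\mathbf{w}^T\mathbf{p}$, and then show that this maximizer is unique.

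Next I would write the clearing condition explicitly. Incorporating the cash injection, node $i$'s total available funds are $\sum_{j}\Pi_{ji}p_j + e_i + c_i$, and proportional payment with no bankruptcy costs forces
\[
p_i = \min\Big(\bar{p}_i,\ \textstyle\sum_{j}\Pi_{ji}p_j + e_i + c_i\Big).
\]
This is precisely the clearing map of~\cite{EiNo01} for the system $(\Pi,\bar{\mathbf{p}},\mathbf{e}+\mathbf{c})$; that is, the cash injection only shifts the external-asset vector from $\mathbf{e}$ to $\mathbf{e}+\mathbf{c}$. Hence the existence theory and lattice structure of~\cite{EiNo01} apply verbatim: the set of clearing vectors is nonempty and, under the componentwise partial order, forms a complete lattice, so it has a greatest element $\mathbf{p}^+$ satisfying $\mathbf{p}^+ \geq \mathbf{p}$ componentwise for every clearing vector $\mathbf{p}$. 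If a self-contained argument is preferred, this greatest fixed point also follows from the Knaster--Tarski theorem applied to the map above, which is monotone nondecreasing on the complete lattice $[\mathbf{0},\bar{\mathbf{p}}]$.

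Finally I would use the strict positivity $\mathbf{w}>\mathbf{0}$ to conclude both optimality and uniqueness. Because $\mathbf{p}^+$ dominates every clearing vector componentwise and $\mathbf{w}>\mathbf{0}$, we have $\mathbf{w}^T\mathbf{p}^+ \geq \mathbf{w}^T\mathbf{p}$ for every clearing $\mathbf{p}$, so $\mathbf{p}^+$ minimizes $W$. For uniqueness, let $\mathbf{p}$ be any clearing vector with $\mathbf{p}\neq\mathbf{p}^+$; then $\mathbf{p}\leq\mathbf{p}^+$ with a strict inequality in at least one coordinate $i$, and since $w_i>0$ this gives $\mathbf{w}^T\mathbf{p} < \mathbf{w}^T\mathbf{p}^+$, i.e., $W(\mathbf{p}) > W(\mathbf{p}^+)$. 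Hence $\mathbf{p}^+$ is the unique minimizer.

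I expect the main obstacle to be not optimality but uniqueness and its dependence on the hypotheses. The crux is the existence of a greatest clearing vector, inherited from the lattice structure of~\cite{EiNo01}; once that is available, the strict positivity of $\mathbf{w}$ is exactly what converts componentwise domination into a strict inequality in the scalar objective and thereby rules out ties among distinct clearing vectors. The one point requiring care is verifying that folding $\mathbf{c}$ into the external assets leaves the Eisenberg--Noe hypotheses intact, so that a greatest clearing vector is guaranteed to exist for every admissible $\mathbf{c}\geq\mathbf{0}$.
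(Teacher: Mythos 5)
Your proposal is correct and follows essentially the same route as the paper's first proof (``Method 1''): invoke the Eisenberg--Noe lattice result (their Theorem~1) to obtain a greatest clearing payment vector for the shifted system $(\Pi,\bar{\bf p},{\bf e}+{\bf c})$, then use ${\bf w}>\mathbf{0}$ to turn componentwise domination into a strict inequality in ${\bf w}^T{\bf p}$, which yields both optimality and uniqueness. The only difference is cosmetic---the paper routes the argument through the LP characterization (Lemma~4 of Eisenberg--Noe) rather than the clearing set directly, and it also supplies a second, self-contained proof that avoids Theorem~1 by taking the componentwise maximum of two putative LP solutions.
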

\begin{proof}
\emph{Method 1}: First, note that since ${\bf w}$ and $\bar{\bf p}$ do not depend on $\bf p$ or $\bf c$, minimizing $W$ is equivalent to maximizing ${\bf w}^T{\bf p}$. With a fixed cash injection vector $\bf c$, the financial system is equivalent to $(\Pi,\bar{\bf p},{\bf e}+{\bf c})$. Since ${\bf w}>\mathbf{0}$, we have that ${\bf w}^T{\bf p}$ is a strictly increasing function of $\bf p$. By Lemma 4 in \cite{EiNo01}, the clearing payment vector $\bf p$ can be obtained by solving the following linear program:
\begin{align}
& \displaystyle \max_{\bf p} {\bf w}^T{\bf p} \label{eq:payment-vector} \\
& \mbox{subject to } \nonumber\\
& \mathbf{0}\leq {\bf p} \leq \bar{\bf p}, \label{eq:pv-constraint1}\\
& {\bf p} \leq \Pi^T{\bf p} + {\bf e} + {\bf c}. \label{eq:pv-constraint2}
\end{align}
From Theorem 1 in \cite{EiNo01}, there exists a greatest clearing payment vector ${\bf p}^\ast$.
Since $W$ is a strictly increasing function of $\bf p$, ${\bf p}^\ast$ is a solution of
LP~(\ref{eq:payment-vector}-\ref{eq:pv-constraint2}). For any other ${\bf p} \ne {\bf p}^\ast$,
we have $p_i \le p^\ast_i$ for $i=1,2,\cdots,N$ and at least one of these inequalities is strict.
Thus, ${\bf w}^T{\bf p} < {\bf w}^T{\bf p}^\ast$. Therefore ${\bf p}^\ast$ is
the unique solution of LP~(\ref{eq:payment-vector}-\ref{eq:pv-constraint2}).
This completes the proof of Lemma~\ref{lem:uniqueness}.

\emph{Method 2}: Here is another method to prove Lemma~\ref{lem:uniqueness} without using
Theorem~1 in~\cite{EiNo01}. It is clear that LP~(\ref{eq:payment-vector}-\ref{eq:pv-constraint2})
is feasible and bounded so the solution always exists. Assume there are two different
solutions ${\bf p}^1$ and ${\bf p}^2$, and define ${\bf p}^+ \in \mathbb{R}^N$ as
$p^+_i=\displaystyle \max\{p^1_i, p^2_i\}$ for $i=1,2,\cdots,N$.
Then ${\bf w}^T{\bf p}^+ > {\bf w}^T{\bf p}^l$, for $l=1,2$.
Here the inequality is strict because ${\bf p}^1 \ne {\bf p}^2$.

For each $i$, by definition, $p^+_i=p^1_i$ or $p^+_i=p^2_i$. Since ${\bf p}^1$ and ${\bf p}^2$
are both solutions of the LP, they both satisfy constraint~(\ref{eq:pv-constraint1}), i.e.,
$0\leq p^1_i\leq \bar{p}_i$ and $0\leq p^2_i \leq \bar{p}_i$ for all $i$.  Therefore, we also have
$0\leq p^+_i\leq \bar{p}_i$ for all $i$, which means that ${\bf p}^+$ also satisfies
constraint~(\ref{eq:pv-constraint1}).  In addition, both ${\bf p}^1$ and ${\bf p}^2$
satisfy constraint~(\ref{eq:pv-constraint2}), which, along with the fact that all entries
of $\Pi$ are nonnegative, implies that $p^1_i \le \displaystyle \sum_{j=1}^{N}\Pi_{ji}p^+_j+e_i+c_i$
and $p^2_i \le \displaystyle \sum_{j=1}^{N}\Pi_{ji}p^+_j+e_i+c_i$ for all $i$, and therefore also
$p^+_i \le \displaystyle \sum_{j=1}^{N}\Pi_{ji}p^+_j+e_i+c_i$ for all $i$.
Therefore, ${\bf p}^+$ also satisfies constraint~(\ref{eq:pv-constraint2}).
Thus, ${\bf p}^+$ is in the feasible region of LP~(\ref{eq:payment-vector}-\ref{eq:pv-constraint2}) and achieves
a larger value of the objective function than do ${\bf p}^1$ and ${\bf p}^2$, which contradicts
the fact that ${\bf p}^1$ and ${\bf p}^2$ are solutions of LP~(\ref{eq:payment-vector}-\ref{eq:pv-constraint2}).
This completes the proof of Lemma~\ref{lem:uniqueness}.
\end{proof}

We now establish the equivalence of Problem~I and a linear programming problem.

\begin{thm} \label{thm:LP} 
Assume that the liabilities matrix $L$, the asset vector ${\bf e}$, the weight vector $\bf w$,
and the total cash injection amount $C$ are fixed and known. Assume that the system utilizes
the proportional payment mechanism with no bankruptcy costs. Consider Problem~I, i.e.,
the problem of calculating
a cash injection allocation ${\bf c}\geq {\bf 0}$ to minimize the weighted sum of unpaid liabilities
$W={\bf w}^T(\bar{\bf p}-{\bf p})$ subject to the budget constraint ${\bf 1}^T{\bf c} \leq C$.
A solution to this problem can be obtained by solving the following linear program:
\begin{align}
& \displaystyle \max_{{\bf p}, {\bf c}} {\bf w}^T{\bf p}\label{eq:LP}\\
& \mbox{subject to } \nonumber\\
& \mathbf{1}^T{\bf c} \le C, \label{eq:LP-constraint1}\\
& {\bf c} \geq \mathbf{0}, \nonumber\\
& \mathbf{0}\leq {\bf p} \leq \bar{\bf p}, \nonumber\\
& {\bf p} \leq \Pi^T{\bf p} + {\bf e} + {\bf c}. \label{eq:LP-constraint2}
\end{align}
\end{thm}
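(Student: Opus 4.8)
The plan is to build directly on Lemma~\ref{lem:uniqueness}, collapsing the bilevel structure of Problem~I into a single joint linear program. Write $\mathcal{C} = \{{\bf c} : \mathbf{1}^T{\bf c}\le C,\ {\bf c}\ge\mathbf{0}\}$ for the set of admissible injections and let ${\bf p}({\bf c})$ denote the clearing payment vector induced by ${\bf c}$. Since ${\bf w}$ and $\bar{\bf p}$ do not depend on ${\bf c}$, minimizing $W({\bf c}) = {\bf w}^T(\bar{\bf p}-{\bf p}({\bf c}))$ over $\mathcal{C}$ is equivalent to maximizing ${\bf w}^T{\bf p}({\bf c})$ over $\mathcal{C}$, attained at the same ${\bf c}$. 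This recasts Problem~I as $\max_{{\bf c}\in\mathcal{C}} {\bf w}^T{\bf p}({\bf c})$.

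The first substantive step is to replace the implicit map ${\bf c}\mapsto{\bf p}({\bf c})$ by an explicit inner maximization. Fixing ${\bf c}$, the system is equivalent to $(\Pi,\bar{\bf p},{\bf e}+{\bf c})$, so Lemma~\ref{lem:uniqueness} gives that ${\bf w}^T{\bf p}({\bf c})$ is the optimal value of LP~(\ref{eq:payment-vector}-\ref{eq:pv-constraint2}) with ${\bf e}$ replaced by ${\bf e}+{\bf c}$; that is, ${\bf w}^T{\bf p}({\bf c}) = \max\{{\bf w}^T{\bf p} : {\bf p}\in F({\bf c})\}$, where $F({\bf c})$ is the set cut out by constraints~(\ref{eq:pv-constraint1}-\ref{eq:pv-constraint2}). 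Substituting yields a nested maximization, $\max_{{\bf c}\in\mathcal{C}}\max_{{\bf p}\in F({\bf c})}{\bf w}^T{\bf p}$, which equals the joint maximization of ${\bf w}^T{\bf p}$ over all pairs $({\bf p},{\bf c})$ with ${\bf c}\in\mathcal{C}$ and ${\bf p}\in F({\bf c})$. That joint feasible set is precisely the one described by~(\ref{eq:LP-constraint1}-\ref{eq:LP-constraint2}), so the optimal values of Problem~I and LP~(\ref{eq:LP}-\ref{eq:LP-constraint2}) agree, both programs being feasible (e.g.\ ${\bf p}=\mathbf{0}$) and bounded (via ${\bf p}\le\bar{\bf p}$), hence attaining optimizers.

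The step that requires the most care---and the main obstacle---is recovering an \emph{optimal injection}, not merely the optimal value, from the joint LP, since it is not a priori clear that a maximizer $({\bf p}^\ast,{\bf c}^\ast)$ of LP~(\ref{eq:LP}-\ref{eq:LP-constraint2}) has ${\bf p}^\ast$ equal to the clearing vector ${\bf p}({\bf c}^\ast)$. To close this gap I would argue that, holding ${\bf c}={\bf c}^\ast$ fixed, ${\bf p}^\ast$ must maximize ${\bf w}^T{\bf p}$ over $F({\bf c}^\ast)$: otherwise some ${\bf p}'\in F({\bf c}^\ast)$ with ${\bf w}^T{\bf p}'>{\bf w}^T{\bf p}^\ast$ would give a feasible pair $({\bf p}',{\bf c}^\ast)$ of strictly larger objective, contradicting optimality. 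By the uniqueness asserted in Lemma~\ref{lem:uniqueness}, this maximizer is exactly ${\bf p}({\bf c}^\ast)$, so ${\bf p}^\ast={\bf p}({\bf c}^\ast)$ and ${\bf w}^T{\bf p}^\ast={\bf w}^T\bar{\bf p}-W({\bf c}^\ast)$. Since ${\bf w}^T{\bf p}^\ast$ is the maximal LP value, $W({\bf c}^\ast)$ is the minimal weighted unpaid-liability cost over $\mathcal{C}$, so ${\bf c}^\ast$ solves Problem~I, completing the argument.
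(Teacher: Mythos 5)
Your proposal is correct and follows essentially the same route as the paper's own proof: both collapse the bilevel problem into the joint LP over $({\bf p},{\bf c})$, invoke Lemma~\ref{lem:uniqueness} (together with the LP characterization of the clearing vector from~\cite{EiNo01}) to identify the ${\bf p}$-component of an optimizer with the clearing payment vector for ${\bf c}^\ast$, and conclude via the observation that any competing injection ${\bf c}'$ with its clearing vector ${\bf p}'$ is feasible for the joint LP. The only difference is presentational: you spell out explicitly (via the nested-versus-joint maximization identity) the step that the paper compresses into the sentence citing Lemma~\ref{lem:uniqueness} and Lemma~4 of~\cite{EiNo01}, and you phrase the final step in terms of optimal values rather than by contradiction.
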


\begin{proof}
Since the constraints on ${\bf c}$ and ${\bf p}$ in LP~(\ref{eq:LP}-\ref{eq:LP-constraint2})
form a closed and bounded set in $\mathbb{R}^{2N}$, a solution exists. Moreover,
for any fixed ${\bf c}$, it follows from our Lemma~\ref{lem:uniqueness} and Lemma~4 in~\cite{EiNo01}
that the linear program has a unique solution for ${\bf p}$ which is the clearing payment vector for the system.

Let $({\bf p}^\ast,{\bf c}^\ast)$ be a solution to (\ref{eq:LP}-\ref{eq:LP-constraint2}).
Suppose that there exists a cash injection allocation that leads to a smaller cost $W$
than does ${\bf c}^\ast$. In other words, suppose that there exists
${\bf c'}>\mathbf{0}$, with $\mathbf{1}^T{\bf c'} \le C$, such that the corresponding
clearing payment vector ${\bf p'}$ satisfies
$
{\bf w}^T(\bar{\bf p} - {\bf p'}) <
{\bf w}^T(\bar{\bf p} - {\bf p}^\ast),
$
or, equivalently,
\begin{align}
{\bf w}^T{\bf p}^\ast < {\bf w}^T{\bf p'}.
\label{eq:contradiction}
\end{align}
Note that ${\bf c'}$ satisfies the first two constraints of (\ref{eq:LP}-\ref{eq:LP-constraint2}).
Moreover, since ${\bf p'}$ is the corresponding clearing payment vector, the last two constraints
are satisfied as well. The pair $({\bf p'},{\bf c'})$ is thus in the constraint set of
our linear program. Therefore, Eq.~(\ref{eq:contradiction}) contradicts the assumption
that $({\bf p}^\ast,{\bf c}^\ast)$ is a solution to (\ref{eq:LP}-\ref{eq:LP-constraint2}).
This completes the proof that ${\bf c}^\ast$ is the allocation of $C$ that achieves
the smallest possible cost $W$.
\end{proof}

In the Lagrangian formulation of Problem~I, we are given a weight $\lambda$ and must choose the total cash injection amount $C$ and its allocation ${\bf c}$ to minimize $\lambda C+W$. This is equivalent to the following linear program:
\begin{align}
& \displaystyle \max_{C,\mathbf{c},\mathbf{p}} \mathbf{w}^T{\bf p} - \lambda C
\label{eq:LP2}
\\
& \mbox{subject to} \nonumber\\
& \mathbf{1}^T{\bf c} = C, \nonumber\\
& {\bf c} \geq \mathbf{0}, \nonumber\\
& \mathbf{0}\leq {\bf p} \leq \bar{\bf p}, \nonumber\\
& {\bf p} \leq \Pi^T{\bf p} + {\bf e} + {\bf c}. \nonumber
\end{align}
This equivalence follows from Theorem~\ref{thm:LP}: denoting
a solution to (\ref{eq:LP2}) by $(C^\ast,{\bf p}^\ast,{\bf c}^\ast)$,
we see that the pair $({\bf p}^\ast,{\bf c}^\ast)$ must be a solution
to (\ref{eq:LP}-\ref{eq:LP-constraint2}) for $C=C^\ast$.  At the same time, the fact that
$C^\ast$ maximizes the objective function in (\ref{eq:LP2}) means that
it minimizes $\lambda C+ W = \lambda C + \mathbf{w}^T(\bar{\bf p}-{\bf p})$,
since $\bar{\bf p}$ is a fixed constant.

\subsection{Comparison with Demange's Algorithm} \label{subsec:demange}
A cash injection targeting policy is developed in~\cite{De11a,De11b} for an infinitesimally
small amount of the injected cash. The basic idea of Proposition 4 in~\cite{De11b}
is to inject the cash into the node with the largest threat index, which is defined
as the derivative of the sum of the unpaid liabilities with respect to the current asset value.
Moreover, as small amounts of cash are gradually injected, the target remains the same
until at least one bank is fully rescued (i.e., changes from defaulting to solvent),
so the optimal policy for non-infinitesimal amounts of cash would be to keep injecting
cash into the same node until one node changes its state. While no algorithm for the injection
of a non-infinitesimal amount of cash (i.e., for our Problem I) is proposed in~\cite{De11a,De11b},
we construct such an algorithm based on the ideas from~\cite{De11a,De11b}.

{\bf Algorithm for Problem I based on~\cite{De11a,De11b}:}
\begin{enumerate}
\item
Initialization: set cash injection vector ${\bf c} \leftarrow \mathbf{0}$ and
the remaining cash still to be allocated $C_r \leftarrow C$.
\item
Compute the clearing payment vector $\bf p$ for system $(\Pi,\bar{\bf p},{\bf e}+{\bf c})$.
\item
Compute the threat index for system $(\Pi,\bar{\bf p},{\bf e}+{\bf c})$
by solving the linear program (13) in~\cite{De11b}. Select the one with the largest threat
index, denoted as node $i_0$.
\item
Inject a small amount of cash $\delta$ into node $i_0$ and update the clearing
payment vector ${\bf p}'$. Define $\Delta{\bf p}={\bf p}'-{\bf p}$ as the increase of the payment vector
after injecting $\delta$ into node $i_0$.
\item
Compute $\frac{\bar{p}_i-p_i}{\Delta p_i}$ for $i=1,2,...,N$. Select the smallest one, denoted as
node $i_1$. Then node $i_1$ will be the first node that changes from defaulting to being solvent
when we keep injecting cash into node $i_0$.
\item
Set $c_{i_0} \leftarrow c_{i_0} +
\displaystyle \min\left\{C_r,\,\, \delta\frac{\bar{p}_i-p_i}{\Delta p_i}\right\}$.
Set $C_r \leftarrow C_r - \displaystyle \min\left\{C_r,\,\, \delta\frac{\bar{p}_i-p_i}{\Delta p_i}\right\}$.
If $C_r = 0$, stop; otherwise, go to Step~2.
\end{enumerate}
Each iteration of this algorithm computes the clearing payment vector twice: in Steps 2 and 4.
Step 3 moreover involves solving a linear program to obtain the threat index. In the worst case,
the algorithm will stop after $N$ iterations since at each iteration, only one defaulting node
is guaranteed to be rescued. Thus, we would need to solve $N$ LPs and compute the clearing
payment vector $2N$ times in the worst case---much less computationally efficient
than our approach of Theorem~\ref{thm:LP} which requires solving a single LP.
Note that the above algorithm makes a simplifying assumption in Step 4 that a small number $\delta$
can be found in advance such that the injection of $\delta$ in Step 4 does not lead to the rescue
of any banks.  Our algorithm based on Theorem~\ref{thm:LP} does not require this simplifying assumption.
In addition, unlike our LP method, the above algorithm has limited applications.
For example, it is not easy to extend this algorithm to solve Problem~I-stochastic.

\subsection{Minimizing the Number of Defaults} \label{subsec:Nd}

Given that the total amount of cash injection
is $C$, Problem~II seeks to find a cash injection allocation
vector ${\bf c}$ to minimize the number of defaults $N_d$,
i.e., the number of nonzero entries in the vector
$\bar{\bf p} - {\bf p}$.

In this section, we propose two heuristic algorithms to solve
Problem~II approximately.  First,
we adapt the reweighted $\ell_1$ minimization strategy
approach from Section 2.2 of~\cite{CaWaBo08}.  Our algorithm
solves a sequence of weighted versions of the linear program (\ref{eq:LP}-\ref{eq:LP-constraint2}),
with the weights designed to encourage sparsity of $\bar{\bf p} - {\bf p}$.
In the following pseudocode of our algorithm,
${\bf w}^{(m)}$ is the weight vector during the $m$-th iteration.

{\bf Reweighted $\ell_1$ minimization algorithm}:
\begin{enumerate}
\item $m \leftarrow 0$.
\item Select ${\bf w}^0$ (e.g., ${\bf w}^0 \leftarrow {\bf 1}$).
\item Solve linear program (\ref{eq:LP}-\ref{eq:LP-constraint2}) with objective function replaced by
${\bf p}^T{\bf w}^{(m)}$.
\item Update the weights: for each $i=1,\cdots,N$,
\begin{equation}
w_i^{(m+1)}\leftarrow \frac{1}{\exp\left(\bar{p}_i-p^{\ast(m)}_i\right)-1+\epsilon},
\nonumber
\end{equation}
where $\epsilon>0$ is constant, and ${\bf
p}^{\ast(m)}$ is the clearing payment vector obtained in Step 3.
\item If $\|{\bf w}^{(m+1)}-{\bf w}^{(m)}\|_1 < \delta$, where $\delta>0$ is a constant, stop;
else, increment $m$ and go to Step 3.
\end{enumerate}

Note that nodes for which $\bar{p}_i - p_i^{*(m)}$ is very small
require very little additional resources to avoid default.  This is
why Step 4 is designed to give more weight to such nodes, thereby
encouraging larger cash injections into them. On the other hand,
nodes for which $\bar{p}_i - p_i^{*(m)}$ is very large require a lot
of cash to become solvent.  The algorithm essentially ``gives up''
on such nodes by assigning them small weights.

The second heuristic algorithm we develop is a greedy algorithm.
At each iteration of the greedy algorithm, we calculate the clearing payment
vector and select the node with the smallest unpaid liability among all the defaulting
nodes.  We inject cash into that node to rescue it so that during each iteration, we save
the one node that requires the smallest cash expenditure.
In this procedure, we inject the cash sequentially, bailing out some nodes
completely before they fully receive the payments from their borrowers.  These nodes may 
subsequently receive some more cash from their borrowers if their borrowers are rescued
several steps later. Because of this, a rescued node may end up with a surplus.  If this happens,
the node would use its surplus to repay its cash injection.  Such repayments can then
be used to assist other nodes. The algorithm terminates either when there are no defaults
in the system or when
the injected cash reaches the total amount $C$ and no rescued node has a surplus.

{\bf Greedy algorithm}:
\begin{enumerate}
\item $C_r \leftarrow C$, ${\bf c} \leftarrow \mathbf{0}$,
${\bf w} \leftarrow \mathbf{1}$.
\item Solve linear program (\ref{eq:payment-vector}-\ref{eq:pv-constraint2})
to obtain the clearing payment vector ${\bf p}$.
\item Calculate the surplus of each node after clearing:
${\bf r} \leftarrow \Pi^T {\bf p} + {\bf e} + {\bf c} - {\bf p}$.
\item Update the remaining cash to be injected into the system after the rescued
nodes repay their cash injections:
$C_r \leftarrow C_r + \displaystyle \sum_{i=1}^{N} \min\{r_i, c_i\}$,
$c_i \leftarrow c_i - \min\{c_i, r_i\}$ for $i = 1,2,\cdots,N$.
\item If $C_r = 0$ or there are no defaults in the system, stop.
\item Find node $k$ with the minimum unpaid liability $\bar{p}_k - p_k$ among all
defaulting nodes.
\item $c_k \leftarrow \min\{C_r, \bar{p}_k - p_k\}$, $C_r \leftarrow C_r - c_k$,
go to Step 2.
\end{enumerate}

\begin{figure}[t]
    \centering
    \includegraphics[width=0.7\textwidth]{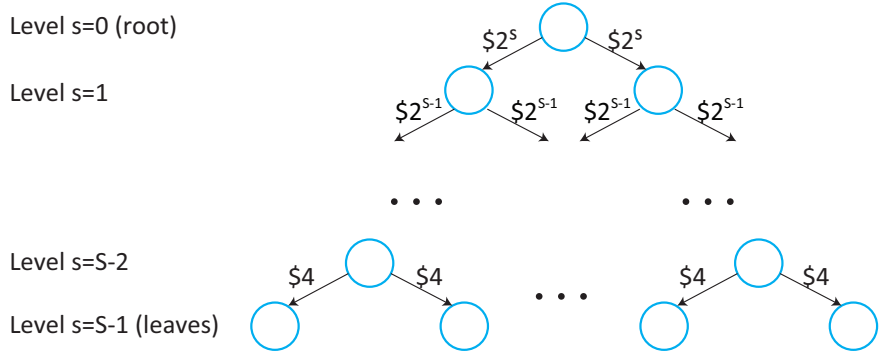}
    \caption{Binary tree network.}
\label{fig:tree}
\end{figure}

\begin{figure}[t]
    \centering
    \includegraphics[width=0.6\textwidth]{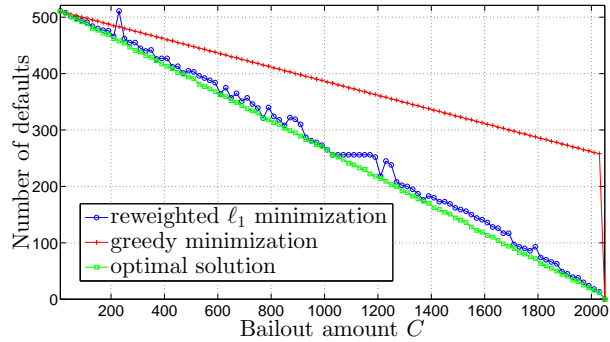}
    \caption{Our algorithm for minimizing the number of defaults vs the optimal solution,
for the binary tree network of Fig.~\ref{fig:tree}.}
\label{fig:evaluation1}
\end{figure}

\subsubsection{Example: A Binary Tree Network}
First, we use a full binary tree with $S$ levels and
$N=2^{S}-1$ nodes. As shown in Fig.~\ref{fig:tree}, levels 0 and
$S-1$ correspond to the root and the leaves, respectively.  Every
node at level $s<S-1$ owes $\$2^{S-s}$ to each of its two creditors
(children).  We set ${\bf e} = {\bf 0}$.

If $C<8$, then all $2^{S-1}-1$ non-leaf nodes are in default, and
the $2^{S-1}$ leaves are not in default.  In aggregate, the nodes at
any level $s<S-1$ owe $\$2^{S+1}$ the nodes at level $s+1$.
Therefore, if $C\geq 2^{S+1}$, then $N_d=0$ can be achieved by
allocating the entire amount to the root node.

For $8\leq C < 2^{S+1}$, we first observe that if $C = 2^{S+1-s}$ for
some integer $s$, then the optimal solution is to allocate the
entire amount to a node at level~$s$.  This would prevent the
defaults of this node and all its $2^{S-s-1}-2$ non-leaf descendants, leading to
$2^{S-1}-2^{S-s-1}$ defaults.  If $C$ is not a power of two, we can
represent it as a sum of powers of two and apply the same argument
recursively, to yield the following optimal number of defaults:
\[
N_d=\displaystyle T(S)-\sum_{u=4}^U b(u)\cdot T(u-2),
\]
where $T(x)=2^{x-1}-1$ is the number of non-leaf nodes in an
$x$-level complete binary tree, $b(u)$ is the $u$-th bit in the binary
representation of $C$ (right to left) and $U$ is the number of bits.
To summarize, the smallest number of defaults $N_d$, as a function
of the cash injection amount $C$, is:
\begin{equation}
N_d(C) \! = \! \left\{\!\!\!\! \begin{array}{ll}
T(S) & \!\!\! \mbox{if } C < 8, \\
\displaystyle T(S)\! - \!
\sum_{u=4}^U b(u)T(u-2) & \!\!\! \mbox{if } 8\leq C < 2^{S+1}, \\
0 & \!\!\! \mbox{if } C \geq 2^{S+1}.
\end{array}\right.
\label{eq:Nd1}
\end{equation}

In our test, we set $S=10$. The green line in
Fig.~\ref{fig:evaluation1} is a plot of the minimum number of
defaults as a function of $C$ from Eq.~(\ref{eq:Nd1}).  The blue line is the solution
calculated by the reweighted $\ell_1$ minimization algorithm with
$\epsilon=0.001$ and $\delta=10^{-6}$. The algorithm was run
using six different initializations: five random ones and $\bf
w^{(0)} = 1$.  Among the six solutions, the one with the smallest
number of defaults was selected.  The red line is the solution
calculated by the greedy algorithm.  As evident from
Fig.~\ref{fig:evaluation1}, the results of the reweighted $\ell_1$ minimization
algorithm are very close to the optimal for the entire range of $C$.
The performance of the greedy algorithm is poor.  The
greedy algorithm always injects cash into the nodes at level $S-2$ which have
the smallest unpaid liabilities. For $C \geq 16$, this strategy is inefficient since
spending \$16 on a node at level $S-3$ rescues both that node and its two children,
whereas spending \$16 on two nodes at level $S-2$ only rescues those two nodes.


\begin{figure}[t]
    \centering
    \includegraphics[width=0.6\textwidth]{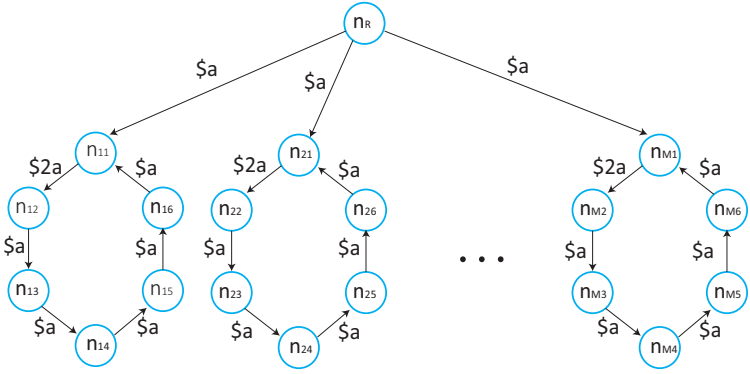}
    \caption{Network topology with cycles.}
\label{fig:cycle}
\end{figure}

\begin{figure}[t]
    \centering
    \includegraphics[width=0.7\textwidth]{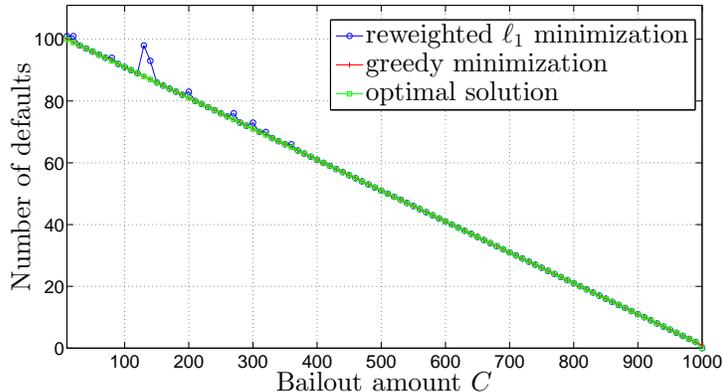}
    \caption{Our algorithm for minimizing the number of defaults vs the optimal solution,
for the network of Fig.~\ref{fig:cycle}.} \label{fig:evaluation2}
\end{figure}

\subsubsection{Example: A Network with Cycles}
Second, we test our algorithms on a network with cycles shown in
Fig.~\ref{fig:cycle}.  The network contains $M$ cycles with six
nodes each. The nodes in the $k$-th cycle are denoted $n_{k1},
n_{k2},\cdots,n_{k6}$. Node $n_{k1}$ owes $\$2a$ to $n_{k2}$. Node
$n_{k6}$ owes $\$a$ to $n_{k1}$. For $i=2,\cdots,5$, $n_{ki}$ owes
$\$a$ to $n_{k(i+1)}$. The root node, denoted as $n_R$, owes $\$a$
to $n_{k1}$, for every $k=1,2,\cdots,M$. We set ${\bf e}=\mathbf{0}$.

If $C < a$, then the root node and all $M$ nodes connected to the
root, $n_{k1} (k=1,2,\cdots,M$), are in default. The remaining $5M$
nodes are not in default.

If $C \geq aM$, then allocating the entire amount $C$ to the root yields
zero defaults.

If $a \leq C <aM$, then giving $\$a$ to node $n_{k1}$ will prevent it from
defaulting. Thus, the total number of defaults in this case is $M+1-[C/a]$.

Summarizing, for this network structure, the smallest number of defaults $N_d$,
as a function of the cash injection amount $C$, is:
\begin{equation}
N_d(C) = \left\{\begin{array}{ll}
M+1 & \mbox{if } C < a, \\
M+1-[C/a] & \mbox{if } a\leq C < aM, \\
0 & \mbox{if } C \geq aM.
\end{array}\right.
\label{eq:Nd2}
\end{equation}

In our test, we set $a=10$ and $M=100$. In Fig.~\ref{fig:evaluation2},
the green line is a plot of the minimum number of defaults as a function of $C$.
The blue line is the solution
calculated by the reweighted $\ell_1$ minimization algorithm with
$\epsilon=0.001$ and $\delta=10^{-6}$. The algorithm was run
using six different initializations: five random ones and
$\bf w^{(0)} = 1$.  Among the six solutions, the one with the smallest
number of defaults was selected.  The red line is the solution
calculated by the greedy algorithm.  As evident from
Fig.~\ref{fig:evaluation2}, the results produced by both algorithms
are very close to the optimal ones.
The greedy algorithm achieves the optimal for the entire range of $C$
except the point $C = 1000$. When $C = 1000$, the optimal 
strategy is to inject $\$1000$ into the root node whereas the greedy
algorithm injects $\$10$ into $n_{k1}$ for $k = 1,2,\cdots,100$.

\begin{figure}[t]
    \centering
    \includegraphics[width=0.6\textwidth]{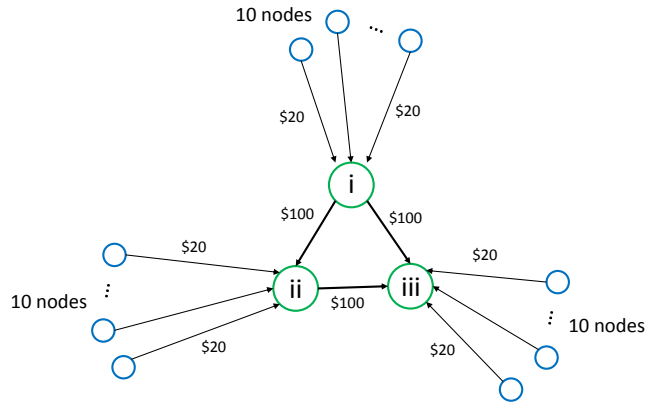}
    \caption{Core-periphery network topology.}
\label{fig:cp}
\end{figure}

\begin{figure}[t]
    \centering
    \includegraphics[width=0.7\textwidth]{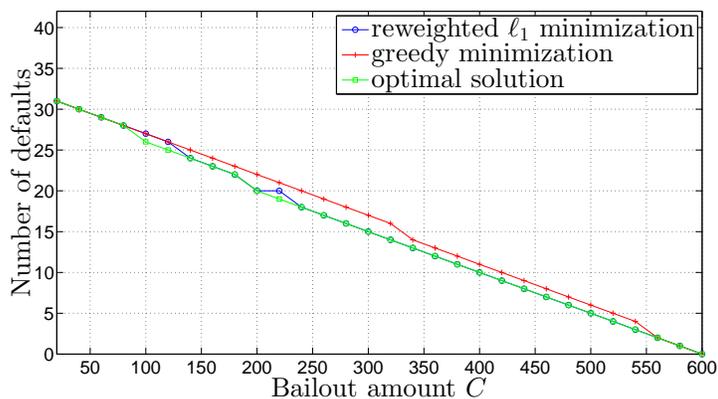}
    \caption{Our algorithm for minimizing the number of defaults vs the optimal solution,
for the network of Fig.~\ref{fig:cp}.} \label{fig:evaluation3}
\end{figure}

\subsubsection{Example: A Core-Periphery Network}
Third, we test our algorithm on a simple core-periphery network, since core-periphery models are widely used to model banking systems~\cite{ElGoJa13,Fa14,VeLe14,CrPe14}. In Fig.~\ref{fig:cp}, i, ii, and iii are the three core nodes. Node i owes \$100 each to nodes ii and iii, and node ii owes \$100 to iii. Ten periphery nodes are attached to each core node, and each periphery node owes \$20 to its core node. There are no external assets in the system and therefore in the absence of an external injection of cash, all the nodes are in default except node iii.

If the cash injection amount is $C < 100$, the optimal solution is to select any $[C/20]$ periphery nodes and give \$20 to each of them. This reduces the number of defaults by $[C/20]$.

If $100 \le C < 200$, we first select any five periphery nodes of core node ii and give \$20 to each of them, because this saves both node ii and these five periphery nodes. Then we select any other $[(C-100)/20]$ periphery nodes and give \$20 to each. This decreases the number of defaults by $[C/20]+1$.

If $200 \le C < 600$, we first use \$200 to rescue all 10 periphery nodes of core node i, saving i, ii, and these 10 periphery nodes. Then we select any other $[(C-200)/20]$ periphery nodes and give \$20 to each. This decreases the number of defaults by $[C/20]+2$.

If $C \geq 600$, then all the nodes can be rescued by giving \$20 to each periphery node.

To sum up, for this core-periphery network structure, the smallest number of defaults $N_d$,
as a function of the cash injection amount $C$, is:
\begin{equation}
N_d(C) = \left\{\begin{array}{ll}
32-[C/20] & \mbox{if } C< 100, \\
31-[C/20] & \mbox{if } 100 \le C < 200, \\
30-[C/20] & \mbox{if } 200 \le C < 600, \\
0 & \mbox{if } C \ge 600.
\end{array}\right.
\label{eq:Nd3}
\end{equation}

In Fig.~\ref{fig:evaluation3}, the green line is a plot of this minimum
number of defaults as a function of $C$. 
The blue line is the solution
calculated by our reweighted $\ell_1$ minimization algorithm with
$\epsilon=0.001$ and $\delta=10^{-6}$. The algorithm was run
using six different initializations: five random ones and
$\bf w^{(0)} = 1$.  Among the six solutions, the one with the smallest
number of defaults was selected.  The red line is the solution
calculated by the greedy algorithm.  As evident from
Fig.~\ref{fig:evaluation3}, the results produced by the reweighted $\ell_1$
algorithm are very close to the optimal ones for the entire range of $C$.
Note that for the greedy algorithm, the performance depends on the order of rescuing
nodes with the same unpaid liability amounts. For example, if the greedy
algorithms rescue the periphery nodes of core node iii first, the performance
would be poor.

\subsubsection{Example: Three Random Networks}
We now compare the reweighted $\ell_1$ minimization algorithm to the greedy algorithm
using more complex
network topologies in which the optimal solution is difficult to calculate directly.

\begin{figure}[t]
    \centering
    \includegraphics[width=0.6\textwidth]{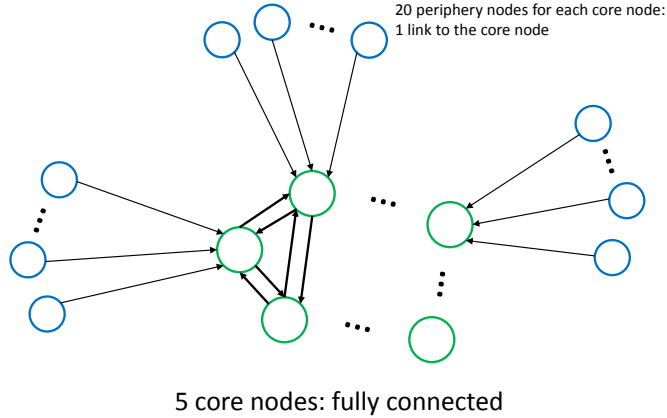}
    \caption{Random core-periphery network to compare the reweighted $\ell_1$ algorithm
    and the greedy algorithm.}
\label{fig:random_core_periphery}
\end{figure}

\begin{figure}[t]
    \centering
    \includegraphics[width=0.6\textwidth]{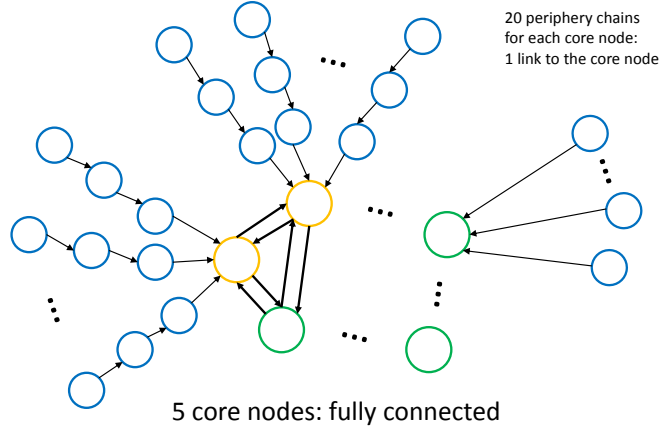}
    \caption{Random core-periphery network with long chains to compare the reweighted
    $\ell_1$ algorithm and the greedy algorithm.}
\label{fig:random_core_periphery_chains}
\end{figure}

We construct three types of random networks, all having external asset vector ${\bf e}=\mathbf{0}$.
The first one is a random graph with
30 nodes. For any pair of nodes $i$ and $j$, $L_{ij}$ is zero with probability 0.8 
and is uniformly distributed in $[0,2]$ with probability 0.2.

The second one is a random core-periphery network which is illustrated in
Fig.~\ref{fig:random_core_periphery}. The core contains five nodes
which are fully connected. The liability from one core node to every other core node
is uniformly distributed in $[0,20]$. Each core node has 20 periphery nodes. Each
periphery node owes money only to its core node.  This amount of money is uniformly
distributed in $[0,1]$.

The third one is a random core-periphery network with chains of periphery nodes.
As shown in Fig.~\ref{fig:random_core_periphery_chains},
the core contains five nodes which are fully connected.
The liability from one core node to every other core node is uniformly
distributed in [0,20]. Each core node has 20 periphery chains connected to it, each
chain consisting of either a single periphery node (short chains) or 3 periphery nodes
(long chains).  Each core node has either only short periphery chains connected to it
or only long periphery chains connected to it.  There are two core nodes with 
long periphery chains. The liability amounts along each long chain are the same,
and are uniformly distributed in [0,1].  The liability amounts along each short chain
are also uniformly distributed in [0,1].

For each of these three random networks, we generate 100 samples from the distribution
and run both the reweighted $\ell_1$ minimization algorithm and the greedy algorithm
on each sample network.  In the reweighted $\ell_1$ minimization algorithm,
we set $\epsilon = 0.001$, $\delta = 10^{-6}$. We run the algorithm
using six different initializations: five random ones and
$\bf w^{(0)} = 1$.  Among the six solutions, the one with the smallest
number of defaults is selected.

The results are shown in Figs.~\ref{fig:evaluation4},~\ref{fig:evaluation5}, and~\ref{fig:evaluation6}.
The blue and red solid lines
represent the average numbers of defaulting nodes after the cash injection allocated
by the two algorithms: blue for the reweighted $\ell_1$ minimization and red for the
greedy algorithm.  The dashed lines show the error bars for the estimates of the average.
Each error bar is $\pm$two standard errors.

From Fig.~\ref{fig:evaluation4}, we see the performance of the reweighted $\ell_1$ algorithm 
is close to the greedy algorithm on the random networks.
From Fig.~\ref{fig:evaluation5} and Fig.~\ref{fig:evaluation6}, we see that on random
core-periphery networks, the greedy algorithm performs better than the reweighted $\ell_1$
algorithm, while on random core-periphery networks with chains, the reweighted $\ell_1$
algorithm is better.

\begin{figure}
    \centering
    \includegraphics[width=0.7\textwidth]{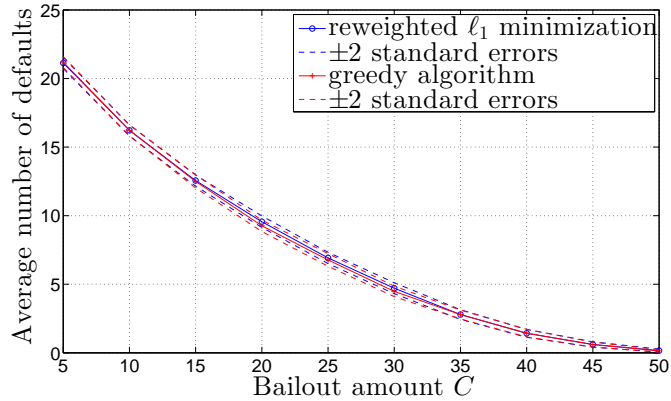}
    \caption{Two heuristic algorithms for minimizing the number of defaults: evaluation on
 random networks.} \label{fig:evaluation4}
\end{figure}

\begin{figure}
    \centering
    \includegraphics[width=0.7\textwidth]{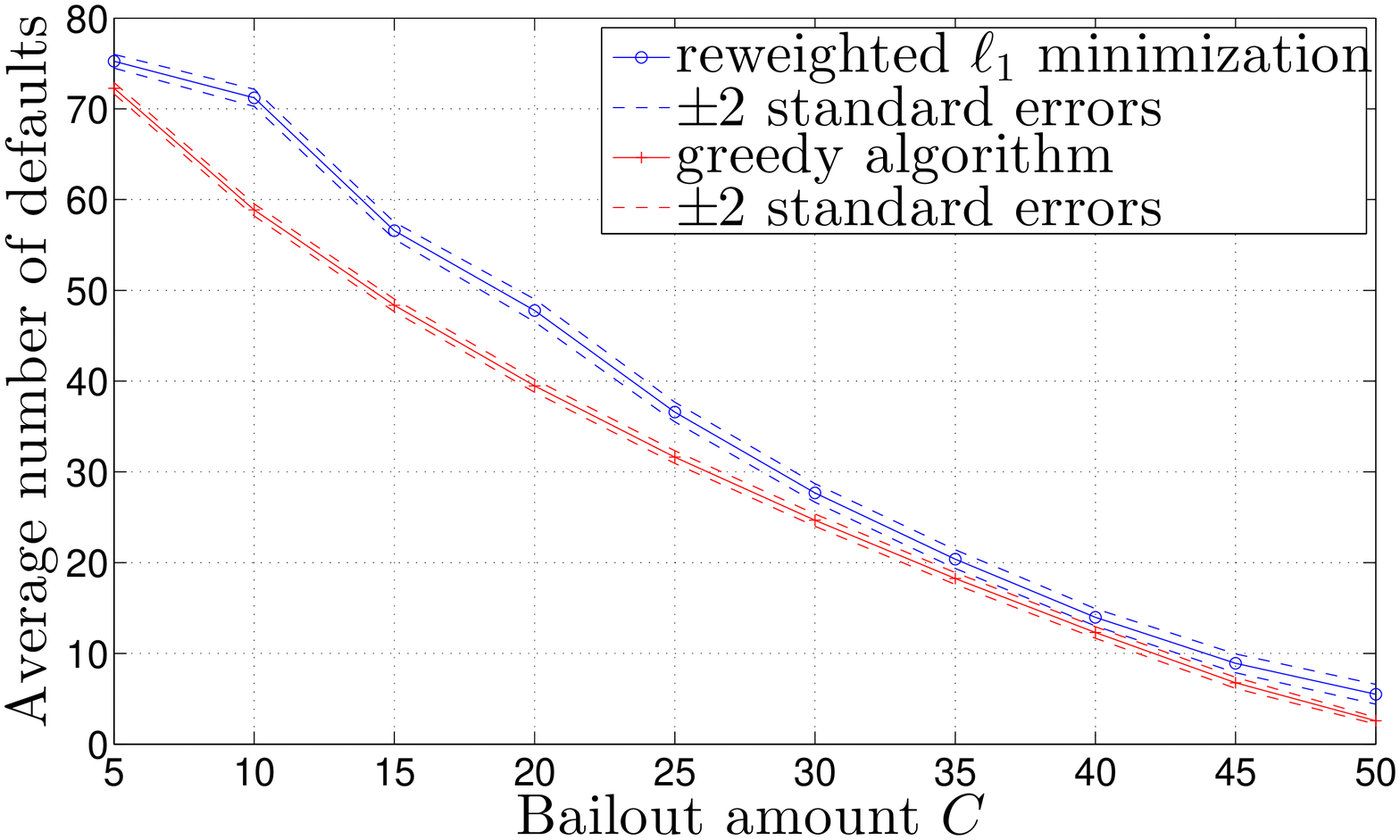}
    \caption{Two heuristic algorithms for minimizing the number of defaults: evaluation on
random core-periphery networks.} \label{fig:evaluation5}
\end{figure}

\begin{figure}
    \centering
    \includegraphics[width=0.7\textwidth]{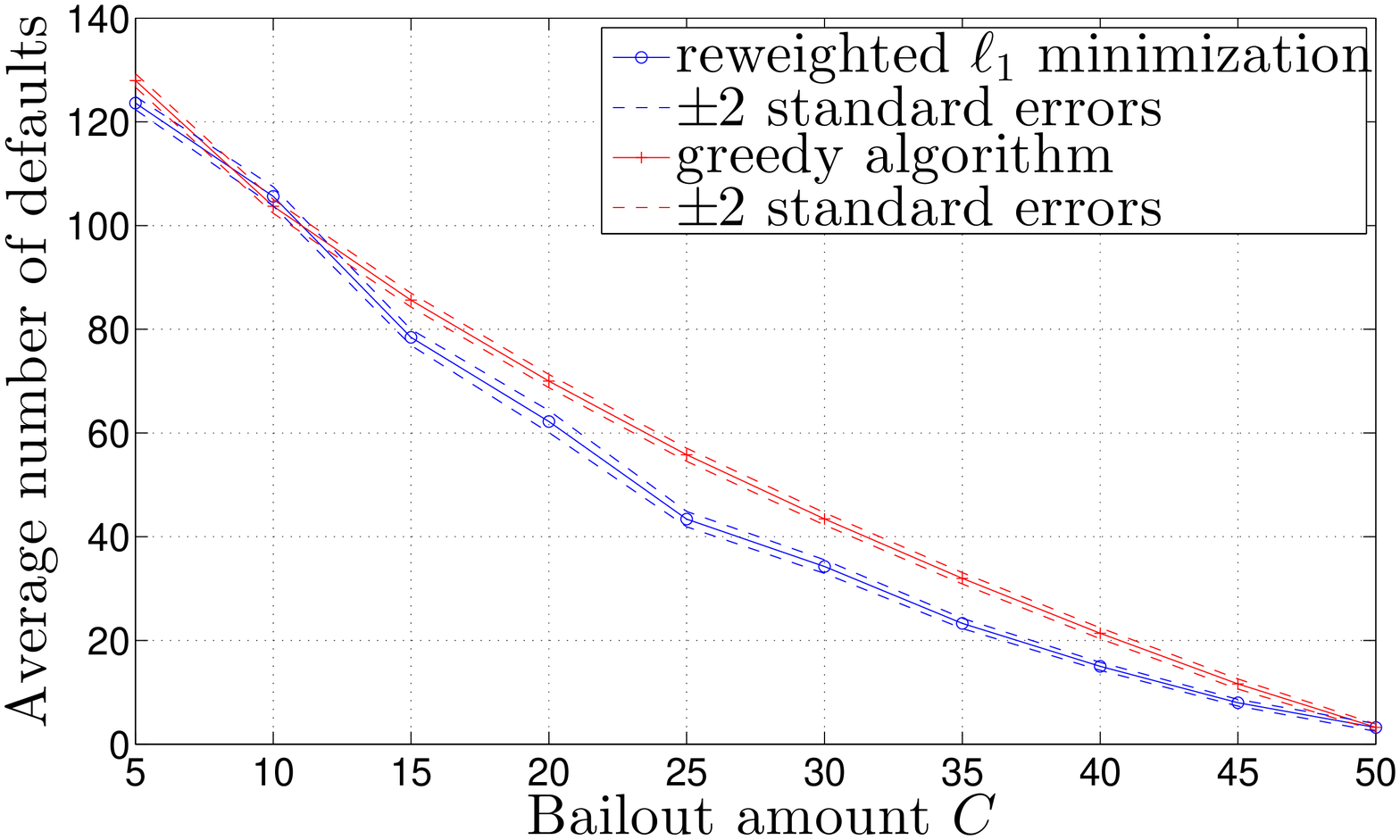}
    \caption{Two heuristic algorithms for minimizing the number of defaults: evaluation on
random core-periphery networks with long chains.} \label{fig:evaluation6}
\end{figure}

\subsection{Minimizing a Linear Combination of Weighted Unpaid Liabilities and Sum
of Weights over Defaulting Nodes} \label{subsec:comb}

We now investigate Problem~III which is a combination of Problem~I and Problem~II.
Instead of just minimizing the weighted sum of unpaid liabilities or the number
of defaulting nodes, we consider an objective function which is a linear combination
of the sum of weights over the defaulted nodes and the weighted sum of unpaid liabilities:
\[
D = {\bf w}^T(\bar{\bf p} - {\bf p}) + {\bf s}^T{\bf d},
\]
As defined in Table~\ref{tb:notation}, $d_i=\mathbb{I}_{\bar{p}_i-p_i>0}$ is a binary variable indicating whether node $i$ defaults, and $s_i$ is the weight of node $i$'s default.

Since $D$ is strictly decreasing with respect to $\bf p$, Lemma 4 in \cite{EiNo01} implies
that minimizing $D$ will yield a clearing payment vector. In light of this fact, we prove that minimizing $D$ subject to a fixed injected cash amount $C$ is equivalent to a mixed-integer linear program.

\begin{thm} \label{thm:MILP} 
Assume that the liabilities matrix $L$, the external asset vector ${\bf e}$, the weight vectors $\bf w > 0$ and $\bf s > 0$ and the total cash injection amount $C$ are fixed and known. Assume that the system utilizes the proportional payment mechanism with no bankruptcy costs. Define $\bf d$ as in Table~\ref{tb:notation}. Then the optimal cash allocation policy to minimize the cost function $D = {\bf w}^T(\bar{\bf p} - {\bf p}) + {\bf s}^T{\bf d}$ can be obtained by solving the following mixed-integer linear program:
\begin{align}
& \displaystyle \max_{{\bf p}, {\bf c}, {\bf d}} {\bf w}^T{\bf p}-{\bf s}^T{\bf d}\label{eq:MILP}\\
& \mbox{subject to } \nonumber\\
& \mathbf{1}^T{\bf c} \le C, \label{eq:MILP-c5}\\
& {\bf c} \geq \mathbf{0}, \label{eq:MILP-c6}\\
& \mathbf{0}\leq {\bf p} \leq \bar{\bf p}, \label{eq:MILP-c1}\\
& {\bf p} \leq \Pi^T{\bf p} + {\bf e} + {\bf c}, \label{eq:MILP-c2}\\
& \bar{p}_i-p_i \leq \bar{p}_i d_i \mbox{, for } i=1,2,\cdots,N, \label{eq:MILP-c3}\\
& d_i \in \{0,1\} \mbox{, for } i=1,2,\cdots,N. \label{eq:MILP-c4}
\end{align}
\end{thm}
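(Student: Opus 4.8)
The plan is to mirror the contradiction argument used for Theorem~\ref{thm:LP}, with the extra work concentrated on showing that the auxiliary integer variables $\mathbf{d}$ in constraints~(\ref{eq:MILP-c3}-\ref{eq:MILP-c4}), together with the sign condition $\mathbf{s}>\mathbf{0}$, correctly encode the default indicator of Table~\ref{tb:notation} at optimality. First I would record the trivial reduction: since $\bar{\mathbf{p}}$ is a fixed constant, minimizing $D=\mathbf{w}^T(\bar{\mathbf{p}}-\mathbf{p})+\mathbf{s}^T\mathbf{d}$ is the same as maximizing $\mathbf{w}^T\mathbf{p}-\mathbf{s}^T\mathbf{d}$, which is exactly the MILP objective~(\ref{eq:MILP}). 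Existence of an optimizer is immediate: the constraints on $(\mathbf{p},\mathbf{c})$ cut out a compact set and $\mathbf{d}$ ranges over the finite set $\{0,1\}^N$.

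Next I would analyze constraint~(\ref{eq:MILP-c3}) to pin down $\mathbf{d}$. If $p_i<\bar{p}_i$ then $\bar{p}_i-p_i>0$, and because $\bar{p}_i\geq\bar{p}_i-p_i>0$ the inequality $\bar{p}_i-p_i\leq\bar{p}_i d_i$ forces $d_i=1$; if instead $p_i=\bar{p}_i$ (including the degenerate case $\bar{p}_i=0$), the constraint reduces to $0\leq\bar{p}_i d_i$ and permits either value of $d_i$. Since $s_i>0$, maximizing $-\mathbf{s}^T\mathbf{d}$ always selects $d_i=0$ whenever that is feasible. Hence at any optimizer $d_i^\ast=\mathbb{I}_{\bar{p}_i-p_i^\ast>0}$, i.e. $\mathbf{d}^\ast$ is precisely the default-indicator vector associated with $\mathbf{p}^\ast$.

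The step I expect to be the crux is showing that, for each fixed $\mathbf{c}$, the optimal $\mathbf{p}$ is still the clearing payment vector despite the new $-\mathbf{s}^T\mathbf{d}$ term. The natural worry is a trade-off in which a smaller payment vector lowers $\mathbf{s}^T\mathbf{d}$; I would dispel it with a monotonicity argument. The feasible set $\{\mathbf{p}:\mathbf{0}\leq\mathbf{p}\leq\bar{\mathbf{p}},\ \mathbf{p}\leq\Pi^T\mathbf{p}+\mathbf{e}+\mathbf{c}\}$ is exactly the one appearing in Lemma~\ref{lem:uniqueness}, so by Lemma~4 and Theorem~1 of~\cite{EiNo01} it has a greatest element $\mathbf{p}^\ast$, the clearing payment vector. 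For any feasible $\mathbf{p}$ we have $\mathbf{p}\leq\mathbf{p}^\ast$ componentwise, whence $\mathbf{w}^T\mathbf{p}\leq\mathbf{w}^T\mathbf{p}^\ast$; moreover $p_i\leq p_i^\ast$ implies $\mathbb{I}_{p_i<\bar{p}_i}\geq\mathbb{I}_{p_i^\ast<\bar{p}_i}$, so the indicator at $\mathbf{p}^\ast$ is componentwise no larger than the forced value of $\mathbf{d}$ at any smaller $\mathbf{p}$, giving $\mathbf{s}^T\mathbf{d}\geq\mathbf{s}^T\mathbf{d}^\ast$. Combining the two inequalities, the objective at $(\mathbf{p}^\ast,\mathbf{d}^\ast)$ dominates that at any other feasible pair; equivalently $D$ is strictly decreasing in $\mathbf{p}$, so raising $\mathbf{p}$ to $\mathbf{p}^\ast$ simultaneously increases $\mathbf{w}^T\mathbf{p}$ and cannot create new defaults, and there is no trade-off to exploit.

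Finally I would close the equivalence exactly as in the proof of Theorem~\ref{thm:LP}. Let $(\mathbf{p}^\ast,\mathbf{c}^\ast,\mathbf{d}^\ast)$ solve the MILP; by the two preceding steps $\mathbf{p}^\ast$ is the clearing vector for $\mathbf{c}^\ast$ and $\mathbf{d}^\ast$ the associated default indicator, so $(\mathbf{p}^\ast,\mathbf{c}^\ast)$ is admissible for Problem~III with cost $D^\ast=\mathbf{w}^T\bar{\mathbf{p}}-(\mathbf{w}^T\mathbf{p}^\ast-\mathbf{s}^T\mathbf{d}^\ast)$. If some admissible allocation $\mathbf{c}'$ achieved strictly smaller $D$, its clearing vector $\mathbf{p}'$ and indicator $\mathbf{d}'$ would form a MILP-feasible triple with strictly larger objective $\mathbf{w}^T\mathbf{p}'-\mathbf{s}^T\mathbf{d}'$, contradicting optimality of $(\mathbf{p}^\ast,\mathbf{c}^\ast,\mathbf{d}^\ast)$. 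This shows that solving the MILP~(\ref{eq:MILP}-\ref{eq:MILP-c4}) yields an optimal cash allocation policy for the cost $D$.
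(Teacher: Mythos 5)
Your proof is correct, and its outer skeleton coincides with the paper's: you pin down $\mathbf{d}^\ast$ exactly as the paper does (constraints~(\ref{eq:MILP-c3})--(\ref{eq:MILP-c4}) force $d_i=1$ whenever $p_i<\bar p_i$, while $s_i>0$ forces $d_i=0$ otherwise), and you close with the same feasibility-plus-contradiction argument for the optimality of $\mathbf{c}^\ast$. Where you genuinely diverge is the crux step, showing that the optimal $\mathbf{p}^\ast$ is a clearing payment vector. The paper argues locally: if at some node $k$ neither $p^\ast_k=\bar p_k$ nor $p^\ast_k=\sum_{j}\Pi_{jk}p^\ast_j+e_k+c^\ast_k$ holds, it increases $p^\ast_k$ by a small $\epsilon>0$, verifies that all constraints---including~(\ref{eq:MILP-c3}) with $\mathbf{d}^\ast$ held fixed---remain satisfied, and obtains a strictly larger objective since $w_k>0$, a contradiction; this needs only nonnegativity of $\Pi$ and never has to confront the $-\mathbf{s}^T\mathbf{d}$ term at all, precisely because $\mathbf{d}^\ast$ is unchanged under the perturbation. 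You instead argue globally: the feasible region for $\mathbf{p}$ has a greatest element, namely the greatest clearing vector of Theorem~1 of~\cite{EiNo01}, and both pieces of the objective are monotone ($\mathbf{w}^T\mathbf{p}$ strictly increasing, the forced indicator $\mathbb{I}_{p_i<\bar p_i}$ nonincreasing in $p_i$), so the optimum sits exactly at that greatest element. This is in effect ``Method 1'' of Lemma~\ref{lem:uniqueness} transplanted to the MILP; its payoff is that it confronts and dispels head-on the one substantive worry (that a smaller $\mathbf{p}$ might win by producing fewer defaults), and it identifies $\mathbf{p}^\ast$ as \emph{the greatest} clearing vector rather than merely \emph{some} clearing vector, which ties the MILP solution cleanly to the clearing vector singled out by Lemma~\ref{lem:uniqueness}. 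The one point you should make explicit is why the greatest clearing vector dominates every \emph{feasible} $\mathbf{p}$, not just every clearing vector: Lemma~4 of~\cite{EiNo01} by itself only says that LP maximizers are clearing vectors, so you need either the componentwise-maximum closure of the feasible set (as in Method~2 of Lemma~\ref{lem:uniqueness}) or Tarski's characterization of the greatest fixed point as the supremum of the sub-fixed-point set; with that sentence added, your argument is complete.
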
 

\begin{proof}
Let (${\bf p}^\ast$, ${\bf c}^\ast$, ${\bf d}^\ast$) be a solution of the mixed-integer
linear program (\ref{eq:MILP}--\ref{eq:MILP-c4}).
We first show that ${\bf p}^\ast$ is a clearing payment vector, i.e., that for each $i$,
we have $p^\ast_i=\bar{p}_i$ or
$p^\ast_i=\displaystyle \sum_{j=1}^{N} \Pi_{ji}p^\ast_j + e_i+c_i$. Assume that this is
not the case for some node $k$, i.e., that $p^\ast_k<\bar{p}_k$ and
$p^\ast_k<\displaystyle \sum_{j=1}^{N} \Pi_{jk}p^\ast_j + e_k + c_k$. We construct
a vector ${\bf p}^\epsilon$ which is equal to ${\bf p}^\ast$ in all components except the
$k$-th component.  We set the $k$-th component of ${\bf p}^\epsilon$ to be
$p^\epsilon_k=p^\ast_k+\epsilon$, where $\epsilon>0$ is small enough to
ensure that $p^\epsilon_k<\bar{p}_k$ and
$p^\epsilon_k<\displaystyle \sum_{j=1}^{N} \Pi_{jk}p^\epsilon_j + e_k + c_k$.
Since $\Pi$ is a matrix with non-negative entries, for any $i \ne k$, we have:
\begin{equation}
p^\epsilon_i = p^\ast_i<\displaystyle \sum_{j=1}^{N} \Pi_{ji}p^\ast_j + e_i + c_i < \sum_{j=1}^{N} \Pi_{ji}p^\epsilon_j + e_i + c_i. \nonumber
\end{equation}
In addition, $\bar{p}_k-p^\epsilon_k < \bar{p}_k-p^\ast_k \le \bar{p}_k d_k$.
Thus, (${\bf p}^\epsilon$, ${\bf c}^\ast$, ${\bf d}^\ast$) is also in the feasible
region of (\ref{eq:MILP}--\ref{eq:MILP-c4}) and achieves a larger value of the objective
function
than (${\bf p}^\ast$, ${\bf c}^\ast$, ${\bf d}^\ast$). This contradicts the fact
that (${\bf p}^\ast$, ${\bf c}^\ast$, ${\bf d}^\ast$) is a solution of
(\ref{eq:MILP}--\ref{eq:MILP-c4}).
Hence, ${\bf p}^\ast$ is a clearing payment vector.

Second, we show that $d^\ast_i=\mathbb{I}_{\bar{p}_i-p^\ast_i>0}$.
If $\bar{p}_i-p^\ast_i>0$, then $d^\ast_i=1$ due to constraints~(\ref{eq:MILP-c3})
and~(\ref{eq:MILP-c4}). If $\bar{p}_i-p^\ast_i=0$, then constraint~(\ref{eq:MILP-c3})
is always true. In this case the fact that $s_i > 0$ implies that, in order to
maximize the objective function, $d^\ast_i$ must be zero.
Thus, $d^\ast_i=\mathbb{I}_{\bar{p}_i-p^\ast_i>0}$.

So far, we have proved that ${\bf p}^\ast$ and ${\bf d}^\ast$ are the clearing
payment vector and default indicator vector, respectively, for cash injection vector
${\bf c}^\ast$. We now prove by contradiction that ${\bf c}^\ast$ is the optimal cash
injection allocation. Assume ${\bf c}' \ne {\bf c}^\ast$ leads to a strictly smaller
value of the cost function $D$ than does ${\bf c}^\ast$. In other words, suppose that
${\bf c}'$ satisfies the constraints (\ref{eq:MILP-c5}) and (\ref{eq:MILP-c6}), and
that the corresponding clearing payment vector ${\bf p}'$ and default indicator
vector ${\bf d}'$ satisfy
${\bf w}^T(\bar{\bf p} - {\bf p}') + {\bf s}^T{\bf d}' < {\bf w}^T(\bar{\bf p} - {\bf p}^\ast) + {\bf s}^T{\bf d}^\ast$, which is equivalent to:
\begin{align}
{\bf w}^T{\bf p}'-{\bf s}^T{\bf d}' > {\bf w}^T{\bf p}^\ast-{\bf s}^T{\bf d}^\ast.
\nonumber
\end{align}
Since ${\bf p}'$ is the corresponding clearing payment vector, constraint (\ref{eq:MILP-c1}) and (\ref{eq:MILP-c2}) are satisfied. Moreover, ${\bf d}'$ is the corresponding default indicator vector satisfying constraint (\ref{eq:MILP-c3}) and (\ref{eq:MILP-c4}) for ${\bf c}'$. So (${\bf c}'$,${\bf p}'$,${\bf d}'$) is in the feasible region of
(\ref{eq:MILP}--\ref{eq:MILP-c4}) and achieves a larger objective function than (${\bf p}^\ast$, ${\bf c}^\ast$, ${\bf d}^\ast$), which contradicts the fact that (${\bf p}^\ast$, ${\bf c}^\ast$, ${\bf d}^\ast$) is the solution of (\ref{eq:MILP}--\ref{eq:MILP-c4}).
\end{proof}

\begin{figure}
    \centering
    \includegraphics[width=0.5\textwidth]{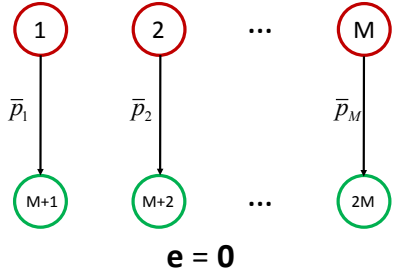}
    \caption{Financial network used in the proof of Theorem~\ref{thm:bankruptcy_costs_NP}. For this network,
  Problem~I under the all-or-nothing payment mechanism is a knapsack problem.} \label{fig:np}
\end{figure}

\section{All-or-Nothing Payment Mechanism}
\label{sec:linear_bankruptcy_costs}
We now show that under the all-or-nothing payment mechanism, Problem~I is NP-hard.
Despite this fact, we show through simulations that for network sizes comparable
to the size of the US banking system, this problem can be solved in a few seconds
on a personal computer using modern optimization software.

\begin{thm} \label{thm:bankruptcy_costs_NP}
With the all-or-nothing payment mechanism, Problem~I can be reduced to a knapsack problem,
which means that Problem~I is NP-hard.
\end{thm}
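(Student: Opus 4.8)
The plan is to exhibit, for an arbitrary instance of the $0$--$1$ knapsack problem, a financial network on which solving Problem~I under the all-or-nothing mechanism is exactly equivalent to solving that knapsack instance; since $0$--$1$ knapsack is NP-hard, this shows that knapsack is a special case of Problem~I and hence that Problem~I is NP-hard. The network I would use is the one depicted in Fig.~\ref{fig:np}: take $N-1$ pure-debtor nodes $1,\dots,N-1$, each owing $\bar p_i$ only to a single common sink node $N$, with $e_i=0$ for every node and no edges pointing into the debtor nodes. The sink has no liabilities (so $\bar p_N=0$) and therefore contributes nothing to the cost.

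The first step is to characterize solvency on this network. Because debtor node $i$ has no external assets and receives no incoming payments (we have $\Pi_{ji}=0$ for all $j$ whenever $i\le N-1$), its total funds equal exactly $c_i$. Under the all-or-nothing rule it is solvent and pays $\bar p_i$ in full precisely when $c_i\ge\bar p_i$; otherwise it pays nothing and contributes its entire liability $w_i\bar p_i$ to the cost $W={\bf w}^T(\bar{\bf p}-{\bf p})$. Since the debtors do not lend to one another, rescuing one node never alters the funds available to any other node, so the solvency of the nodes is fully decoupled and each node's state depends only on its own injection.

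The second step is to argue that an optimal allocation assigns to each debtor node either $0$ or exactly $\bar p_i$. Injecting strictly less than $\bar p_i$ leaves the node in default and thus wastes budget, while injecting strictly more than $\bar p_i$ rescues the node but squanders the excess; in either case the budget can be reallocated without increasing $W$. Hence the continuous variables ${\bf c}$ collapse to binary rescue decisions $x_i\in\{0,1\}$ with $c_i=x_i\bar p_i$, the budget constraint becomes $\sum_i x_i\bar p_i\le C$, and minimizing $W$ is the same as maximizing the total cost reduction $\sum_i x_i\,w_i\bar p_i$ subject to that constraint. This is precisely $0$--$1$ knapsack with item weight $\bar p_i$ and item value $w_i\bar p_i$.

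To complete the reduction, given any knapsack instance with positive weights $\omega_i$, values $\nu_i$, and capacity $C$, I would set $\bar p_i=\omega_i$ (equivalently, choose the liability $L_{iN}=\omega_i$) and $w_i=\nu_i/\omega_i$, so that $w_i\bar p_i=\nu_i$; this construction is clearly polynomial in the input size. I expect the only delicate point to be the rigorous justification of the ``exactly $\bar p_i$ or $0$'' reduction in the second step, and in particular the verification that the decoupling genuinely holds---that there is no feedback through the term $\Pi^T{\bf p}$ among the debtor nodes---so that the problem separates into independent per-node choices rather than a nontrivial network-clearing computation. Once that separation is established, the equivalence to knapsack, and hence the NP-hardness of Problem~I under the all-or-nothing mechanism, follows immediately.
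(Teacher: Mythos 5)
Your proof is correct, and the core gadget is the same as the paper's: a network of asset-free debtor nodes with no incoming edges, so that clearing decouples, node $i$ is solvent iff $c_i\ge\bar p_i$, optimal injections are binary, and Problem~I collapses to a $0$--$1$ knapsack. Three differences are worth noting. First, topology: the paper uses $M$ disjoint debtor--creditor pairs ($L_{i,M+i}=\bar p_i$, $N=2M$) rather than your common sink; this is immaterial, since in both constructions $\Pi_{ji}=0$ for every debtor $i$, which is exactly the ``feedback'' issue you flagged as delicate, and it holds trivially. Second, and more substantively, the paper fixes ${\bf w}={\bf 1}$, so its reduction only produces knapsack instances in which each item's value equals its weight, i.e., subset-sum instances; your choice $w_i=\nu_i/\omega_i$ encodes arbitrary values, so you embed \emph{general} $0$--$1$ knapsack into Problem~I. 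Third, your framing runs the reduction in the logically correct direction for NP-hardness: you exhibit a known NP-hard problem as a special case of Problem~I, whereas the theorem's phrase ``Problem~I can be reduced to a knapsack problem,'' read literally, is the wrong direction (reducing a problem \emph{to} an NP-hard one proves nothing about its hardness); the paper's proof content, like yours, actually establishes the special-case embedding. Both arguments are valid---subset-sum is itself NP-hard---but your version is more general and its logic is stated more cleanly.
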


\begin{proof}
Consider the network depicted in Fig.~\ref{fig:np}.  The network has $N=2M$ nodes
where $M$ is a positive integer.  We let $L_{i,M+i}=\bar{p}_i$ for $i=1,2,\cdots,M$;
for all other pairs $(i,j)$, we set $L_{ij}=0$. We set the external asset vector to zero:
${\bf e}=\mathbf{0}$. We set all the weights to 1: ${\bf w} = {\bf 1}$.
We let $x_i$ be the rescue indicator variable for node $i$, i.e., $x_i = 0$ if $i$ is
in default and $x_i=1$ if $i$ is fully rescued, for $i=1,\cdots,M$.

Note that under the all-or-nothing payment mechanism, fully rescuing node $i$ for any
$i=1,\cdots,M$ in Fig.~\ref{fig:np} means injecting $c_i = \bar{p}_i$.  On the other hand, injecting any
other nonzero amount $c_i < \bar{p}_i$ is wasteful, as it does not reduce the total
amount of unpaid liabilities in the system.  Therefore, for each defaulting node $i$
we have $x_i = 0$, $c_i = 0$, and $p_i = 0$, and for each rescued node $i$ we have
$x_i = 1$, $c_i = \bar{p}_i$, and $p_i = \bar{p}_i$.
The reduction in the total amount of unpaid obligations due to the cash injection is
\[
\sum_{i=1}^M x_i\bar{p}_i.
\]
We must select ${\bf x}$ to maximize this amount, subject to the budget constraint 
$\displaystyle \sum_{i=1}^M x_i \bar{p}_i \leq C$
that says that the total amount of cash injection spent on fully rescued nodes must not exceed $C$:
\begin{align}
& \displaystyle \max_{\bf x} \sum_{i=1}^{M} x_i \bar{p}_i\label{eq:Knapsack}\\
& \mbox{subject to } \nonumber\\
& \sum_{i=1}^{M} x_i \bar{p}_i \le C, \nonumber\\
& x_i \in \{0,1\} \mbox{, for } i=1,2,\cdots,M. \nonumber
\end{align}
If any cash remains, it can be arbitrarily allocated among the remaining nodes or not spent
at all, because partially rescuing a node does not lead to any improvement of the objective function.
Program~(\ref{eq:Knapsack}) is a {\em knapsack problem},
a well-known NP-hard problem. Thus,
Problem~I under the all-or-nothing payment mechanism for the network of Fig.~\ref{fig:np},
which can be reduced to~(\ref{eq:Knapsack}), is an NP-hard problem.
\end{proof}

We now establish a mixed-integer linear program to solve Problem~I with the all-or-nothing payment mechanism.

\begin{thm} \label{thm:MILP2}
Assume that the liabilities matrix $L$, the external asset vector ${\bf e}$, the weight vector
${\bf w}>\mathbf{0}$ and the total cash injection amount $C$ are fixed and known.
Assume the all-or-nothing payment mechanism.
Then Problem~I is equivalent to the following mixed-integer linear program:
\begin{align}
& \displaystyle \max_{{\bf p}, {\bf c}, {\bf d}} {\bf w}^T{\bf p}\label{eq:MILP2}\\
& \mbox{subject to } \nonumber\\
& \mathbf{1}^T{\bf c} \le C, \label{eq:MILP2-c1}\\
& {\bf c} \geq \mathbf{0}, \label{eq:MILP2-c2}\\
& p_i = \bar{p}_i(1-d_i) \mbox{, for } i=1,2,\cdots,N, \label{eq:MILP2-c3}\\
& \bar{p}_i-\sum_{j=1}^{N}\Pi_{ji}p_j-e_i-c_i \le \bar{p}_i d_i \mbox{, for } i=1,2,\cdots,N, \label{eq:MILP2-c4}\\
& d_i \in \{0,1\} \mbox{, for } i=1,2,\cdots,N. \label{eq:MILP2-c5}
\end{align}
\end{thm}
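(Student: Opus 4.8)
The plan is to follow the two-part template already used in the proofs of Theorems~\ref{thm:LP} and~\ref{thm:MILP}. Let $({\bf p}^\ast,{\bf c}^\ast,{\bf d}^\ast)$ be an optimal solution of the program~(\ref{eq:MILP2}--\ref{eq:MILP2-c5}), which exists because the feasible set is a nonempty bounded mixed-integer set and the objective is linear. I would first argue that ${\bf p}^\ast$ is precisely the clearing payment vector produced by the all-or-nothing mechanism for the injection ${\bf c}^\ast$, and then show by contradiction that ${\bf c}^\ast$ is an optimal cash allocation for Problem~I. Constraint~(\ref{eq:MILP2-c3}) together with~(\ref{eq:MILP2-c5}) immediately forces $p^\ast_i\in\{0,\bar p_i\}$, so every node either pays in full or pays nothing, matching the all-or-nothing rule; what remains is to verify that the choice between these two alternatives is the one dictated by the node's total funds $\sum_{j}\Pi_{ji}p^\ast_j+e_i+c^\ast_i$.

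For the solvent case this is immediate: if $d^\ast_i=0$ then~(\ref{eq:MILP2-c3}) gives $p^\ast_i=\bar p_i$ and~(\ref{eq:MILP2-c4}) reads $\sum_j\Pi_{ji}p^\ast_j+e_i+c^\ast_i\ge\bar p_i$, so the node indeed has enough funds to pay in full. The delicate case --- and the main obstacle --- is a node with $d^\ast_i=1$ (hence $p^\ast_i=0$), because for $d_i=1$ constraint~(\ref{eq:MILP2-c4}) degenerates to the vacuous inequality $\sum_j\Pi_{ji}p^\ast_j+e_i+c^\ast_i\ge 0$ and therefore does not by itself force the node's funds to fall short of $\bar p_i$. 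To rule out such a spurious default I would use an exchange argument that exploits optimality and the nonnegativity of $\Pi$. Suppose some node $k$ has $d^\ast_k=1$, $p^\ast_k=0$, yet $\sum_j\Pi_{jk}p^\ast_j+e_k+c^\ast_k\ge\bar p_k$. Flip it by setting $d_k=0$ and $p_k=\bar p_k$, leaving all other coordinates unchanged. Then~(\ref{eq:MILP2-c4}) for node $k$ now requires exactly the funds inequality we assumed (and raising $p_k$ can only increase node $k$'s own incoming sum through a possible self-loop); raising $p_k$ can only increase the incoming sum $\sum_j\Pi_{ji}p_j$ of every other node $i$, so the left-hand side of~(\ref{eq:MILP2-c4}) for each such $i$ only decreases and those constraints remain satisfied; and~(\ref{eq:MILP2-c3}),~(\ref{eq:MILP2-c1}),~(\ref{eq:MILP2-c2}) are preserved. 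The objective increases by $w_k\bar p_k>0$ (the case $\bar p_k=0$ being trivial, since such a node never defaults), contradicting optimality. Hence every node with $d^\ast_i=1$ genuinely has funds below $\bar p_i$, and ${\bf p}^\ast$ satisfies the all-or-nothing clearing conditions for ${\bf c}^\ast$.

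Finally, I would establish optimality of the allocation ${\bf c}^\ast$ by the same contradiction argument used for Theorems~\ref{thm:LP} and~\ref{thm:MILP}. Suppose some injection ${\bf c}'$ with $\mathbf{1}^T{\bf c}'\le C$, ${\bf c}'\ge\mathbf{0}$ yields an all-or-nothing clearing vector ${\bf p}'$ with ${\bf w}^T{\bf p}'>{\bf w}^T{\bf p}^\ast$ (equivalently, a smaller $W$). Letting $d'_i=\mathbb{I}_{\bar p_i-p'_i>0}$, the two clearing conditions for ${\bf p}'$ translate directly into constraints~(\ref{eq:MILP2-c3}) and~(\ref{eq:MILP2-c4}), while the budget conditions give~(\ref{eq:MILP2-c1}) and~(\ref{eq:MILP2-c2}); thus $({\bf p}',{\bf c}',{\bf d}')$ is feasible for~(\ref{eq:MILP2}--\ref{eq:MILP2-c5}) and attains a strictly larger objective than $({\bf p}^\ast,{\bf c}^\ast,{\bf d}^\ast)$, a contradiction. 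Since minimizing $W={\bf w}^T(\bar{\bf p}-{\bf p})$ is the same as maximizing ${\bf w}^T{\bf p}$, this proves that ${\bf c}^\ast$ solves Problem~I and that the two problems share the same optimal value. The one point I would flag is that the all-or-nothing clearing vector need not be unique for a given ${\bf c}$; this causes no difficulty, because both the MILP and Problem~I implicitly select, among admissible clearing vectors, the one maximizing ${\bf w}^T{\bf p}$, so the comparison in the contradiction step is between like quantities.
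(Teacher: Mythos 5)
Your proposal is correct and follows essentially the same two-part structure as the paper's own proof: first showing that ${\bf p}^\ast$ from the MILP solution is the all-or-nothing clearing vector for ${\bf c}^\ast$ (splitting on whether a node's funds cover $\bar p_i$), then deriving a contradiction from any strictly better feasible allocation ${\bf c}'$. The only differences are presentational: your explicit flip-one-node exchange argument (using nonnegativity of $\Pi$ to preserve the other nodes' constraints) fills in what the paper asserts in one terse sentence, and your closing remark on non-uniqueness of the all-or-nothing clearing vector is a legitimate caveat the paper leaves implicit.
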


\begin{proof}
Let (${\bf p}^\ast,{\bf c}^\ast,{\bf d}^\ast$) be a solution of the mixed-integer
linear program (\ref{eq:MILP2}--\ref{eq:MILP2-c5}). We first show that ${\bf p}^\ast$
is the clearing payment vector corresponding to ${\bf c}^\ast$. For node $i$,
if $\bar{p}_i>\displaystyle \sum_{j=1}^{N}\Pi_{ji}p^\ast_j+e_i+c_i$, then from
constraints~(\ref{eq:MILP2-c4}) and~(\ref{eq:MILP2-c5}) it follows that
$d^\ast_i=1$ so that $p^\ast_i=0$.
If $\bar{p}_i\le\displaystyle \sum_{j=1}^{N}\Pi_{ji}p^\ast_j+e_i+c_i$,
then constraint~(\ref{eq:MILP2-c4}) is satisfied for both $d_i=0$ and $d_i=1$.
In this case, in order to maximize the objective function, it must be that
$d^\ast=0$ and $p^\ast_i=\bar{p}_i$.
This completes the proof that ${\bf p}^\ast$ is the clearing payment vector
corresponding to ${\bf c}^\ast$ under the all-or-nothing payment mechanism.

Second, we prove by contradiction that ${\bf c}^\ast$ is the optimal allocation.
Assume that ${\bf c}'$ leads to a smaller weighted sum of unpaid liabilities,
or equivalently, a larger value of ${\bf w}^T{\bf p}'$, where ${\bf p}'$ is
the clearing payment vector corresponding to ${\bf c}'$. Since ${\bf p}'$ is
a clearing payment vector, we have that if $\bar{p}_i>\displaystyle \sum_{j=1}^{N}\Pi_{ji}p'_j+e_i+c'_i$
then $p'_i=0$;
and if $\bar{p}_i\le \displaystyle \sum_{j=1}^{N}\Pi_{ji}p'_j+e_i+c'_i$ then
$p'_i=\bar{p}_i$. We define vector ${\bf d}'$ as $d'_i=0$ for $p'_i=\bar{p}_i$
and $d'_i=1$ otherwise. Then (${\bf p}',{\bf c}',{\bf d}'$) is located in
the feasible region of MILP (\ref{eq:MILP2}--\ref{eq:MILP2-c5}) but leads to
a larger value of the objective function than (${\bf p}^\ast,{\bf c}^\ast,{\bf d}^\ast$).
This contradicts the fact that (${\bf p}^\ast,{\bf c}^\ast,{\bf d}^\ast$)
is a solution of (\ref{eq:MILP2}--\ref{eq:MILP2-c5}).
\end{proof}

\begin{figure}
    \centering
    \includegraphics[width=0.7\textwidth]{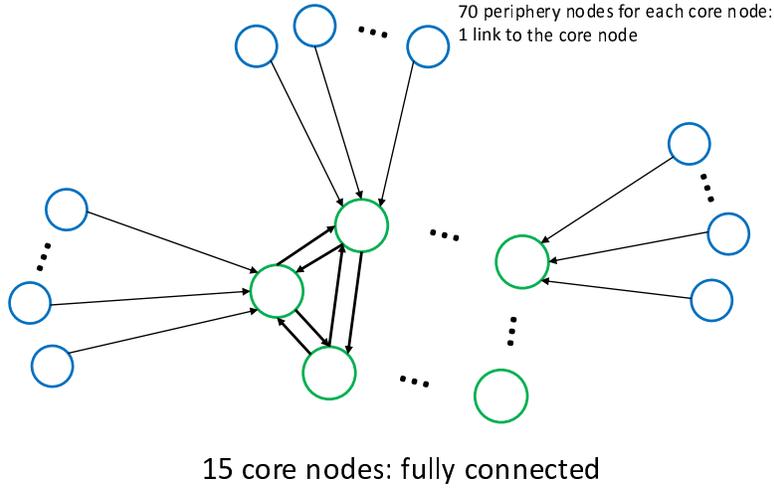}
    \caption{A core-periphery network.} \label{fig:cp-sim}
\end{figure}

\subsection{Numerical Simulations}
To solve MILP~(\ref{eq:MILP2}), we use CVX, a package for specifying and solving 
convex programs~\cite{cvx,GrBo08}. A variety of prior literature, e.g.~\cite{SoBeArGlBe07}, 
suggests that the US interbank network is well modeled as a core-periphery network
that consists of a core of about 15 highly interconnected banks to which most other banks 
connect. Therefore, we test the running time on a core-periphery network
shown in Fig.~\ref{fig:cp-sim}. It contains 15 fully connected core nodes.  
Each core node has 70 periphery nodes. Each periphery node has a single link pointing to 
the corresponding core node. Every node has zero external assets: ${\bf e}=\mathbf{0}$.
All the obligation amounts $L_{i,j}$ are independent uniform random variables.  For each pair of core
nodes $i$ and $j$ the obligation amount $L_{ij}$ is uniformly distributed in $[0,10]$. For a core node $i$
and its periphery node $k$, the obligation amount $L_{ki}$ is uniformly distributed in $[0,1]$.
For a core node $i$, we set the weight $w_i=10$; for a periphery node $k$, we set the weight $w_k=1$.
For this core-periphery network, we generate 100 samples. We run the CVX code on a personal computer
with a 2.66GHz Intel Core2 Duo Processor P8800. The average running time is $1.9s$ and 
the sample standard deviation is $2.0s$. The relative gap between the objective of the solution 
and the optimal objective is less than $10^{-4}$. (This bound is obtained by calculating the optimal
value of the objective for the corresponding linear program, which is an upper bound for the optimal
objective value of the MILP.) We can see that for the core-periphery network, 
MILP~(\ref{eq:MILP2}) can be solved by CVX efficiently and accurately. The
CVX code is given in Appendix~\ref{app:cvx}.

\section{Random Capital Model} \label{sec:random-e}
In the previous sections, we assume that the external asset vector $\bf e$ is
a deterministic vector known by the regulator. However, some applications,
such as stress testing, require forecasting and planning for a wide variety of different contingencies. 
Such applications call for the use of stochastic models for the nodes' asset amounts.
In this case, we aim to solve a stochastic version of Problem~I where $\bf e$ is modeled
as a random vector.  It is assumed that we are able to efficiently obtain independent
samples of this vector.  The remaining parameters---$\bar{\bf p}$, $\Pi$ and $\bf w$---are still
assumed to be deterministic and known, and are defined as in the previous section.
According to Lemma~\ref{lem:uniqueness}, the clearing payment vector that minimizes
the weighted sum of unpaid liabilities is a function of $\bf e$ and $\bf c$, which we denote
as ${\bf p}^\ast({\bf e},{\bf c})$. If $\bf e$ is a random vector, so is
${\bf p}^\ast({\bf e},{\bf c})$. We use $W^\ast({\bf e},{\bf c})$ to denote the corresponding
minimum value of the weighted sum of unpaid liabilities.  If $\bf e$ is a random vector,
then $W^\ast({\bf e},{\bf c})$ is a random variable. Given a total amount of cash $C$, our aim
is to find the optimal cash allocation strategy $\bf c$ to minimize the expectation of
the weighted sum of unpaid liabilities. This can be formulated as the following two-stage stochastic LP:
\begin{align}
& \displaystyle \min_{\bf c} \mathbb{E}_{\bf
e}[W^\ast({\bf e},{\bf c})] \label{eq:SLP}
\\
& \mbox{subject to } \nonumber\\
& \mathbf{1}^T{\bf c} \le C, \nonumber\\
& {\bf c} \geq \mathbf{0}, \nonumber
\end{align}
where
\begin{align}
W^\ast({\bf e},{\bf c})= \mbox{ }& \min_{\bf p} \mathbf {w}^T(\bar{\bf p}-{\bf p}) \label{eq:W}\\
& \mbox{subject to } \nonumber\\
& \mathbf{0}\leq {\bf p} \leq \bar{\bf p}, \nonumber\\
& {\bf p} \leq \Pi^T{\bf p} + {\bf e} + {\bf c}. \nonumber
\end{align}

Even if the joint distribution of ${\bf e}$ is known, the distributions of
${\bf p}^\ast({\bf e},{\bf c})$ and $W^\ast({\bf e},{\bf c})$ cannot be computed in closed form.
In order to solve (\ref{eq:SLP}), we take $M$ independent samples of the asset vector,
denoted as ${\bf e}^1, {\bf e}^2, \cdots, {\bf e}^M$, and use
$\frac{1}{M} \displaystyle \sum_{m=1}^{M} W^\ast({\bf e}^m,{\bf c})$ to approximate
$\mathbb{E}_{\bf e}[W^\ast({\bf e},{\bf c})]$. By the weak law of large numbers, when $M$
is large enough, the sample average is a good estimate of $\mathbb{E}_{\bf e}[W^\ast({\bf e},{\bf c})]$.
This motivates approximating Eq.~(\ref{eq:SLP}) as follows:
\begin{align}
& \displaystyle \min_{\bf c} \frac{1}{M} \displaystyle \sum_{m=1}^{M} W^\ast({\bf e}^m,{\bf c}) \label{eq:Appox-SLP}
\\
& \mbox{subject to } \nonumber\\
& \mathbf{1}^T{\bf c} \le C, \nonumber\\
& {\bf c} \geq \mathbf{0}. \nonumber
\end{align} 
Similar to Theorem~\ref{thm:LP}, the optimization problems~(\ref{eq:W}) and~(\ref{eq:Appox-SLP})
can be combined into one single LP:
\begin{align}
& \displaystyle \max_{{\bf c},{\bf p}^m} \sum_{m=1}^{M}\mathbf{w}^T {\bf p}^m \label{eq:LP4SLP} \\
& \mbox{subject to } \nonumber\\
& \mathbf{1}^T{\bf c} \le C, \nonumber\\
& {\bf c} \geq \mathbf{0},\nonumber\\
& \mathbf{0} \le {\bf p}^m \le \bar{\bf p} \mbox{, for } m=1,2,\cdots,M, \nonumber\\
& {\bf p}^m \le \Pi^T {\bf p}^m + {\bf e}^m + {\bf c} \mbox{, for } m=1,2,\cdots,M. \nonumber
\end{align} 
Since ${\bf c}$ and ${\bf p}^m$ for $m=1,2,\cdots,M$ are all $N$-dimensional vectors,
LP~(\ref{eq:LP4SLP}) contains $MN+N$ variables. The computational complexity of solving an
LP with $MN+N$ variables is $O((MN+N)^3)$~\cite{Ye91}. To achieve a high accuracy, $M$ needs to be
a large number. Then the computational burden is large if we want to solve
LP~(\ref{eq:LP4SLP}) directly. The memory complexity, which is $O(MN^2)$, may also be prohibitive
for large $M$ and $N$. Hence, efficient algorithms to solve LP (\ref{eq:LP4SLP}) are needed.

\subsection{Benders Decomposition} \label{subsec:benders}
If the cash injection
vector $\bf c$ is fixed, then the LP~(\ref{eq:LP4SLP}) can be split into $M$ smaller independent
LPs---one for each sample ${\bf e}^m$---each of which can be solved independently for each ${\bf p}^m$.
In this case, instead of solving an LP with $MN$ variables, we solve $M$ LPs with $N$ variables each,
which significantly reduces the computational complexity. Inspired by this idea, we apply Benders
decomposition to the LP~(\ref{eq:LP4SLP}). Benders decomposition, which is described
in~\cite{Be62,VaWe69,In92}, makes a partition of $\bf c$ and ${\bf p}^m$ ($m=1,2,\cdots,M$)
and allows us to find ${\bf p}^m$ iteratively with fixed $\bf c$ in each step. In fact, for our
problem, Benders decomposition can be further simplified due to some special properties
of~(\ref{eq:LP4SLP}).

From the proof of Lemma~\ref{lem:uniqueness}, we know that for any fixed $\bf c$, the feasible
region of ${\bf p}^m$ is non-empty. Thus, (\ref{eq:LP4SLP}) is equivalent to the following problem:
\begin{align}
& \displaystyle \max_{\bf c} V({\bf c}) \label{eq:bender-main} \\
& \mbox{subject to } \nonumber\\
& \mathbf{1}^T{\bf c} \le C, \nonumber\\
& {\bf c} \geq \mathbf{0}.\nonumber
\end{align}
where
\begin{align}
V({\bf c})= & \displaystyle \max_{{\bf p}^m} \sum_{m=1}^{M}\mathbf{w}^T {\bf p}^m \label{eq:bender-sub} \\
& \mbox{subject to } \nonumber\\
& {\bf p}^m \ge \mathbf{0} \mbox{, for } m=1,2,\cdots,M, \nonumber\\
& {\bf p}^m \le \bar{\bf p} \mbox{, for } m=1,2,\cdots,M, \label{eq:benders-sub_constr1}\\
& {\bf p}^m \le \Pi^T {\bf p}^m + {\bf e}^m + {\bf c} \mbox{, for } m=1,2,\cdots,M.
\label{eq:benders-sub_constr2}
\end{align}
We let ${\boldsymbol \mu}^1,\cdots,{\boldsymbol \mu}^M$ be the dual variables for the $M$
constraints~(\ref{eq:benders-sub_constr1}), and we let 
${\boldsymbol \nu}^1,\cdots,{\boldsymbol \nu}^M$ be the dual variables for the $M$
constraints~(\ref{eq:benders-sub_constr2}).
Then $V({\bf c})$ can be obtained from the following dual problem:
\begin{align}
V({\bf c})= & \displaystyle \min_{{\boldsymbol \mu}^m,{\boldsymbol \nu}^m} \sum_{m=1}^{M}
\left[\bar{\bf p}^T{\boldsymbol \mu}^m + {\bf e}^{mT}{\boldsymbol \nu}^m +
{\bf c}^T{\boldsymbol \nu}^m \right]
\label{eq:bender-sub-dual} \\
& \mbox{subject to } \nonumber\\
& {\boldsymbol \mu}^m \ge \mathbf{0} \mbox{, for } m=1,2,\cdots,M, \nonumber\\
& {\boldsymbol \nu}^m \ge \mathbf{0} \mbox{, for } m=1,2,\cdots,M, \nonumber\\
& {\boldsymbol \nu}^m \ge \Pi {\boldsymbol \nu}^m + \mathbf{w} - {\boldsymbol \mu}^m
 \mbox{, for } m=1,2,\cdots,M. \nonumber
\end{align}
Note that, since $V({\bf c})$ minimizes the objective function of LP~(\ref{eq:bender-sub-dual}) subject to the
constraints of LP~(\ref{eq:bender-sub-dual}), we have that $V({\bf c})$ is the greatest lower bound of this
objective function, subject to these constraints. Therefore, LP~(\ref{eq:bender-main})
is equivalently rewritten as the maximization of the lower bound to the objective function of LP~(\ref{eq:bender-sub-dual}),
subject to the constraints of both LP~(\ref{eq:bender-main}) and LP~(\ref{eq:bender-sub-dual}):
\begin{align}
& \displaystyle \max_{{\bf c},\theta} \theta \label{eq:bender-master} \\
& \mbox{subject to } \nonumber\\
& \mathbf{1}^T{\bf c} \le C, \nonumber \\
& {\bf c} \geq \mathbf{0}, \nonumber \\
& \theta \le \sum_{m=1}^{M} \left[ \bar{\bf p}^T{\boldsymbol \mu}^m + {\bf e}^{mT}{\boldsymbol \nu}^m +
{\bf c}^T{\boldsymbol \nu}^m\right]
\label{eq:bender-master-contraint1} \\
& \mbox{for all } ({\boldsymbol \mu}^m,{\boldsymbol \nu}^m)
\mbox{ in the feasible region of (\ref{eq:bender-sub-dual}). } \nonumber
\end{align}
LP~(\ref{eq:bender-master}), the equivalent version of~(\ref{eq:LP4SLP}), has an infinite number of constraints
in the form of~(\ref{eq:bender-master-contraint1}), because constraint~(\ref{eq:bender-master-contraint1}) must
be satisfied by every pair $({\boldsymbol \mu}^m,{\boldsymbol \nu}^m)$ from the feasible region of
LP~(\ref{eq:bender-sub-dual}). The key idea is solving a relaxed version of~(\ref{eq:bender-master})
by ignoring all but a few of the constraints~(\ref{eq:bender-master-contraint1}).
Assume the optimal solution of this relaxed program is $({\bf c}^\ast,\theta^\ast)$.
If the solution satisfies all the ignored constraints, the optimal solution has been found;
otherwise, we generate a new constraint by solving (\ref{eq:bender-sub-dual}) with fixed
${\bf c}={\bf c}^\ast$ and add it to the relaxed problem.
Here is the summary of the Benders decomposition algorithm:
\begin{enumerate}
\item 
Initialization: set $\theta^0 \leftarrow -\infty$, $K\leftarrow0$, ${\bf c}^0 \leftarrow \mathbf{0}$, $l\leftarrow0$.

\item
Fix ${\bf c}^l$, solve the following $M$ sub-programs for $m=1,2,\cdots,M$:
\begin{align}
V^m= & \displaystyle \min_{{\boldsymbol \mu}^m,{\boldsymbol \nu}^m} \sum_{m=1}^{M}
\left[\bar{\bf p}^T{\boldsymbol \mu}^m + {\bf e}^{mT}{\boldsymbol \nu}^m +
{\bf c}^{lT}{\boldsymbol \nu}^m \right]
\nonumber \\
& \mbox{subject to } \nonumber\\
& {\boldsymbol \mu}^m \ge \mathbf{0} \mbox{, for } m=1,2,\cdots,M, \nonumber\\
& {\boldsymbol \nu}^m \ge \mathbf{0} \mbox{, for } m=1,2,\cdots,M, \nonumber\\
& {\boldsymbol \nu}^m \ge \Pi {\boldsymbol \nu}^m + \mathbf{w} - {\boldsymbol \mu}^m
 \mbox{, for } m=1,2,\cdots,M. \nonumber
\end{align}
Denote the solution as $({\boldsymbol \mu}^{m\ast},{\boldsymbol \nu}^{m\ast})$, for $m=1,2,\cdots,M$.

\item
If $\displaystyle \sum_{m=1}^{M}V^m = \theta^l$, terminate and ${\bf c}^l$ is the optimal.

\item
Set $l\leftarrow l+1$, $K\leftarrow K+1$, $({\boldsymbol \mu}^m(K),{\boldsymbol \nu}^m(K)) \leftarrow
 ({\boldsymbol \mu}^{m\ast},{\boldsymbol \nu}^{m\ast})$, 
for $m=1,2,\cdots,M$.

\item
Solve the following master problem:
\begin{align}
& \displaystyle \max_{{\bf c},\theta} \theta \label{eq:bender-master2} \\
& \mbox{subject to } \nonumber\\
& \mathbf{1}^T{\bf c} \le C, \nonumber \\
& {\bf c} \geq \mathbf{0}, \nonumber \\
& \theta \le \sum_{m=1}^{M} \left[ \bar{\bf p}^T{\boldsymbol \mu}(k)^m 
+ {\bf e}^{mT}{\boldsymbol \nu}(k)^m +{\bf c}^T{\boldsymbol \nu}(k)^m\right]
\nonumber \\
& \mbox{for } k=1,2,\cdots,K. \nonumber
\end{align} 
Denote the solution as $({\bf c}^\ast, \theta^\ast)$. Set $\theta^l \leftarrow \theta^\ast$, ${\bf c}^l \leftarrow {\bf c}^\ast$. Then go to Step 2.
\end{enumerate}

In this algorithm, at each iteration, we solve $M+1$ LPs with $N$ variables instead of one LP with $MN+N$ variables,
which saves both computational complexity and memory cost. Note that, comparing to the general form of Benders
decomposition (Section 2.3 in~\cite{VaWe69}), the above algorithm is simpler. Since LP~(\ref{eq:bender-sub}) is
always feasible and bounded, it is not necessary to consider constraint~(22b) in~\cite{VaWe69}. Step~(2') in~\cite{VaWe69}
can also be removed since~(\ref{eq:bender-master2}) is always bounded.

\subsection{Projected Stochastic Gradient Descent} \label{subsec:stochastic-gd}

In this section, we introduce the projected stochastic gradient descent method to solve~(\ref{eq:SLP}).
This is an online algorithm, which allows us to handle one sample at a time, without building a huge linear
program. The basic idea is that for each sample ${\bf e}^m$, we move the solution $\bf c$ along
the direction of the negative gradient of $W^\ast({\bf e}^m,{\bf c})$ with respect to $\bf c$ and
then project the result onto the set defined by the constraints of~(\ref{eq:SLP}). This procedure will converge to
the optimal solution if the step size is selected properly~\cite{Bo98}. 
The algorithm proceeds as follows. At iteration $m$,
\begin{enumerate}
\item
Sample an asset vector ${\bf e}^m$.
\item
Move $\bf c$ along the negative gradient of $W^\ast({\bf e}^m,{\bf c}^{m-1})$ 
according to the following equation:
\begin{equation} \label{eq:sgd}
\tilde{\bf c}^m = {\bf c}^{m-1}-\gamma^m \nabla_{\bf c} W^\ast({\bf e}^m,{\bf c}^{m-1}).
\end{equation}
\item
Set ${\bf c}^{m}$ as the projection of $\tilde{\bf c}^m$ onto the set 
$\{{\bf c}:\,\,\,\mathbf{1}^T{\bf c}=C,\,\,\, {\bf c} \geq \mathbf{0}\}$.
\end{enumerate}

According to \cite{Bo98}, step size $\gamma^m$ should satisfy the condition that 
$\displaystyle \sum_{m=1}^{\infty} (\gamma^m)^2 < \infty$ and 
$\displaystyle \sum_{m=1}^{\infty} \gamma^m = \infty$. 
Thus, a proper choice could be $\gamma^m=1/m$.

Note that $W^\ast({\bf e}^m,{\bf c}^{m-1}) = {\bf w}^T\bar{\bf p} - U({\bf e}^m,{\bf c}^{m-1})$, where
\begin{align}
U({\bf e}^m,{\bf c}^{m-1}) = \mbox{ }& \max_{\bf p} \mathbf {w}^T{\bf p} \label{eq:U}\\
& \mbox{subject to } \nonumber\\
& \mathbf{0}\leq {\bf p} \leq \bar{\bf p}, \nonumber\\
& {\bf p} \leq \Pi^T{\bf p} + {\bf e}^m + {\bf c}^{m-1}. \nonumber
\end{align}

To obtain the gradient of $W^\ast({\bf e}^m,{\bf c}^{m-1})$ in Step 2, 
we consider the dual problem of LP (\ref{eq:U}):

\begin{align}
U({\bf e}^m,{\bf c}^{m-1}) = & \displaystyle \min_{{\boldsymbol \mu},{\boldsymbol \nu}} 
\left[\bar{\bf p}^T{\boldsymbol \mu} + {\bf e}^{mT}{\boldsymbol \nu} +
{\bf c}^{(m-1)T}{\boldsymbol \nu} \right]
\label{eq:gd-dual} \\
& \mbox{subject to } \nonumber\\
& {\boldsymbol \mu} \ge \mathbf{0}, \nonumber\\
& {\boldsymbol \nu} \ge \mathbf{0}, \nonumber\\
& {\boldsymbol \nu} \ge \Pi {\boldsymbol \nu} + \mathbf{w} - {\boldsymbol \mu}. \nonumber
\end{align}

Assuming that $({\boldsymbol \mu}^\ast,{\boldsymbol \nu}^\ast)$ is a solution of~(\ref{eq:gd-dual}), 
we have $\nabla_{\bf c} W^\ast({\bf e}^m,{\bf c}^{m-1}) = 
-\nabla_{\bf c} U({\bf e}^m,{\bf c}^{m-1}) = -{\boldsymbol \nu}^\ast$.

In Step 3, we find the projection of $\tilde{\bf c}^m$ using the following quadratic program:

\begin{align}
{\bf c}^m = & \displaystyle \arg\min_{\bf c} \Vert{\bf c}-\tilde{\bf c}^m\Vert_2^2 \label{eq:quadprog} \\
& \mbox{subject to } \nonumber\\
& \mathbf{1}^T{\bf c} = C, \nonumber\\
& {\bf c} \geq \mathbf{0}.\nonumber
\end{align}

A fast algorithm for quadratic program (\ref{eq:quadprog}) is given in Appendix~\ref{app:quadprog}.

Thus, at each iteration in this projected stochastic gradient descent method, instead of solving
LP~(\ref{eq:LP4SLP}) 
which contains $MN+N$ variables, we solve one $N$-variable LP and one $N$-variable quadratic program. 
This algorithm is memory efficient because it requires no storage except the current solution of $\bf c$.

\section{Distributed Algorithms for Problem I with Proportional Payment Mechanism}
\label{sec:DA}

We showed in Theorem~\ref{thm:LP} that Problem~I without bankruptcy costs is equivalent to a linear program, and therefore can be solved exactly, for any network topology, using standard LP solvers. In some scenarios, however, this approach may be impractical or undesirable, as it requires the solver to know the entire network structure, namely, the net external assets of every institution, as well as the amounts owed by each institution to each other institution. We now adapt our framework to applications where it is necessary to avoid centralized data gathering and computation. We propose a distributed algorithm to solve our linear program. The algorithm is iterative and is based on message passing between each node and its neighbors. During each iteration of the algorithm, each node only needs to receive a small amount of data from its neighbors, perform simple calculations, and transmit a small amount of data to its neighbors.
During the message passing, no node will reveal to any other node any proprietary information on its asset values, the amounts owed to other nodes, or the amounts owed by other nodes.

Our algorithm can be used both to monitor financial networks and to simulate stress-testing scenarios. The integrity of the process can be enforced by the supervisory authorities through auditing.

While the algorithm is slower than standard centralized LP solvers, simulations suggest its practicality for the US banking system which we model as a core-periphery network with 15 core nodes and 1050 periphery nodes.

\subsection{Problem~I}
\label{subsec:DALP1}

\subsubsection{A Distributed Algorithm}

To develop a distributed algorithm for LP~(\ref{eq:LP}-\ref{eq:LP-constraint2}), we formulate its dual problem and solve it via gradient descent because the dual problem has simpler constraints which are easily decomposable. It turns out that every iteration of the gradient descent involves only local computations, which enables a distributed implementation. 

In order to apply the gradient descent method to the dual problem, we need the objective function in~(\ref{eq:LP}) to be strictly concave, which would guarantee that the dual problem is differentiable at any point~\cite{Fl95}. However, the objective function of LP~(\ref{eq:LP}) is not strictly concave and so we apply the Proximal Optimization Algorithm~\cite{LiNe06,BeTs89}. The basic idea is to add quadratic terms to the objective function. The quadratic terms will converge to zero so that we make the objective function strictly concave without changing the optimal solution. 

We introduce two $N \times 1$ vectors $\bf y$ and $\bf z$ and add two quadratic terms 
\begin{equation}
\Vert {\bf p}-{\bf y}\Vert^2 = \sum_{i=1}^{N} (p_i-y_i)^2 \mbox{, } \qquad
\Vert {\bf c}-{\bf z}\Vert^2 = \sum_{i=1}^{N} (c_i-z_i)^2 \nonumber
\end{equation}
to (\ref{eq:LP}). Then we proceed as follows.\\
{\bf Algorithm} $\mathcal{P}$:\\
At the $t$-th iteration,
\begin{itemize}
\item
{\bf P1)} Fix ${\bf y}={\bf y}(t)$ and ${\bf z}={\bf z}(t)$ and maximize the objective function with respect to $\bf p$ and $\bf c$:
\begin{align}
& \displaystyle \max_{{\bf p}, {\bf c}} {\bf w}^T{\bf p} - \Vert {\bf p}-{\bf y}\Vert^2 - \Vert {\bf c}-{\bf z}\Vert^2 \label{eq:LPQ}\\
& \mbox{subject to } \nonumber\\
& \mathbf{1}^T{\bf c} \le C, \label{eq:LPQ-constraint1}\\
& {\bf c} \geq \mathbf{0}, \nonumber\\
& \mathbf{0}\leq {\bf p} \leq \bar{\bf p}, \nonumber\\
& {\bf p} \leq \Pi^T{\bf p} + {\bf e} + {\bf c}. \label{eq:LPQ-constraint2}
\end{align}

Note that since the objective function is strictly concave, a unique solution exists. Denote it as~${\textbf {p}^\ast}$ and~${\textbf {c}^\ast}$.

\item
{\bf P2)} Set ${\bf y}(t+1)={\textbf {p}^\ast}$, ${\bf z}(t+1)={\textbf {c}^\ast}$.
\end{itemize}

It is proved in Proposition 4.1 in \cite{BeTs89} that algorithm $\mathcal{P}$ will converge to the optimal solution of LP (\ref{eq:LP}-\ref{eq:LP-constraint2}).

\subsubsection{Implementation of Algorithm $\cal{P}$}

In Step {\bf P1}, for fixed $\bf y$ and $\bf z$, the objective function of~(\ref{eq:LPQ}) is strictly concave so that the dual problem is differentiable at any point~\cite{Fl95}. Hence, we can solve the dual problem using gradient descent, as follows.

Let a scalar $\lambda$ and an $N \times 1$ vector ${\bf q}$ be Lagrange multipliers for constraints~(\ref{eq:LPQ-constraint1}) and~(\ref{eq:LPQ-constraint2}), respectively. We define the Lagrangian as follows:
\begin{equation}
L({\bf p},{\bf c},\lambda,{\bf q},{\bf y},{\bf z}) = 
{\bf w}^T{\bf p}-\lambda(\mathbf{1}^T{\bf c}-C)-{\bf q}^T((I_{N\times N}-\Pi^T){\bf p}-{\bf e}-{\bf c})-\Vert{\bf p}-{\bf y}\Vert^2-\Vert{\bf c}-{\bf z}\Vert^2
\label{eq:Lagrangian}
\end{equation}
where $\lambda$ and $\bf q$ are non-negative and $I_{N\times N}$ is an $N\times N$ identity matrix.
We further expand~(\ref{eq:Lagrangian}):
\begin{align}
L({\bf p},{\bf c},\lambda,{\bf q},{\bf y},{\bf z}) \nonumber&=
\displaystyle
\sum_{i=1}^{N}w_ip_i-\lambda\sum_{i=1}^{N}c_i+\lambda C-\sum_{i=1}^{N}q_i(p_i-\sum_{j=1}^{N}\Pi_{ji}p_j-e_i-c_i)\nonumber
\\&\qquad -\sum_{i=1}^N{(p_i-y_i)^2}-\sum_{i=1}^N{(c_i-z_i)^2}\label{eq:lagragian1}\\
&= - \sum_{i=1}^{N}\left[p_i^2-(w_i-q_i+2y_i+\sum_{j=1}^{N}q_j\Pi_{ij})p_i \right]\nonumber
\\&\qquad - \sum_{i=1}^{N} \left[c_i^2-(q_i-\lambda+2z_i)c_i \right] + \sum_{i=1}^{N} \left[q_ie_i-y_i^2-z_i^2 \right] + \lambda C  \label{eq:lagragian2}
\end{align}
To obtain Eq. (\ref{eq:lagragian2}) from Eq. (\ref{eq:lagragian1}),
we use the following equation: $\displaystyle \sum_{i=1}^N\sum_{j=1}^N {q_i\Pi_{ji}p_j}=\sum_{i=1}^N\sum_{j=1}^N {q_j\Pi_{ij}p_i}$.
Then the objective function of the dual problem is:
\begin{equation}
D(\lambda,\mathbf{q},\mathbf{y},\mathbf{z})=\displaystyle
\max_{\mathbf{0}\le\mathbf{p}\le\mathbf{\bar{p}},\mathbf{c}\ge\mathbf{0}}
L(\mathbf{p},\mathbf{c},\lambda,\mathbf{q},\mathbf{y},\mathbf{z})
\label{eq:dualobj}
\end{equation}

In Eq.~(\ref{eq:lagragian2}), the term $q_j\Pi_{ij}$ is 0 if node $i$ is not a borrower of $j$. Thus, if node $i$ receives all the $q_j$ from its lenders, it can determine $p_i$ and $c_i$ to achieve the maximum of the Lagrangian $L(\mathbf{p},\mathbf{c},\lambda,\mathbf{q},\mathbf{y},\mathbf{z})$.

Given $\bf y$ and $\bf z$, the dual problem of~(\ref{eq:LPQ}) is then minimizing~(\ref{eq:dualobj}) over Lagrange multipliers $\lambda$ and $\bf q$:
\begin{equation}
\min_{\lambda \ge 0, \bf{q} \ge \mathbf{0}}D(\lambda,\mathbf{q},\mathbf{y},\mathbf{z}) \label{eq:dual}
\end{equation}

The dual problem is differentiable at any point since the objective function of the primal is strictly concave~\cite{Fl95}. Hence, gradient descent iterations can be applied to solve the dual.

At iteration $u$, $\lambda=\lambda(u)$ and ${\bf q} = {\bf q}(u)$. Then the gradients of $D$ with respect to $\lambda$ and ${\bf q}$ at this point are:
\begin{equation}
\frac{\partial D}{\partial \lambda} = C - \mathbf{1}^T{\bf c}(u), \nonumber
\end{equation}

\begin{equation}
\frac{\partial D}{\partial {\bf q}} = {\bf e} + {\bf c}(u) - (I_{N\times N}-\Pi^T){\bf p}(u), \nonumber
\end{equation}
where ${\bf p}(u) $ and ${\bf c}(u)$ solve (\ref{eq:dualobj}) for $\lambda=\lambda(u)$ and ${\bf q} = {\bf q}(u)$:
\begin{equation}
\label{eq:update-p}
 p_i(u) = \left\{
\begin{array}{cl}
0 & \text{if } \tilde{p}_i(u) < 0\\
\bar{p}_i & \text{if } \tilde{p}_i(u)>\bar{p}_i\\
\tilde{p}_i(u) & \text{otherwise}
\end{array} \right.
\end{equation}
where
$\tilde{p}_i(u)=y_i(t)+\frac{1}{2}(w_i-q_i(u)+\displaystyle \sum_{j \in \mathcal{C}_i} q_j(u)\Pi_{ij})$, and
\begin{equation}
\label{eq:update-c} c_i(u)=\left[z_i(t)+\frac{1}{2}(q_i(u)-\lambda(u)) \right]^+.
\end{equation}
Therefore, taking into account the non-negativity of $\lambda$ and $\bf q$, the gradient descent equations are:
\begin{equation}
\lambda(u+1)=\left[\lambda(u)-\alpha(C - \sum_{i=1}^{N}c_i(u)) \right]^+, \label{eq:update-lambda}
\end{equation}
\begin{equation}
q_i(u+1)=\left[q_i(u)-\beta(e_i+c_i(u)-p_i(u)+\sum_{j=1}^{N}\Pi_{ji}p_j(u)) \right]^+, \mbox{ for } i=1,2,\cdots,N \label{eq:update-q}
\end{equation}
where $\alpha$ and $\beta$ are the step sizes, and $[x]^+=\max\{0,x\}$. For fixed $\bf y$ and $\bf z$, the dual update will converge to the minimizer of $D$ as $u \rightarrow \infty$, if the step size is small enough~\cite{BeTs89}.

From~(\ref{eq:update-lambda}), we notice that in order to update $\lambda$, $c_i$ is required from all the $N$ nodes. It means at each iteration $t$, each node should send $c_i(u)$ to a central node so that the central node could update $\lambda$ and send it back to every node in the system.

If node $j$ is not a borrower of node $i$, then $\Pi_{ji}p_j(u)=0$; otherwise, $\Pi_{ji}p_j(u)$ represents
the amount of money that node $j$ pays to node $i$ at $u$-th iteration. Hence, with the information of $\Pi_{ji}p_j(u)$ from all its borrowers, node $i$ is able to update $q_i$ based on~(\ref{eq:update-q}).

\begin{figure}
    \centering
    \includegraphics[width=\textwidth]{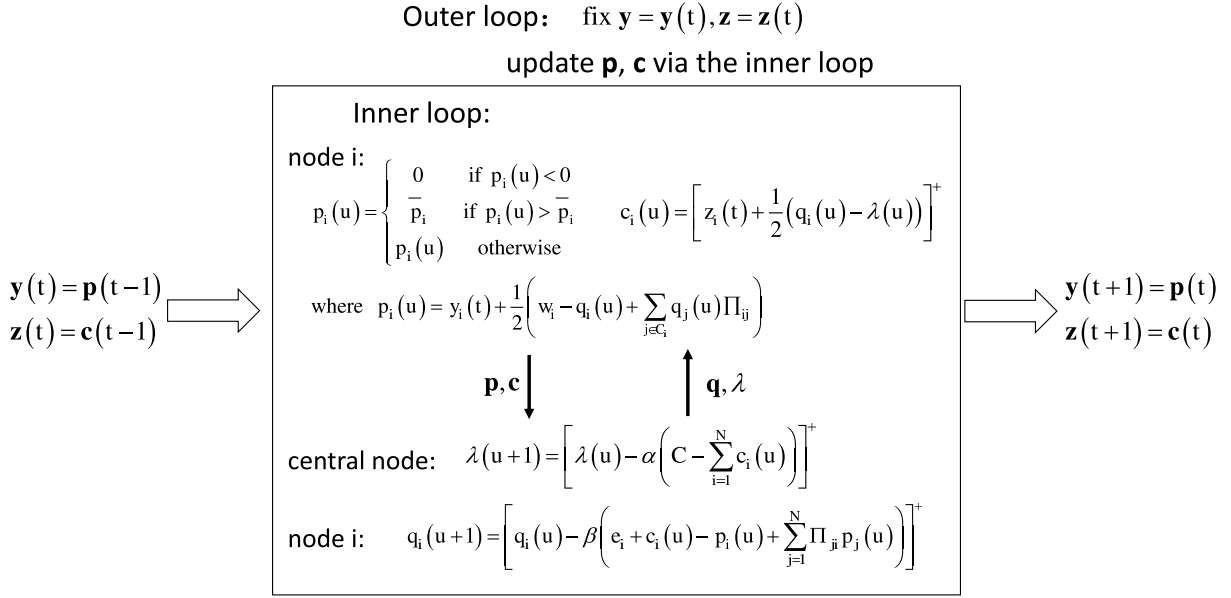}
    \caption{Duality-based approach.}
\label{fig:dual}
\end{figure}


\subsubsection{A More Efficient Algorithm}
As shown in Fig.~\ref{fig:dual}, in the algorithm, we first fix $\bf y$ and $\bf z$ and solve~(\ref{eq:LPQ})
by updating $\lambda$, $\bf q$ and $\bf p$, $\bf c$ iteratively until they converge.
Then we update $\bf y$ and $\bf z$. This is a two-stage iteration, which is likely to slow down
the convergence of the entire algorithm as too many dual updates are wasted for each
fixed $\bf y$ and $\bf z$~\cite{LiNe06}. 
To avoid the two-stage iteration structure, we consider the following algorithm.\\
{\bf Algorithm} $\mathcal{A}$:\\
At the $t$-th iteration,
\begin{itemize}
\item
{\bf A1)} Fix ${\bf y}={\bf y}(t)$ and ${\bf z}={\bf z}(t)$,
maximize $L$ with respect to $\textbf{p}$ and $\textbf{c}$,
\begin{equation}
\left[\textbf{p}(t),\textbf{c}(t) \right]=\displaystyle
\arg \max_{\mathbf{0}\le\mathbf{p}\le\mathbf{\bar{p}},\mathbf{c}\ge\mathbf{0}}
L(\textbf{p},\textbf{c},\lambda(t), \textbf{q}(t),\textbf{y}(t),\textbf{z}(t)).
\nonumber
\end{equation}
\item
{\bf A2)} Update Lagrange multipliers $\lambda(t+1)$ and $\textbf{q}(t+1)$ by 
\begin{equation}
\lambda(t+1)=\left[\lambda(t)-\alpha\left(C - \sum_{i=1}^{N}c_i(t)\right) \right]^+, \label{eq:update-lambda-A}
\end{equation}
\begin{equation}
q_i(t+1)=\left[q_i(t)-\beta(e_i+c_i(t)-p_i(t)+\sum_{j=1}^{N}\Pi_{ji}p_j(t)) \right]^+, \mbox{ for } i=1,2,\cdots,N \label{eq:update-q-A}
\end{equation}
\item
{\bf A3)} Update $\textbf{y}$ and $\textbf{z}$ with
\begin{equation}
\left[{\textbf y}(t+1), {\textbf z}(t+1) \right] =\displaystyle
\arg \max_{\mathbf{0}\le\mathbf{p}\le\mathbf{\bar{p}},\mathbf{c}\ge\mathbf{0}}
L(\textbf{p},\textbf{c},\lambda(t+1), \textbf{q}(t+1),\textbf{y}(t),\textbf{z}(t)).\nonumber
\end{equation}
\end{itemize}
In algorithm $\mathcal{A}$, instead of an infinite number of dual updates, we only update
Lagrange multipliers $\lambda$ and $\bf q$ once for each fixed $\bf y$ and $\bf z$.
The following theorem guarantees the convergence of algorithm $\mathcal{A}$.

\begin{thm} \label{thm:disalg}
Algorithm $\mathcal{A}$ will converge to the optimal solution of LP (\ref{eq:LP}-\ref{eq:LP-constraint2})
provided the step sizes $\alpha$ and $\beta$ are sufficiently small.
\end{thm}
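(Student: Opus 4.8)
The plan is to recognize Algorithm $\mathcal{A}$ as a \emph{single-timescale} coupling of the proximal optimization iteration underlying Algorithm $\mathcal{P}$ with a one-step dual gradient ascent, and to establish convergence through a Lyapunov (distance-to-optimum) argument. The point is that one cannot invoke the proximal point theorem (Proposition~4.1 in~\cite{BeTs89}) verbatim, since that result governs the \emph{exact} proximal step of $\mathcal{P}$, in which the dual is driven to optimality before the centers $\mathbf{y},\mathbf{z}$ are updated; here we take only a single dual step per recentering, and that discrepancy must be controlled.

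First I would record the saddle structure. Since LP~(\ref{eq:LP}-\ref{eq:LP-constraint2}) is feasible and bounded (as in Theorem~\ref{thm:LP}), strong duality holds and there exists a KKT point $(\mathbf{p}^\ast,\mathbf{c}^\ast,\lambda^\ast,\mathbf{q}^\ast)$ that is a saddle point of the Lagrangian. Next I would exploit the explicit dependence on the data. For fixed centers $(\mathbf{y},\mathbf{z})$, the regularized Lagrangian $L$ in~(\ref{eq:Lagrangian}) is strongly concave in $(\mathbf{p},\mathbf{c})$ with modulus $2$ (the coefficient on each squared proximal term is $1$), so Step~A1 has the closed-form unique maximizer~(\ref{eq:update-p})-(\ref{eq:update-c}), and the dual objective $D(\lambda,\mathbf{q},\mathbf{y},\mathbf{z})$ of~(\ref{eq:dualobj}) is convex and continuously differentiable with gradient $\bigl(C-\mathbf{1}^T\mathbf{c}(t),\ \mathbf{e}+\mathbf{c}(t)-(I-\Pi^T)\mathbf{p}(t)\bigr)$ by the envelope theorem. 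Thus Step~A2 is exactly one projected-gradient step on $D$, and Step~A3 is the proximal recentering.

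The core estimate would introduce a Lyapunov function combining the scaled dual gap and the primal proximal gap,
\[
V(t) = \tfrac{1}{\alpha}(\lambda(t)-\lambda^\ast)^2 + \tfrac{1}{\beta}\|\mathbf{q}(t)-\mathbf{q}^\ast\|^2 + \|\mathbf{y}(t)-\mathbf{p}^\ast\|^2 + \|\mathbf{z}(t)-\mathbf{c}^\ast\|^2,
\]
and I would show $V(t+1)\le V(t)$ with a deficit bounded below by a positive multiple of the squared iterate increments. The ingredients are: nonexpansiveness of the projection $[\cdot]^+$ onto the nonnegative orthant; the strong concavity modulus $2$ of $L$ in $(\mathbf{p},\mathbf{c})$, which yields a quadratic lower bound tying the A1/A3 maximizers to the centers and multipliers and bounds the Lipschitz constant of $\nabla D$ in terms of $\|I-\Pi^T\|$; and the saddle inequalities at $(\mathbf{p}^\ast,\mathbf{c}^\ast,\lambda^\ast,\mathbf{q}^\ast)$. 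Requiring $\alpha,\beta$ below the threshold set by these Lipschitz constants is precisely what turns the combined iteration into a strict contraction of $V$ off the fixed set. A monotone, bounded $V$ converges; the deficits force all increments to zero; and any limit point is a fixed point of $\mathcal{A}$, which by the closed forms~(\ref{eq:update-p})-(\ref{eq:update-c}) and the stationarity of the projected dual step satisfies the KKT conditions of LP~(\ref{eq:LP}-\ref{eq:LP-constraint2}), whose primal part is the optimal clearing-payment and cash-injection pair.

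The hard part will be the coupling step, that is, controlling the interaction between the outer proximal recentering (A3) and the single inner dual step (A2). Because the dual is not solved to optimality as in $\mathcal{P}$, the residual dual suboptimality must be absorbed by the proximal regularization, and I expect the quadratic terms to be indispensable: their strong concavity bounds the sensitivity of the A1/A3 maximizers to perturbations of $(\lambda,\mathbf{q})$, and it is this bound, together with small $\alpha,\beta$, that keeps the single dual step from destabilizing the proximal contraction. Concretely I would verify that Algorithm $\mathcal{A}$ meets the hypotheses of the single-timescale convergence theorem of~\cite{LiNe06} and cite it, carrying out the explicit Lyapunov computation sketched above only where its assumptions need checking.
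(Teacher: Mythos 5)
Your proposal is correct and takes essentially the same route as the paper: the paper's entire proof is a one-line appeal to the single-timescale proximal/dual-gradient convergence result of Lin and Shroff (stating that Theorem~\ref{thm:disalg} is an extension of Proposition~4 in~\cite{LiNe04}), and your plan likewise reduces the claim to that line of work's theorem (via~\cite{LiNe06}) after checking hypotheses, with your Lyapunov sketch being precisely the machinery inside that proposition's proof. Your observation that Proposition~4.1 of~\cite{BeTs89} cannot be invoked verbatim---because Algorithm $\mathcal{A}$ takes only one dual step per recentering rather than solving the dual to optimality---correctly identifies the exact gap that forces the paper to cite the Lin--Shroff result instead of the pure proximal-point theorem.
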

Theorem~\ref{thm:disalg} is an extension of Proposition~4 in~\cite{LiNe04}. 

\begin{figure}
    \centering
    \includegraphics[width=0.45\textwidth]{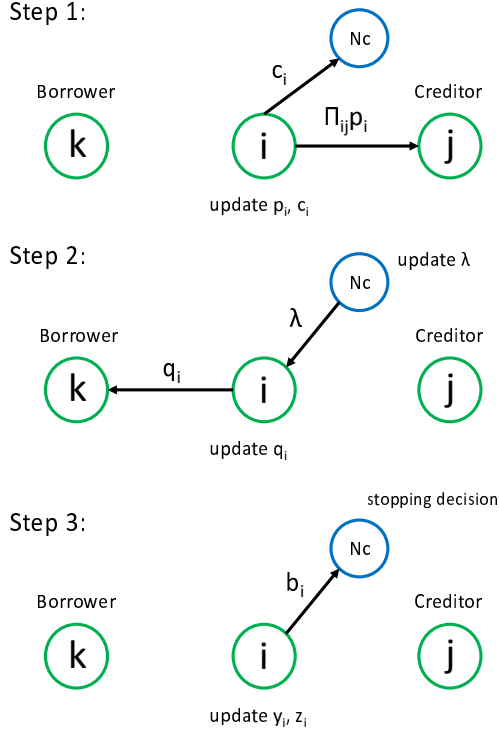}
    \caption{The $t$-th iteration of the distributed algorithm $\cal{A}$ for a fixed maximum total amount of injected cash.}
\label{fig:algo-bdd}
\end{figure}

\subsubsection{Implementation of Algorithm $\cal{A}$} \label{subsubsec:Prob1A}
Assume $\mathcal{B}_i$ and $\mathcal{C}_i$ are the sets of borrowers and creditors of node $i$
respectively. Then the $t$-th iteration of algorithm $\mathcal{A}$ is as follows.\\
\begin{enumerate}
\item
For each node $i$, fix $y_i=y_i(t)$, $z_i=z_i(t)$, $\lambda=\lambda(t)$ and ${\bf q}={\bf q}(t)$, and calculate $p_i$
and $c_i$:
\begin{equation}
\nonumber
 p_i(t) = \left\{
\begin{array}{cl}
0 & \text{if } \tilde{p}_i(t) < 0\\
\bar{p}_i & \text{if } \tilde{p}_i(t)>\bar{p}_i\\
\tilde{p}_i(t) & \text{otherwise,}
\end{array} \right.
\end{equation}
where
$\tilde{p}_i(t)=y_i(t)+\frac{1}{2}\left(w_i-q_i(t)+\displaystyle \sum_{j \in \mathcal{C}_i} q_j(t)\Pi_{ij}\right)$, and
\begin{equation}
\nonumber c_i(t)=\left[z_i(t)+\frac{1}{2}(q_i(t)-\lambda(t)) \right]^+.
\end{equation}
Then send $\Pi_{ij}p_i(t)$ to every node $j\in\mathcal{C}_i$, and send the updated $c_i(t)$ to node $N_c$.
\item
Each node $i$ receives $\Pi_{ki}p_k(t)$ from every $k\in\mathcal{B}_i$ and updates $q_i$:
\begin{equation}\nonumber
q_i(t+1)=\left[q_i(t)+\beta(p_i(t)-e_i-c_i(t)-\displaystyle \sum_{k\in \mathcal{B}_i}\Pi_{ki}p_k(t)) \right]^+.\nonumber
\end{equation}
Then each node $i$ sends the updated $q_i(t+1)$ to every node $k\in\mathcal{B}_i$. 

Node $N_c$ receives $c_i$ from all nodes $i$ and updates $\lambda$:
\begin{equation}
\nonumber \lambda(t+1)=\left[\displaystyle \lambda(t) + \alpha\left(\sum_{i=1}^N c_i(t)-C\right) \right]^+.
\end{equation}
Then node $N_c$ send the updated $\lambda(t+1)$ to every node $i$.

\item
Every node $i$ receives $q_j(t+1)$ from each $j\in\mathcal{C}_i$ and receives $\lambda(t+1)$ from node $N_c$,
then updates $y_i$ and $z_i$:
\begin{equation}
\nonumber
 y_i(t+1) = \left\{
\begin{array}{cl}
0 & \text{if } \tilde{y}_i(t+1) < 0\\
\bar{p}_i & \text{if } \tilde{y}_i(t+1)>\bar{p}_i\\
\tilde{y}_i(t+1) & \text{o.w. }
\end{array} \right.
\end{equation}
where $\tilde{y}_i(t+1)=y_i(t)+\frac{1}{2}(w_i-q_i(t+1)+\displaystyle \sum_{j \in \mathcal{C}_i} q_j(t+1)\Pi_{ij})$, and
\begin{equation}
\nonumber z_i(t+1)=\left[\tilde{z}_i(t+1) \right]^+,
\end{equation}
where $\tilde{z}_i(t+1)=z_i(t)+\frac{1}{2}(q_i(t+1)-\lambda(t+1))$.

Every node $i$ then checks the conditions
$|\tilde{y}_i(t+1)-\tilde{y}_i(t)|<\delta_1$ and $|\tilde{z}_i(t+1)-\tilde{z}_i(t)|<\delta_2$. If both conditions hold,
node $i$ sets $b_i=1$; otherwise it sets $b_i=0$. It then sends $b_i$ to the central node $N_c$.
If $b_i=1$ for all $i$, then $N_c$ directs all nodes to terminate the algorithm.
\end{enumerate}
These steps are illustrated in Fig.~\ref{fig:algo-bdd}.

In Step 3, $\delta_1$ and $\delta_2$ are the stopping tolerances, which are usually set as small positive
numbers according to the accuracy requirement. We utilize $\tilde{\bf y}$ and $\tilde{\bf z}$ rather than
their projections ${\bf y}$ and ${\bf z}$ in the stopping criterion because the convergence of
$\tilde{\bf y}$ and $\tilde{\bf z}$ implies the convergence of the Lagrange multipliers
${\bf q}$ and $\lambda$, whereas the convergence of ${\bf y}$ and ${\bf z}$ does not.

In the implementation of algorithm $\cal{A}$, we include a central node. At each iteration
 the central node has two functions. One is to sum the $c_i(t)$ and calculate $\lambda(t+1)$ in Step 2;
the other is to test whether $b_i=1$ for all nodes $i$ in Step 3. For both functions, the central
node only collects a small amount of data and performs simple calculations. We could entirely
exclude the central node by calculating the sum of $c_i(t)$ and communicating the stopping sign
in a distributed way, at the cost of added computational burden during each iteration.

\subsection{Lagrange Formulation of Problem~I} \label{subsec:L4Prob1}
We now apply the duality-based distributed algorithm to LP~(\ref{eq:LP2}), the Lagrange formulation
of Problem~I. Note that now $\lambda$ represents the importance of the injected cash amount in the overall
cost function.  The algorithm is similar to Section~\ref{subsec:DALP1} except for the fact that
$\lambda$ is not updated at each iteration because $\lambda$ is fixed and given.
Similar to~(\ref{eq:Lagrangian}), we define the Lagrangian as:
\begin{align}
L({\bf p},{\bf c},{\bf q},{\bf y},{\bf z}) &=
{\bf w}^T{\bf p}-\lambda\mathbf{1}^T{\bf c}-{\bf q}^T((I_{N\times N}-\Pi^T){\bf p}-{\bf e}-{\bf c})-\Vert{\bf p}-{\bf y}\Vert^2-\Vert{\bf c}-{\bf z}\Vert^2 \nonumber\\&=
\displaystyle
- \sum_{i=1}^{N}\left[p_i^2-(w_i-q_i+2y_i+\sum_{j=1}^{N}q_j\Pi_{ij})p_i \right]\nonumber
\\&\qquad - \sum_{i=1}^{N} \left[c_i^2-(q_i-\lambda+2z_i)c_i\right] + \sum_{i=1}^{N} \left[q_ie_i-y_i^2-z_i^2 \right]  \label{eq:Lagrangian2}
\end{align}
The objective function of dual problem is:
\begin{equation}
D(\mathbf{q},\mathbf{y},\mathbf{z})=\displaystyle
\max_{\mathbf{0}\le\mathbf{p}\le\mathbf{\bar{p}},\mathbf{c}\ge\mathbf{0}}
L(\mathbf{p},\mathbf{c},\mathbf{q},\mathbf{y},\mathbf{z})
\nonumber
\end{equation}
Then the dual problem is:
\begin{equation}
\min_{\bf{q} \ge \mathbf{0}}D(\mathbf{q},\mathbf{y},\mathbf{z}) \nonumber
\end{equation}
The Lagrange multipliers $\bf q$ are updated by~(\ref{eq:update-q-A}),
where $\bf p$ and $\bf c$ maximize Lagrangian~(\ref{eq:Lagrangian2}).

\begin{figure}
    \centering
    \includegraphics[width=0.45\textwidth]{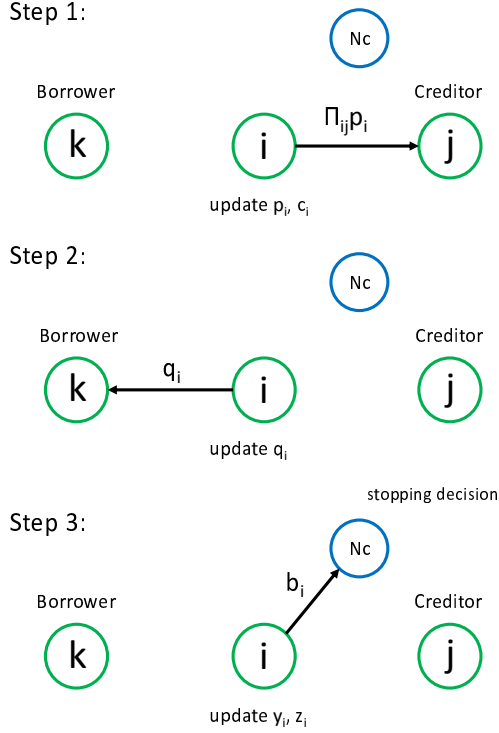}
    \caption{The $t$-th iteration of the distributed algorithm $\cal{A}'$ that includes optimizing the total amount
    of injected cash.} \label{fig:algo-ubdd}
\end{figure}

\subsubsection{Implementation of Algorithm $\cal{A}'$} \label{subsubsec:L4Prob1A}
Our algorithm for this problem is a simple modification of Algorithm $\cal{A}$.
We call it Algorithm $\cal{A}'$.  Its $t$-th iteration is as follows.
\begin{enumerate}
\item
Each node $i$ fixes $y_i=y_i(t)$, $z_i=z_i(t)$, and ${\bf q}={\bf q}(t)$, and calculates $p_i$
and $c_i$:
\begin{equation}
\nonumber
 p_i(t) = \left\{
\begin{array}{cl}
0 & \text{if } \tilde{p}_i(t) < 0\\
\bar{p}_i & \text{if } \tilde{p}_i(t)>\bar{p}_i\\
\tilde{p}_i(t) & \text{o.w. }
\end{array} \right.
\end{equation}
where
$\tilde{p}_i(t)=y_i(t)+\frac{1}{2}(w_i-q_i(t)+\displaystyle \sum_{j \in \mathcal{C}_i} q_j(t)\Pi_{ij})$, and
\begin{equation}
\nonumber c_i(t)=\left[z_i(t)+\frac{1}{2}(q_i(t)-\lambda) \right]^+.
\end{equation}
Then each node $i$ sends $\Pi_{ij}p_i(t)$ to every node $j\in\mathcal{C}_i$.

\item
Each node $i$
receives $\Pi_{ki}p_k(t)$ from every $k\in\mathcal{B}_i$ and updates $q_i$:
\begin{equation}\nonumber
q_i(t+1)=\left[q_i(t)+\beta(p_i(t)-e_i-c_i(t)-\displaystyle \sum_{k\in \mathcal{B}_i}\Pi_{ki}p_k(t)) \right]^+.\nonumber
\end{equation}
Then each node $i$ sends the updated $q_i(t+1)$ to every $k\in\mathcal{B}_i$.

\item
Each node $i$
receives $q_j(t+1)$ from every $j\in\mathcal{C}_i$ and updates
$y_i$ and $z_i$:
\begin{equation}
\nonumber
 y_i(t+1) = \left\{
\begin{array}{cl}
0 & \text{if } \tilde{y}_i(t+1) < 0\\
\bar{p}_i & \text{if } \tilde{y}_i(t+1)>\bar{p}_i\\
\tilde{y}_i(t+1) & \text{o.w. }
\end{array} \right.
\end{equation}
where $\tilde{y}_i(t+1)=y_i(t)+\frac{1}{2}(w_i-q_i(t+1)+\displaystyle \sum_{j \in \mathcal{C}_i} q_j(t+1)\Pi_{ij})$, and
\begin{equation}
\nonumber z_i(t+1)=\left[\tilde{z}_i(t+1)\right]^+,
\end{equation}
where $\tilde{z}_i(t+1)=z_i(t)+\frac{1}{2}(q_i(t+1)-\lambda)$.

Each node $i$ checks the conditions $|\tilde{y}_i(t+1)-\tilde{y}_i(t)|<\delta_1$ and
$|\tilde{z}_i(t+1)-\tilde{z}_i(t)|<\delta_2$.  If both conditions hold, it sets $b_i=1$;
otherwise, it sets $b_i=0$. It then sends $b_i$ to the central node $N_c$.
If $b_i=1$ for all $i$ then $N_c$ asks all nodes to terminate the algorithm.
\end{enumerate}
These steps are illustrated in Fig.~\ref{fig:algo-ubdd}.

\subsection{Numerical Results}
\subsubsection{Example 1: A Four-Node Network}
In this section, we illustrate the convergence of our distributed algorithm to the optimal solution.
We use a four-node network shown in Fig.~\ref{fig:initialtopo}. Node $A$ owes \$$50$ to $B$ and $C$,
node $B$ owes \$$20$ to $C$, node $C$ owes \$$80$ to $A$, and node $D$ owes \$$10$ to $C$.
Each node has \$$1$ on hand. After all the clearing payments, the borrower-lender network reduces
to Fig.~\ref{fig:reducetopo1}. Without any external financial
support, nodes $A$, $C$, and $D$ are in default, and the total amount of unpaid liability is \$$98$.
Assume that $w_i=0.45$ for $i=1,2,\cdots,N$ in LP~(\ref{eq:LP}), i.e., that each dollar of unpaid
liability contributes $0.45$ to the cost. Without any external cash injection, the value of
the cost function is $98\times 0.45 = 44.1$.

\begin{figure}[t]
\centering
\subfigure[Initial network.] {
\label{fig:initialtopo}
\includegraphics[width=0.3\columnwidth]{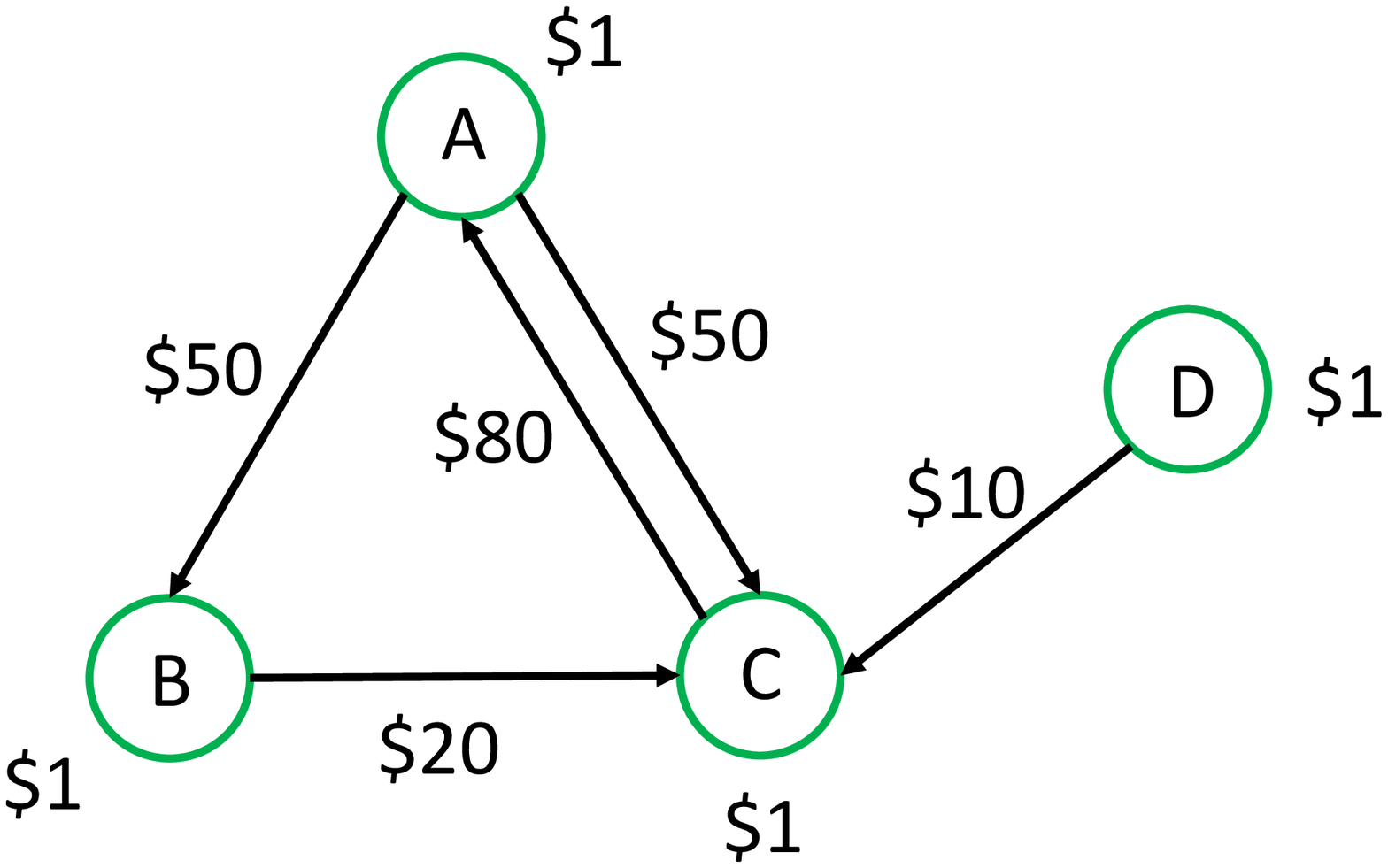}
}
\subfigure[Clearing without any bailout.] {
\label{fig:reducetopo1}
\includegraphics[width=0.3\columnwidth]{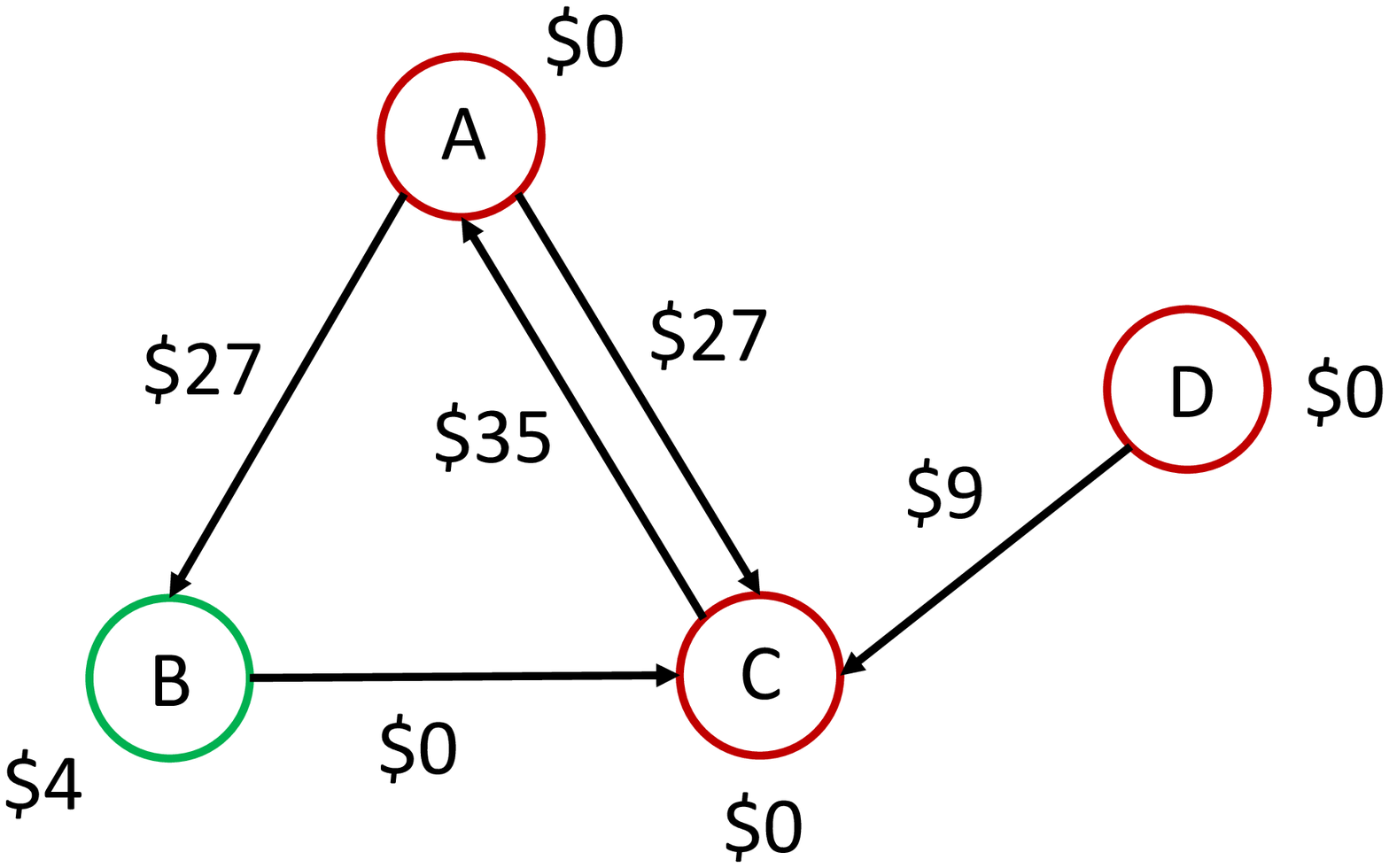}
}
\caption{A four-node network.}
\label{fig:topo}

\centering
\includegraphics[width=0.3\textwidth]{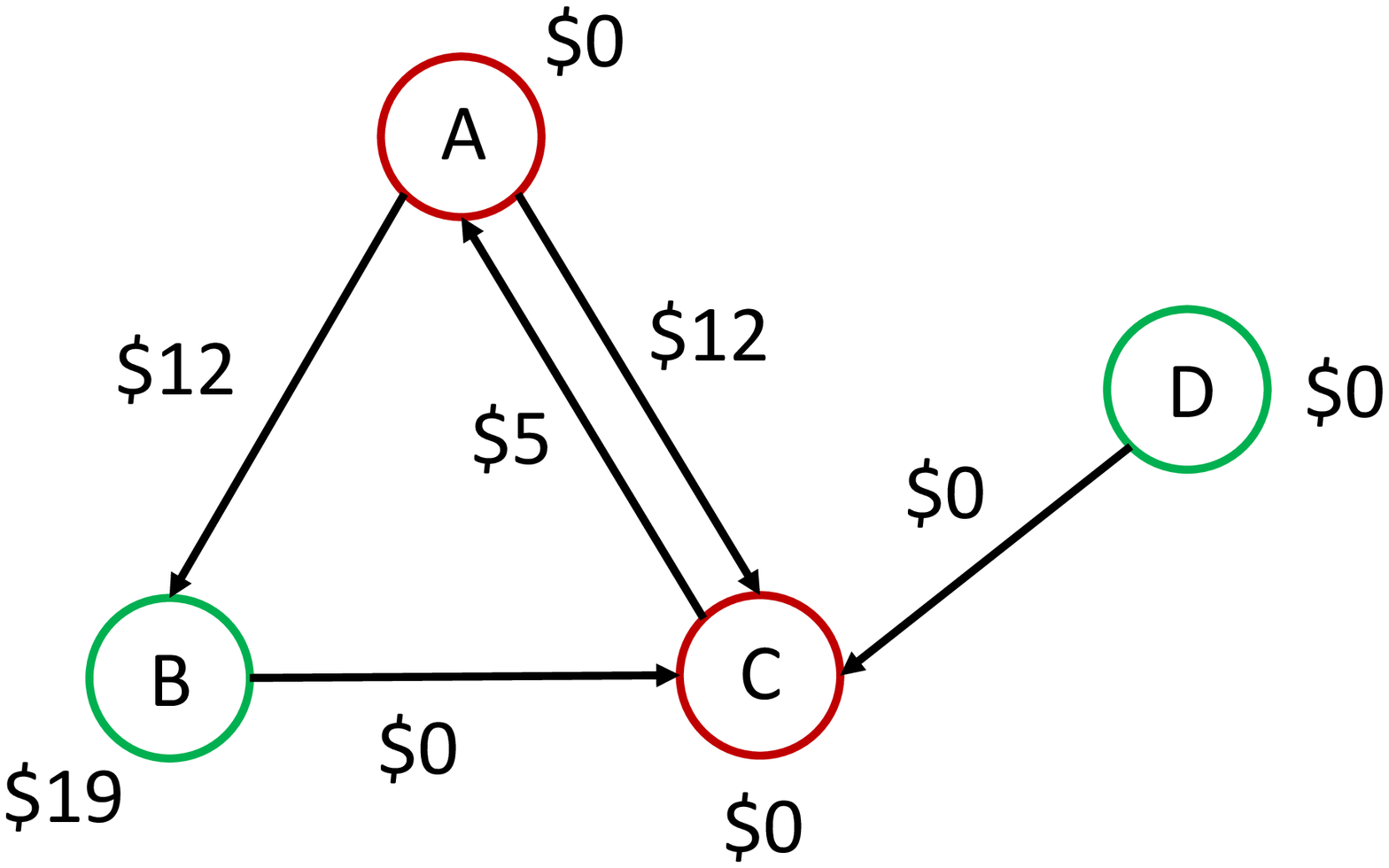}
\caption{Clearing in the network of Fig.~\ref{fig:topo} for the optimal allocation of a \$15 cash injection.}
\label{fig:reducetopobdd}

\centering
\subfigure[Payment vector.] {
\label{fig:paymentvector-bdd}
\includegraphics[width=0.4\columnwidth]{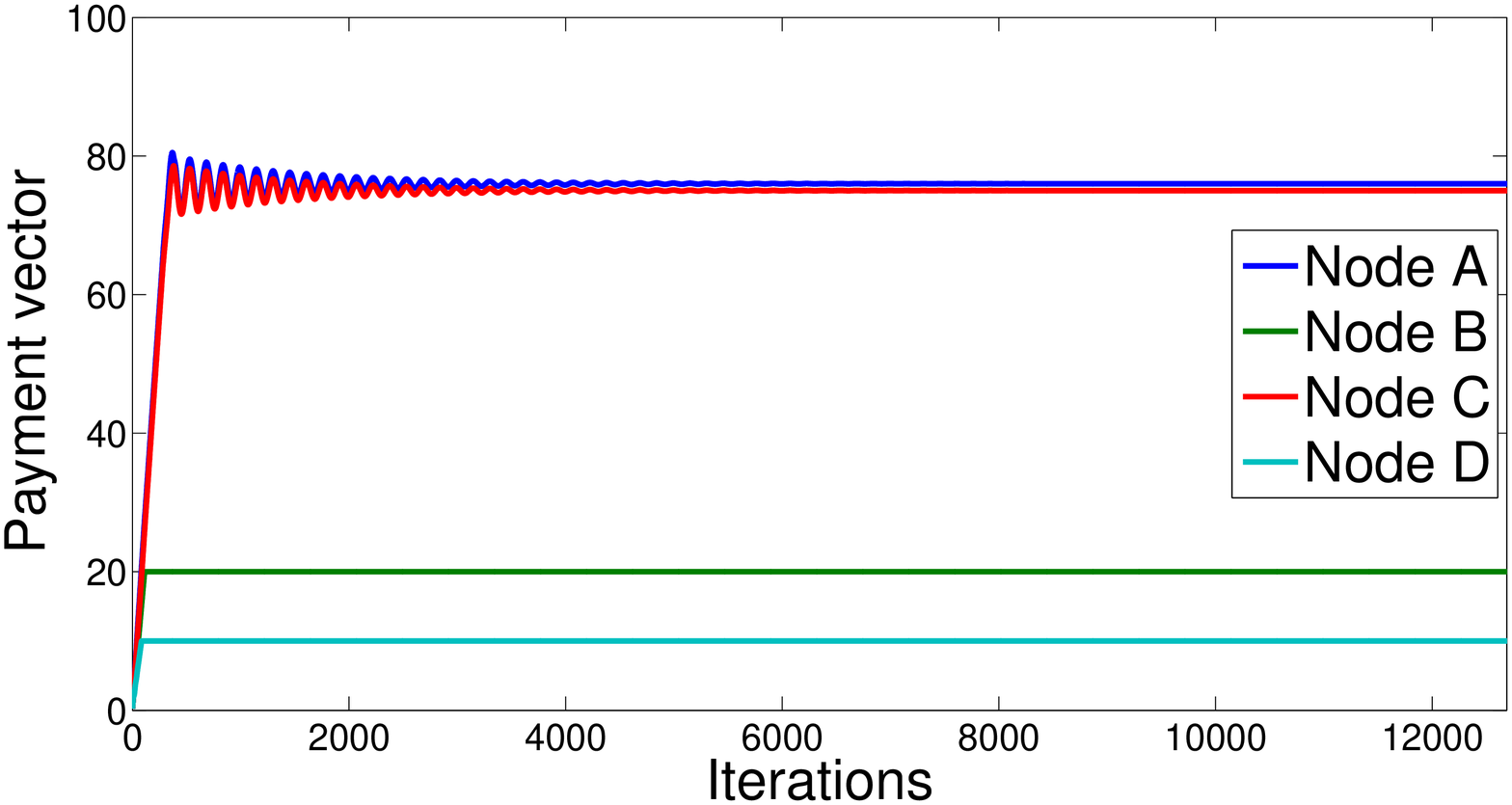}
}
\subfigure[Cash injection vector.] {
\label{fig:cashvector-bdd}
\includegraphics[width=0.4\columnwidth]{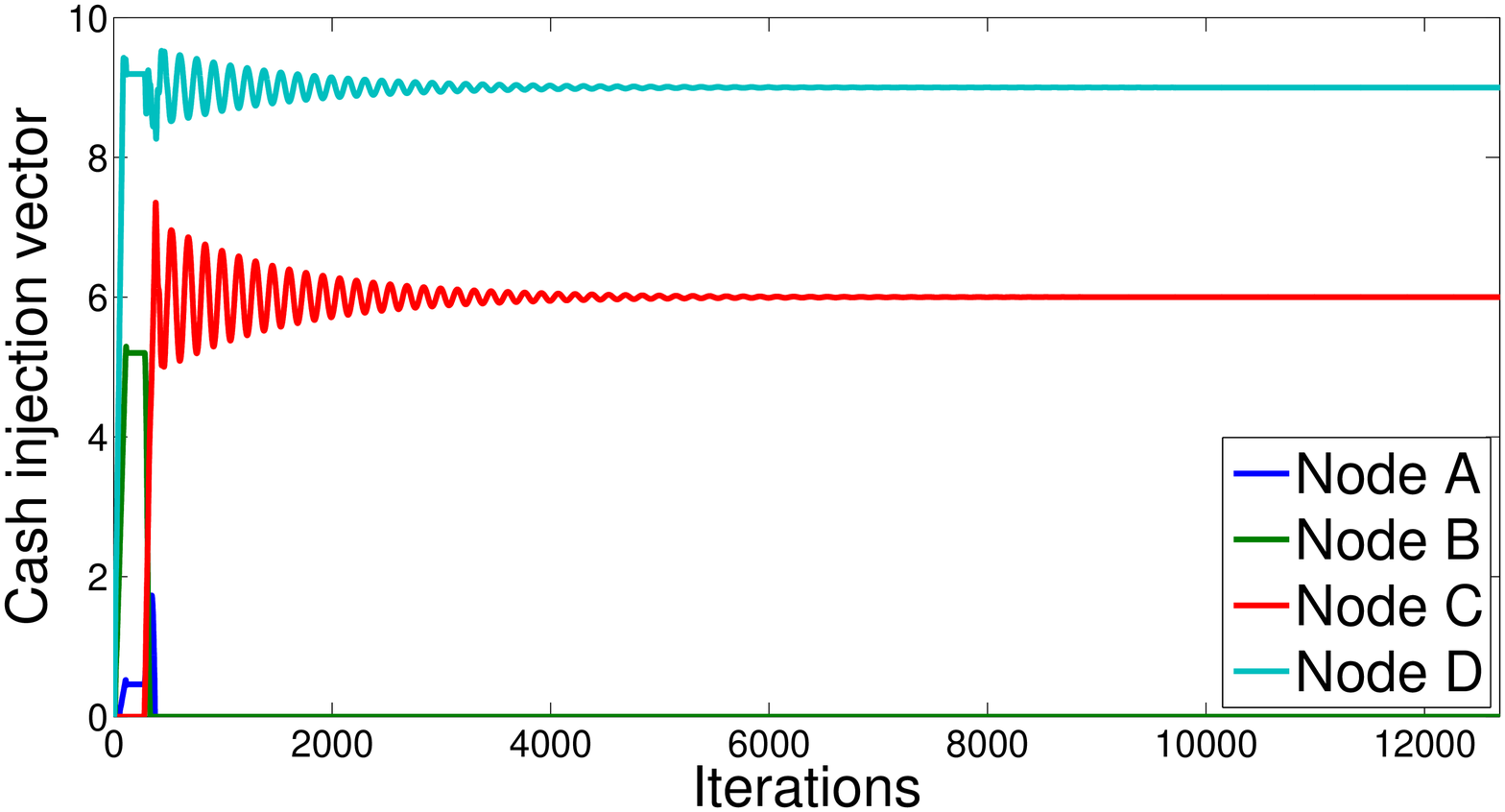}
}
\caption{ Evolution of the node payments and cash injections
through the iterations of the distributed algorithm for finding the optimal allocation of a
\$15 cash injection into the network of Fig.~\ref{fig:topo}.}
\label{fig:dis-result-bdd}
\end{figure}

We first study Problem~I, the case with a fixed maximum total amount of injected cash.
We assume that we can inject at most $\$15$ into the system. We run our algorithm
with initial ${\bf y}(0)={\bf z}(0)={\bf q}(0)=\mathbf{0}$ and $\lambda=0$.
The step size is $\alpha=\beta=0.1$, and the stopping tolerance is $\delta_1=\delta_2=10^{-6}$.
Figs.~\ref{fig:paymentvector-bdd} and~\ref{fig:cashvector-bdd}
illustrate the evolution of payment vector ${\bf p}$ and cash injection vector
${\bf c}$, respectively, as a function of the number of iterations of
the proposed distributed algorithm.
The payment vector converges to $[p_A,p_B,p_C, p_D]=[76,20,75,10]$;
the cash injection vector converges to $[c_A,c_B,c_C,c_D]=[0,0,6.0,9.0]$.
These are optimal, as verified by solving the LP~(\ref{eq:LP}-\ref{eq:LP-constraint2}) directly. With
external cash injection, the borrower-lender network reduces to Fig.~\ref{fig:reducetopobdd}
after all the payments. Now the total unpaid liability is \$$29$.
Thus the value of the cost due to unpaid liability after the optimal bailout
is $29\times 0.45=13.05$.

\begin{figure}[t]
    \centering
    \includegraphics[width=0.3\textwidth]{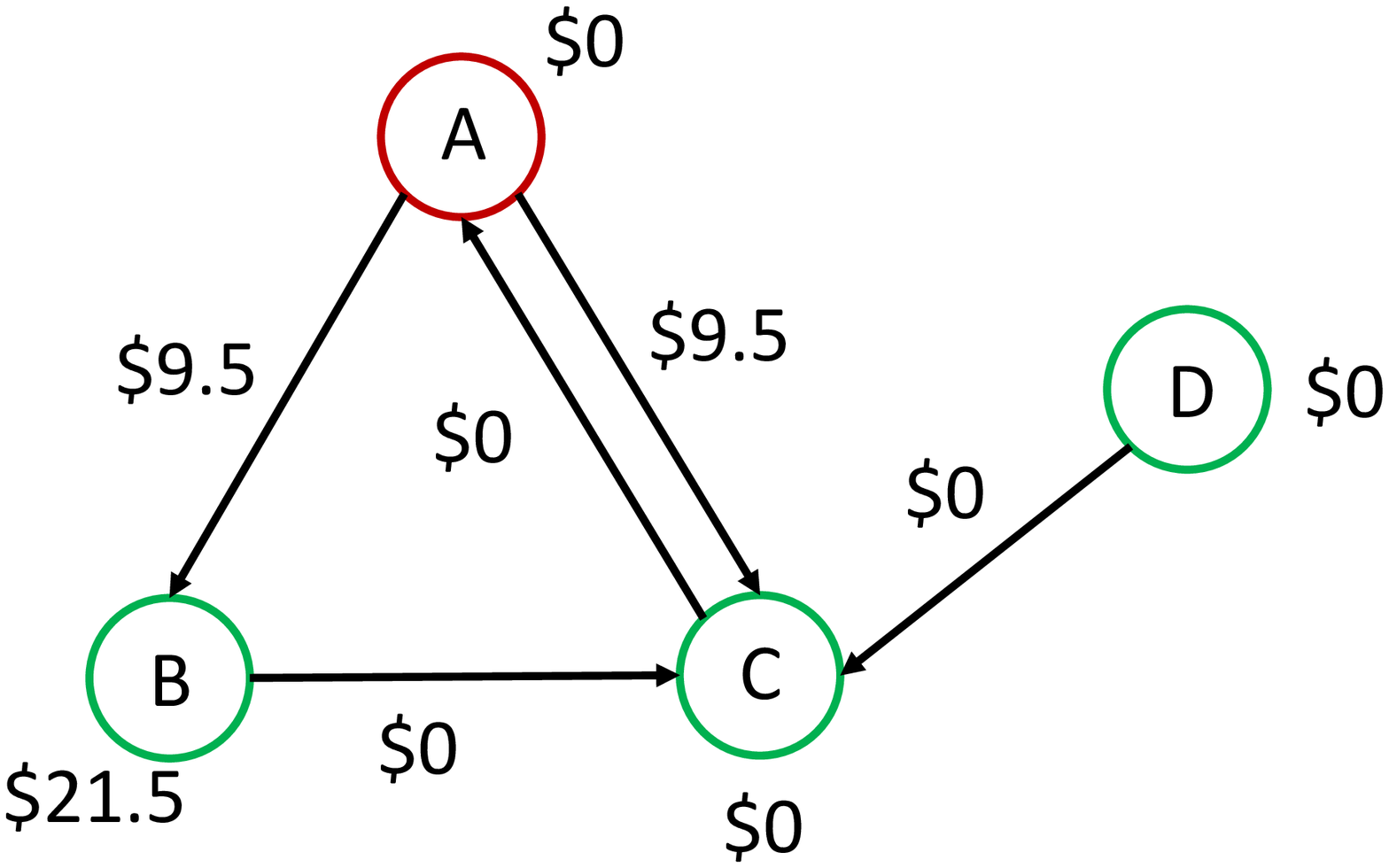}
    \caption{Clearing with the optimal bailout amount and allocation.}
\label{fig:reducetopoubdd}

\centering
\subfigure[Payment vector.] {
\label{fig:paymentvector}
\includegraphics[width=0.4\columnwidth]{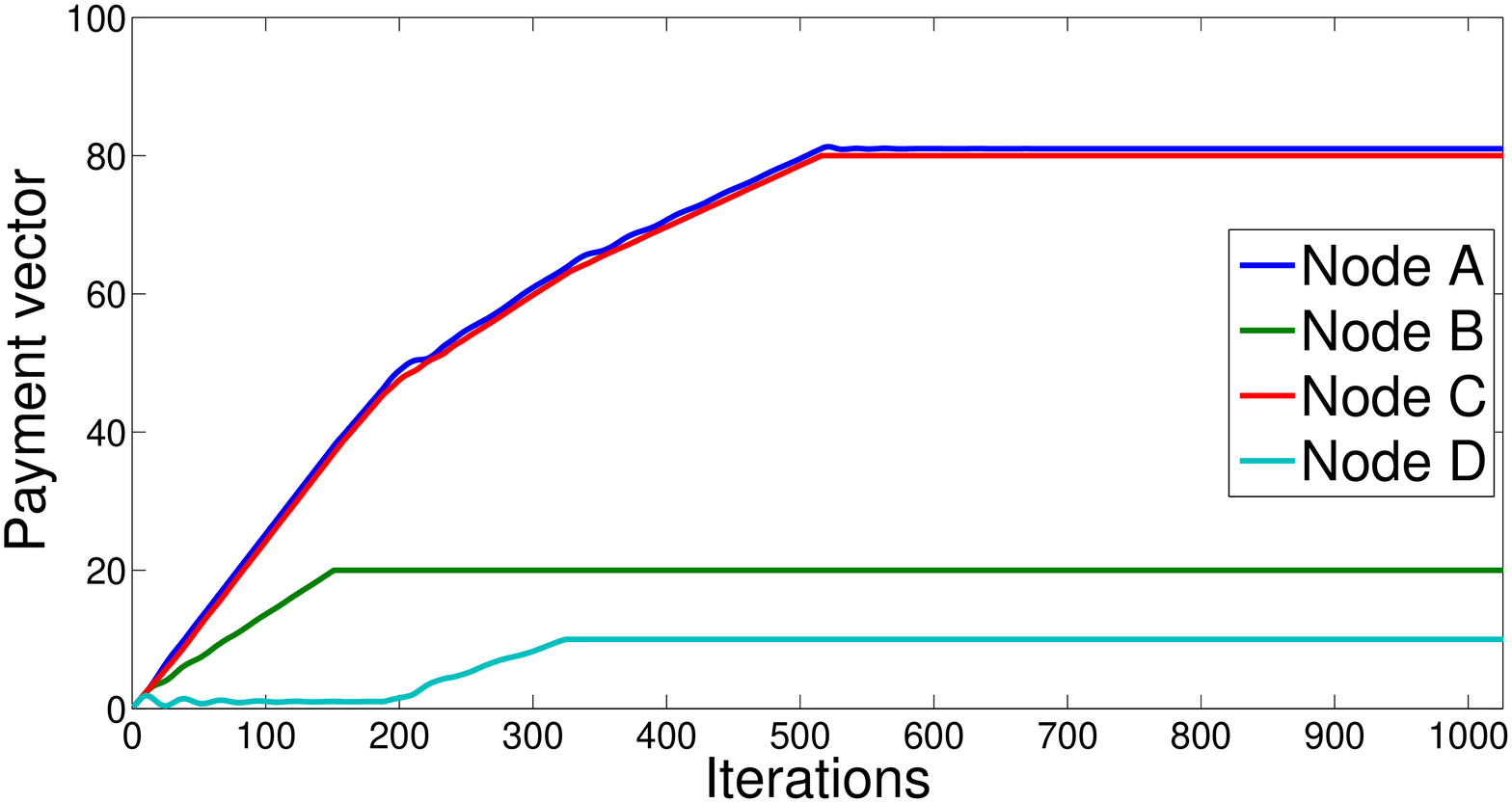}
}
\subfigure[Cash injection vector.] {
\label{fig:cashvector}
\includegraphics[width=0.4\columnwidth]{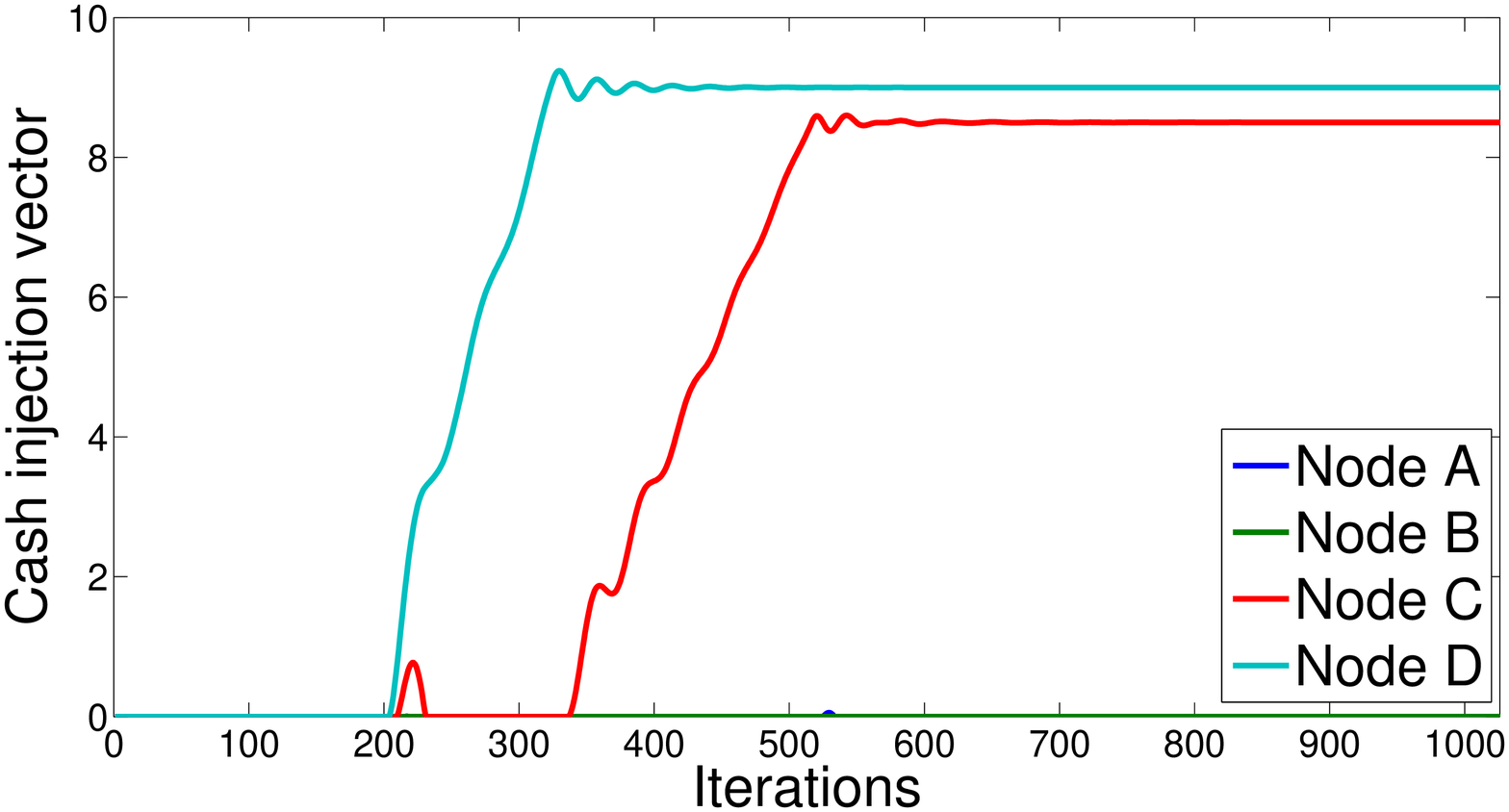}
}
\caption{ Evolution of the node payments and cash injections
through the iterations of the distributed algorithm that optimizes both the amount and the allocation
of the injected cash.}
\label{fig:dis-result}
\end{figure}

Second, we test our algorithm on the Lagrange formulation of Problem~I. In this example,
the initial settings are the same as in the previous example. In addition, we fix the Lagrange
multiplier $\lambda=1$. As shown in Fig.~\ref{fig:paymentvector} and Fig.~\ref{fig:cashvector},
the payment vector converges to $[p_A,p_B,p_C, p_D]=[81,20,80,10]$, and the cash injection vector
converges to $[c_A,c_B,c_C,c_D]=[0,0,8.5,9.0]$. These are optimal, as verified by solving the
LP~(\ref{eq:LP2}) directly. With external cash injection, the borrower-lender network reduces
to Fig.~\ref{fig:reducetopoubdd} after all the payments. Now the total unpaid liability
is \$$19$, and the cash injection amount is \$$17.5$. Thus the value of the cost function
after the optimal bailout is $17.5+19\times 0.45 =26.05$. By injecting \$17.5, we reduce
the total unpaid liability by \$35.55, and we reduce the total cost by $\$35.55-\$17.5=\$18.05$.

We see from Fig.~\ref{fig:reducetopoubdd} that although $A$ is still in
default, in the optimal bailout strategy we choose not to inject any
cash in $A$. The reason is that if we inject some cash \$$x$ into $A$ in
Fig.~\ref{fig:reducetopoubdd}, the total unpaid liability will decrease by
\$$x$ so that the unpaid liability term of the cost function will be reduced
by $0.45x$, i.e., the value of the overall cost function will actually increase by
$x-0.45x=0.55x$.

\subsubsection{Example 2: A Core-Periphery Network}
In this section, we examine the practicality of our distributed algorithm. As in
Section~\ref{sec:linear_bankruptcy_costs}, we assume that the US interbank network is well modeled as 
a core-periphery network that consists of a core of 15 highly interconnected banks 
to which most other banks connect~\cite{SoBeArGlBe07}.
We test the distributed algorithm for LP~(\ref{eq:LP2}) on a simulated core-periphery network
illustrated in Fig.~\ref{fig:cp-sim}. The core network consists of 15 fully connected core nodes.
Each core node has 70 corresponding periphery nodes which owe money only to this core node.
For each pair of two core nodes $i$ and $j$, we set $L_{ij}$ as a random number uniformly distributed
in $[0,10]$. For a core node $i$ and its periphery node $k$, $L_{ki}$ is set to be uniformly distributed
in $[0,1]$. All these obligation amounts are statistically independent.
The asset vector is ${\bf e}=\mathbf{0}$. In addition, we assume $w_i=0.3$ for $i=1,2,\cdots,N$,
and $\lambda=1$ in LP~(\ref{eq:LP2}). We generate 100 independent samples of a core-periphery network
drawn from this distribution.  These samples thus all have the same topology but different amounts
of liabilities. We run the distributed algorithm of Section~\ref{subsubsec:L4Prob1A} with
initial conditions ${\bf y}(0)={\bf z}(0)={\bf q}(0)=\mathbf{0}$. The step size is $\beta=0.01$. 

The stopping criterion for the distributed algorithm is
$\max\{\Vert\tilde{\bf y}(t+1)-\tilde{\bf y}(t)\Vert_\infty,\Vert\tilde{\bf z}(t+1)-\tilde{\bf z}(t)\Vert_\infty\} < 10^{-7}$.
Let $T_d$ be the value of the total cost function $W+\lambda C$ calculated by our distributed algorithm,
and let $T_l$ be the corresponding value obtained by solving the linear program directly, in a centralized
fashion. Under this stopping criterion, the relative error, defined as $|T_d-T_l|/T_l$, is less than
$10^{-6}$ for each sample in our simulations.


\begin{figure}
    \centering
    \includegraphics[width=0.7\textwidth]{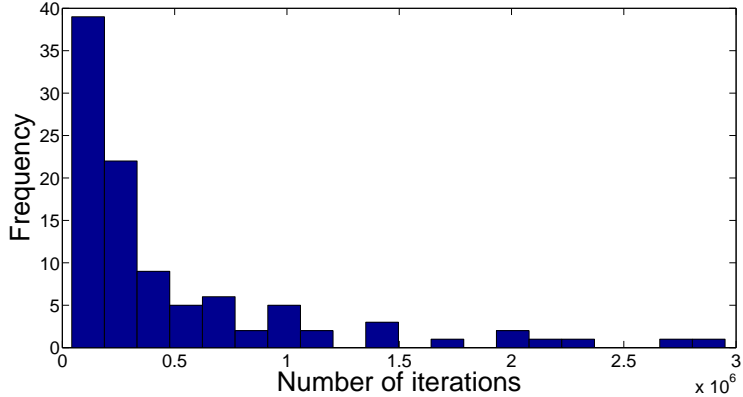}
    \caption{Number of iterations for the core-periphery network with $\delta_1=\delta_2=10^{-7}$.} \label{fig:iter}
\end{figure}

The number of iterations is shown in Fig.~\ref{fig:iter}. The average number of iterations is
$4.98\times 10^{5}$. Moreover, from Fig.~\ref{fig:iter}, we can see that for most cases,
the algorithm terminates within $10^6$ iterations.

The time spent on each iteration consists of two parts: the computing time and the time it takes to 
convey messages between the nodes.  During each iteration, a node needs to transmit information
to a set of neighbors twice: in Steps 1 and 2. Note that in Step 3, the stopping sign $b_i$
is transmitted to the central node. However, it is not necessary for a node to wait for the response
before next iteration.  Therefore, we do not count it towards the communication delay during
one iteration. It takes light 13.2ms to travel from LA to NYC, which is the longest possible
distance between two financial institutions within the continental US. So the propagation delay
in one iteration could be roughly estimated as $13.2ms\times 2=26.4ms$. Hence, for most cases,
the algorithm would terminate within $26.4ms\times 10^6=7.3h$, and the average running time 
would be below $26.4ms\times 4.98\times 10^{5}=3.65h$. These running times would be acceptable
in applications where these computations are run overnight or during a weekend.
Note that the computation time at each node is negligible compared to these communication times,
and therefore we ignore it in these estimates.

Another possible set-up is that each institution provides a client-end computer and we colocate
these computers in one room. Assuming that the longest network cable in this room is 100 meters,
the propagation delay per iteration would be around $2\times 100/(3\times 10^8)=6.67\times 10^{-7}s$.
For the computing time, we just analyze the core nodes because the periphery nodes have no borrowers
and only one creditor so that the computing time for the periphery nodes is much smaller than for
the core nodes. Usually, multiplications dominate the computing time. At each iteration, a core node
calculates $q_j(t)\Pi_{ij}$, $\Pi_{ij}p_i(t)$, and $q_j(t+1)\Pi_{ij}$ for all its creditors $j$.
Since the core network is a fully connected network with 15 core nodes, a core node has 14 creditors
so that it does less than 50 multiplications per iteration. Assuming that each multiplication takes
500 cpu cycles and the cpu on the client-end computer is 3GHz, then the computing time per iteration
is around $50\times 500 / (3\times 10^9)=8.33\times 10^{-6}s$. Thus, for most cases, the algorithm terminates
within $(8.33\times 10^{-6}+6.67\times 10^{-7})\times 10^6 \approx 10s$. By colocating the
client-end computers of all the financial institutions in the system, we can significantly reduce
the running time of our distributed algorithm so that it can be easily run many times during a day.

In a monitoring application, our aim might be to calculate the payments approximately rather than
exactly. In this case, the running time can be reduced by relaxing the termination tolerance.
We set the stopping criterion as $\max\{\Vert\tilde{\bf y}(t+1)-\tilde{\bf y}(t)\Vert_\infty,\Vert\tilde{\bf z}(t+1)-\tilde{\bf z}(t)\Vert_\infty\} < 10^{-3}$. Under this stopping criterion, the relative error,
$|T_d-T_l|/T_l$, is around $1\%$ for each sample in our simulations.
Fig.~\ref{fig:iter-loose} illustrates the number of iterations. The average number is $4260$.
The number of iterations is less than 10000 for most cases. By similar analysis, the average
running time for the non-colocated scenario is $26.4ms\times 4260\approx 2 min$. For most cases,
the algorithm will be terminated within $26.4ms\times 10^4\approx 4.4 min$. If we colocate
the client-end computers of all the financial institutions, the algorithm will terminate
within $(8.33\times 10^{-6}+6.67\times 10^{-7})\times 10^4 \approx 0.1s$.

\begin{figure}
    \centering
    \includegraphics[width=0.7\textwidth]{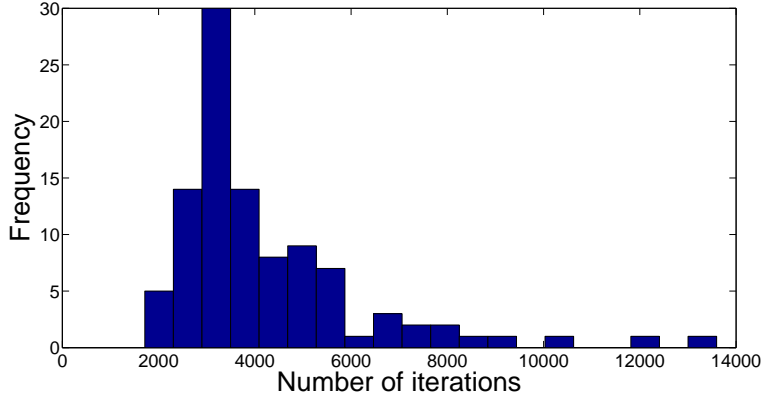}
    \caption{Number of iterations for the core-periphery network with $\delta_1=\delta_2=10^{-3}$.} \label{fig:iter-loose}
\end{figure}

The above running time analysis is for the Lagrangian formulation of Problem~I, in which $\lambda$ is
a constant. In Problem~I, $\lambda$ is a dual variable that also needs to converge. So the running time
of the distributed algorithm for Problem~I will be larger than the time for its Lagrangian formulation.
From Fig.~\ref{fig:dis-result-bdd} and Fig.~\ref{fig:dis-result}, we observe that with the same stopping
tolerance, the number of iterations of the distributed algorithm for Problem~I is around 10 times
the number of iterations for its Lagrangian formulation. Therefore, for Problem~I, to calculate
the exact payment vector, the algorithm will terminate within around $70h$ for the non-colocated scenario
and within $100s$ for the colocated scenario.  To obtain the payments within 1\% error,
the algorithm will terminate within around $44min$ and $1s$ for the non-colocated and colocated scenarios.

\section{Conclusions} \label{sec:conclusion}
In this work, we have developed a linear program to obtain the optimal cash injection policy,
minimizing the weighted sum of unpaid liabilities, in a one-period, non-dynamic financial system.
We have further proposed a reweighted $\ell_1$ minimization algorithm based on this linear program
and a greedy algorithm to find the cash injection allocation strategy which minimizes the number of defaults in the system.  By constructing three topologies in which the optimal solution can be
calculated directly, we have tested both two algorithms and shown through simulation that the results
of the reweighted $\ell_1$ minimization algorithm are close to optimal, 
and the performance of the greedy algorithm highly depends on the network topology.  
We also compare these two algorithms using three types of random networks for which the optimal
solution is not available.
Moreover, we have proposed a duality-based distributed algorithm to solve
the linear program. The distributed algorithm is iterative and is based on message passing between each node
and its neighbors. No centralized gathering of large amounts of data is required, and each
participating institution avoids revealing its proprietary book information to other institutions.
The convergence and the practicality of the distributed algorithm are both supported by our simulations.
We have also considered the situation where the capital of institutions at maturity is a random vector
with known distribution. We have developed a stochastic linear program to find the optimal cash injection
policy to minimize the expectation of the weighted sum of unpaid liabilities. To solve it, we have proposed
two algorithms based on Monte Carlo sampling: Benders decomposition algorithm and projected stochastic
gradient descent.
In addition, we show that the introduction of the all-or-nothing payment mechanism
turns the optimal cash injection allocation
problem into an NP-hard mixed-integer linear program. However, we show through simulations that use
optimization package CVX~\cite{cvx,GrBo08} that this problem can be accurately solved in a few seconds 
for a network size comparable to the size of the US banking network.

\newpage
\begin{appendices}
\section{Comparison of Three Algorithms for Computing the Clearing Payment Vector} \label{app:3algo}
\subsection{Proportional Payment Mechanism}
In~\cite{EiNo01}, zero bankruptcy costs are assumed, and
three methods of finding the clearing payment vector are proposed:
a fixed-point algorithm, the fictitious default algorithm and an optimization method.
In this section, we first introduce and analyze these three methods and then compare
their computation times under different network topologies.

\subsubsection{Fixed-Point Algorithm}
By definition, the clearing payment vector is a fixed point of the following map:
\begin{equation}
\Phi({\bf p})=\min\{(\Pi^T {\bf p}+{\bf e}), \bar{\bf p}\}. \nonumber
\end{equation}
where the minimum of the two vectors is component-wise.  Under certain mild assumptions
specified in~\cite{EiNo01}, the fixed point is unique.  It can be found
iteratively via the following algorithm~\cite{EiNo01}.\\
\emph{Fixed-point algorithm:}
\begin{enumerate}
\item
Initialization: set ${\bf p}^0 \leftarrow\bar{\bf p}$, $k\leftarrow 0$, and set the stopping
tolerance $\delta_0$ to a small positive number based on the accuracy requirement.
\item
${\bf p}^{k+1}\leftarrow\Phi({\bf p}^k)$.
\item
If $\Vert{\bf p}^{k+1}-{\bf p}^k\Vert_\infty < \delta_0$, stop and output the clearing
payment vector ${\bf p}^{k+1}$; else, set $k\leftarrow k+1$ and go to Step 2.
\end{enumerate}
At each iteration, the computational complexity is dominated by $\Pi^T {\bf p}$, which is $\Theta(N^2)$.
The number of iterations is highly dependent on the network topology and the amounts of liabilities.

\subsubsection{Fictitious Default Algorithm}
\label{sec:fictitious_default}
The fictitious default algorithm is proposed in Section 3.1 in~\cite{EiNo01}. The basic idea
is to first assume that all the nodes pay their liabilities in full.  If, under this assumption,
every node has enough funds to pay in full, then the algorithm terminates.  If some nodes do not
have enough funds to pay in full, it means that these nodes would default even if all the other
nodes pay in full.  Such defaults that are identified during the first iteration of the algorithm
are called {\em first-order defaults}. In the second iteration, we assume that only the first-order
defaults occur. Every non-defaulting node $k$ pays in full, i.e., $p_k=\bar{p}_k$; every defaulting
node $i$ pays all its available funds, i.e., $p_i=\displaystyle \sum_{j=1}^N \Pi_{ji} p_j + e_i$. 
If there is
no new defaulting nodes during this second iteration, then the algorithm is terminated.  Otherwise,
the new defaulting nodes are called {\em second-order defaults}, and we proceed to the third iteration.
In the third iteration we assume that both the first-order and second-order defaults occur.  We
calculate the new payment vector and again check the set of defaulting nodes.  We keep iterating
until no new defaults occur. Since there are $N$ nodes in the system, this algorithm is guaranteed
to terminate within $N$ iterations. The specifics of the fictitious default algorithm are as follows.\\
\emph{Fictitious default algorithm:}
\begin{enumerate}
\item Initialization: ${\bf p}^1 \leftarrow \bar{\bf p}$, $k \leftarrow 1$, and
$D^{(0)} \leftarrow \varnothing$.
\item For all nodes $i$, compute the difference between their incoming payments and their
obligations:
\[
v_i^{(k)} \leftarrow \sum_{j=1}^N \Pi_{ji} p_j^{(k)} + e_i - \bar{p}_i
\]
\item Define $D^{(k)}$ as the set of defaulting nodes:
\[
D^{(k)} = \left\{i:\,\,v_i^{(k)}<0\right\}.
\]
\item If $D^{(k)} = D^{(k-1)}$, terminate.
\item Otherwise, set $p_i^{(k+1)} \leftarrow \bar{p}_i$ for all $i\not\in D^{(k)}$.
For all $i\in D^{(k)}$, compute the payments $p_i^{(k+1)}$ by solving the following system of equations:
\[
p_i^{(k+1)} = e_i + \sum_{j\in D^{(k)}} \Pi_{ji} p_j^{(k+1)} + \sum_{j\not\in D^{(k)}} \Pi_{ji}
\bar{p}_j, \mbox{ for all } i\in D^{(k)}
\]
\item Set $k\leftarrow k+1$ and go to Step 2.
\end{enumerate}

At each iteration of the fictitious default algorithm, the computational complexity is dominated
by solving the linear equations in Step 5. The number of unknowns in these equations and the number
of equations are both equal to the number of elements in
$D^{(k)}$. In the worst case, the number of defaulting nodes in the system is of the same
order as $N$. In this case the computational complexity per iteration is $O(N^3)$~\cite{Ye91}. Compared to
the fixed-point algorithm, the fictitious default algorithm has a larger computational complexity per
iteration, and, as shown below in Section~\ref{subsubsec:clearing_vector_time}, larger running times
on several network topologies.
However, the advantage of the fictitious default algorithm is that it is guaranteed to terminate
within $N$ iterations. Moreover, the fictitious default algorithm will produce the exact value
of clearing payment, unlike the fixed-point algorithm which produces an approximation.

\subsubsection{Linear Programming Method}
Define $f({\bf p})=\displaystyle \sum_{i=1}^{N} p_i = \mathbf{1}^T{\bf p}$, which is a strictly
increasing function of $\bf p$. By Lemma 4 in~\cite{EiNo01}, the clearing payment vector can be
obtained via the following linear program:
\begin{align}
& \displaystyle \max_{\bf p} \mathbf{1}^T{\bf p} \label{eq:LPmethod} \\
& \mbox{subject to } \nonumber\\
& \mathbf{0}\leq {\bf p} \leq \bar{\bf p}, \nonumber\\
& {\bf p} \leq \Pi^T{\bf p} + {\bf e}. \nonumber
\end{align}
The computational complexity of solving an LP is $O(N^3)$~\cite{Ye91}.

\begin{table}
\caption{Comparison of the running times for the computation of the clearing payment vector under
the proportional payment mechanism using the fixed-point algorithm, fictitious default algorithm,
and linear programming.}
\begin{center}
\begin{tabular}{|c|c|c|c|c|c|c|}
\hline
 & \multicolumn{2}{c|}{Fixed-point algorithm} & \multicolumn{2}{c|}{Fictitious default algorithm} & \multicolumn{2}{c|}{LP method}\\
\hline
 & ave (s) & stdev & ave (s) & stdev & ave (s) & stdev\\
\hline
fully connected & 0.9128 & 0.1045 & 10.7341 & 0.7182 & 53.1725 & 11.8947\\
\hline
core-periphery & 0.0869 & 0.0342 & 7.8213 & 1.2843 & 0.1964 & 0.0507\\
\hline
linear chain & 0.0462 & 0.0170 & 10.2574 & 1.0211 & 0.1610 & 0.0449\\
\hline
\end{tabular}
\vspace*{-0.2in}\\
\end{center}
\label{tb:3methods}
\end{table}

\subsubsection{Comparison of Running Times on Three Different Topologies}
\label{subsubsec:clearing_vector_time}
We calculate the clearing payment vector via the above three methods on three different network
topologies and compare the running times. 
The first network topology is a fully connected network with 1000 nodes. All the obligation amounts
$L_{ij}$ and asset amounts $e_i$ are independent random variables, uniformly distributed in $[0,1]$.
The second network topology is a core-periphery network shown in Fig.~\ref{fig:cp-sim}. It contains
15 fully connected core nodes.  Each core node has 70 periphery nodes. Each periphery node has
a single link pointing to the corresponding core node.
All the obligation amounts $L_{i,j}$ are independent uniform random variables.  For each pair of core
nodes $i$ and $j$ the obligation amount $L_{ij}$ is uniformly distributed in $[0,10]$. For a core node $i$
and its periphery node $k$, the obligation amount $L_{ki}$ is uniformly distributed in $[0,1]$.
The asset amounts $e_i$ are uniformly distributed in $[0,0.25]$.
The third network topology is a long linear chain network with 1000 nodes.
For $i=1,2,\ldots,N-1$, the obligation amount $L_{i(i+1)}$ is uniformly distributed in $[0,10]$,
and for other pairs of $i$ and $j$, $L_{ij}=0$. The asset amounts $e_i$ are uniformly distributed
in $[0,1]$.

For each type of network, we generate 100 samples. 
We run the Matlab code on a personal computer
with a 2.66GHz Intel Core2 Duo Processor P8800.
The average running times and the sample standard
deviations of the running times for the three methods
are shown in Table~\ref{tb:3methods}.  For all three types of networks,
the fixed-point algorithm is the most efficient one.  Note that the computation time
of linear program method
is highly variable because simpler topologies result in $\Pi$ being a sparse matrix, reducing
the running time.

\subsection{All-or-Nothing Payment Mechanism}
\subsubsection{Fixed-Point Algorithm and Fictitious Default Algorithm}
We now assume the all-or-nothing payment mechanism
where node $i$ pays $\bar{p}_i$ if it is solvent and pays nothing if it defaults. Therefore,
the clearing payment vector is a fixed point of the map $\Psi$ defined as follows:
\[
\Psi_i({\bf p}) = \left\{
\begin{array}{cl}
\bar{p}_i & \text{if } \displaystyle \sum_{j=1}^{N} \Pi_{ji} p_j + e_i \ge \bar{p}_i\\
0 & \text{otherwise}
\end{array} \right.
\]
We find the fixed point of $\Psi(\cdot)$ iteratively via the following algorithm.\\
\emph{Fixed-point algorithm:}
\begin{enumerate}
\item
Initialization: set ${\bf p}^0 \leftarrow\bar{\bf p}$, $k\leftarrow 0$.
\item
${\bf p}^{k+1}\leftarrow\Psi({\bf p}^k)$.
\item
If ${\bf p}^{k+1}={\bf p}^k$, stop and output the clearing
payment vector ${\bf p}^{k+1}$; else, set $k\leftarrow k+1$ and go to Step 2.
\end{enumerate}

In fact, under the all-or-nothing payment mechanism, this fixed-point algorithm can be interpreted
as the following fictitious default algorithm.
We initially assume that all the nodes pay their liabilities in full, i.e., ${\bf p}^0=\bar{\bf p}$. 
If, under this assumption, every node has enough funds to pay in full, then the algorithm terminates. 
If some nodes do not have enough funds to pay in full, it means that these nodes would default even if
all the other nodes pay in full. We define these nodes
as \emph{first-order defaults}. With function $\Psi(\cdot)$, we identify the first-order defaults
and set their payments to zero. In the second iteration, we assume that only the first-order defaults occur.
Every non-defaulting node $k$ pays in full, i.e., $p_k=\bar{p}_k$; every defaulting node $i$ pays 0,
i.e., $p_i=0$. Again, with function $\Psi(\cdot)$, we identify the new defaulting nodes, which are called
\emph{second-order defaults}, and set their payments to zero. If there are no such new defaulting nodes,
the algorithm
terminates; otherwise, we proceed to the third iteration. We keep iterating until no new defaults occur,
i.e., ${\bf p}^{k+1}={\bf p}^k$.

Since there are $N$ nodes in the system, this algorithm is guaranteed to terminate within $N$ iterations.
At each iteration, the computational complexity is dominated by $\Pi^T {\bf p}$, which is $\Theta(N^2)$.
Therefore, the computational complexity of the fixed-point algorithm (fictitious default algorithm)
is $O(N^3)$.

\subsubsection{Mixed-Integer Linear Programming Method}
The clearing payment vector can also be obtained by solving MILP~(\ref{eq:MILP2}) with the assumption
that no external cash would be injected, i.e., $C=0$. With $C=0$,
MILP~(\ref{eq:MILP2}) is simplified to the following MILP:
\begin{align}
& \displaystyle \max_{{\bf p}, {\bf c}, {\bf d}} {\bf w}^T{\bf p}\label{eq:MILP3}\\
& \mbox{subject to } \nonumber\\
& p_i = \bar{p}_i(1-d_i) \mbox{, for } i=1,2,\cdots,N, \nonumber\\
& \bar{p}_i-\sum_{j=1}^{N}\Pi_{ji}p_j-e_i \le \bar{p}_i d_i \mbox{, for } i=1,2,\cdots,N, \nonumber\\
& d_i \in \{0,1\} \mbox{, for } i=1,2,\cdots,N. \nonumber
\end{align}
We solve MILP~(\ref{eq:MILP3}) via CVX~\cite{cvx,GrBo08}.

\subsubsection{Comparison of Running Times on Three Different Topologies}
We calculate the clearing payment vector under the all-or-nothing payment mechanism via the above two methods
on three network topologies described in Section~{\ref{subsubsec:clearing_vector_time}}, and compare
the running times.

Similar to Section~\ref{subsubsec:clearing_vector_time}, we generate 100 samples and
run the Matlab code on a personal computer
with a 2.66GHz Intel Core2 Duo Processor P8800.
The average running times and the sample standard deviations of the running times
for the two methods are shown in Table~\ref{tb:2methods}. For all the three topologies,
the fixed-point algorithm is significantly more efficient than the MILP method.

\begin{table}
\caption{Comparison of the running times for the computation of the clearing payment vector under
the all-or-nothing payment mechanism using the fixed-point algorithm
and mixed-integer linear programming.}
\begin{center}
\begin{tabular}{|c|c|c|c|c|c|c|}
\hline
 & \multicolumn{2}{c|}{Fixed-point algorithm / Fictitious default algorithm} & \multicolumn{2}{c|}{MILP}\\
\hline
 & ave (s) & stdev & ave (s) & stdev\\
\hline
fully connected & 0.0092 & 0.0129 & 1.2204 & 0.0909\\
\hline
core-periphery & 0.0242 & 0.0175 & 0.5338 & 0.0255\\
\hline
linear chain & 0.0279 & 0.0149 & 0.4700 & 0.0276\\
\hline
\end{tabular}
\vspace*{-0.2in}\\
\end{center}
\label{tb:2methods}
\end{table}

\section{CVX code for MILP (\ref{eq:MILP2})} \label{app:cvx}

Below is the CVX code to solve MILP (\ref{eq:MILP2}). The parameters that appear in the code
are defined in Table~\ref{tb:cvxpara}.
\begin{lstlisting}[language=matlab, frame=single]
cvx_solver mosek
cvx_begin
    variable d(N) binary % default indicator
    variable c(N) % cash injection vector
    minimize( pbar' * diag(w) * d )
    subject to
        ones(1,N) * c <= C_total;
        c >= 0;
        (Pi'-I) * diag(pbar) * d <= Pi' * pbar + e + c - pbar;
cvx_end
\end{lstlisting}

\begin{table}
\caption{Parameters in CVX codes.}
\begin{center}
\begin{tabular}{|c|c|}
\hline
Parameters in CVX codes & Notation in this paper\\
\hline
pbar & $\bar{\bf p}$\\
\hline
Pi & $\Pi$\\
\hline
e & {\bf e}\\
\hline
c & {\bf c}\\
\hline
w & {\bf w}\\
\hline
d & {\bf d}\\
\hline
I & $I_{N\times N}$\\
\hline
C\_total & $C$\\
\hline
\end{tabular}
\vspace*{-0.2in}\\
\end{center}
\label{tb:cvxpara}
\end{table}

\section{A fast algorithm for quadratic program (\ref{eq:quadprog})} \label{app:quadprog}
\begin{align}
& \displaystyle \min_{\bf c} \Vert{\bf c}-\tilde{\bf c}^m\Vert_2^2 \label{eq:quadprog2} \\
& \mbox{subject to } \nonumber\\
& \mathbf{1}^T{\bf c} = C, \label{eq:quadprog2-constraint1}\\
& {\bf c} \geq \mathbf{0}.\label{eq:quadprog2-constraint2}
\end{align}

Let a scalar $\kappa$ and an $N \times 1$ vector $\boldsymbol \iota$ be Lagrange multipliers 
for constraint (\ref{eq:quadprog2-constraint1}) and (\ref{eq:quadprog2-constraint2}), respectively.
We define the Lagrangian as follows:
\begin{equation}
F({\bf c},\kappa,{\boldsymbol \iota}) = 
\Vert{\bf c}-\tilde{\bf c}^m\Vert_2^2 + \kappa (\mathbf{1}^T{\bf c} - C) - {\boldsymbol \iota}^T{\bf c}
\label{eq:Lagrangian-q}
\end{equation}

Then the primal and dual optimal ${\bf c}^\ast$ and $(\kappa,{\boldsymbol \iota})$ will satisfy
the following \emph{Karush-Kuhn-Tucker} (KKT) conditions:
\begin{align}
& \nabla_{\bf c} F({\bf c},\kappa,{\boldsymbol \iota}) = \mathbf{0}, \label{eq:KKT1} \\
& \mathbf{1}^T{\bf c} = C, \label{eq:KKT2} \\
& {\bf c} \geq \mathbf{0}, \label{eq:KKT3} \\
& {\boldsymbol \iota} \geq \mathbf{0}, \label{eq:KKT4} \\
& \iota_i c_i = 0 \mbox{, for } i=1,2,\cdots,N. \label{eq:KKT5}
\end{align}

From KKT constraint (\ref{eq:KKT1}), we have $c_i = \tilde{c}^m_i - \kappa/2 + \iota_i/2$, for 
$i = 1,2,\cdots,N$.
\begin{itemize}
\item If $\tilde{c}^m_i - \kappa/2 > 0$, then $c_i > 0$ since $\iota_i \geq 0$.
By KKT constraint (\ref{eq:KKT5}), $\iota_i = 0$. Thus, $c_i = \tilde{c}^m_i - \kappa/2$.
\item If $\tilde{c}^m_i - \kappa/2 < 0$, $\iota_i$ should be positive to guarantee that $c_i \geq 0$.
Since $\iota_i c_i = 0$ and $\iota_i > 0$, $c_i = 0$.
\item If $\tilde{c}^m_i - \kappa/2 = 0$,
then $c_i = \iota_i/2$. Since $\iota_i c_i = 0$, we have $c_i = \iota_i = 0$.
\end{itemize}

To sum up,
$c_i = \max\{\tilde{c}^m_i - \kappa/2, 0\}$, for $i = 1,2,\cdots,N$, where $\kappa$ is determined by
KKT constraint (\ref{eq:KKT2}).

\end{appendices}

\newpage

\bibliographystyle{plain}
{
\bibliography{ref}
}

\end{document}